\documentclass[twoside,leqno]{article}

\usepackage[letterpaper]{geometry}
\usepackage[utf8]{inputenc}

\usepackage[utf8]{inputenc}
\usepackage{amsmath}
\usepackage{amssymb}
\usepackage{cancel}
\usepackage{enumerate}
\usepackage{hyperref}
\usepackage{mathtools}
\usepackage{stmaryrd}
\usepackage{graphics}
\usepackage{ amssymb }

\usepackage{ltexpprt}

\usepackage{float}

\usepackage{todonotes}

\usepackage[linesnumbered,boxed,algo2e]{algorithm2e}

\lfoot{\thepage}

 \newtheorem{definition}{Definition}

 \newtheorem{claim}{Claim}

\newcommand{\s}{\sigma}
\newcommand{\T}{\mathcal{T}}
\renewcommand{\l}{\ell}
\newcommand{\hC}[1]{C_{#1} \cup Y_{#1} \cup \{z\}}

\newcommand{\C}{\mathcal{C}}

\renewcommand{\L}{\mathcal{L}}
\newcommand{\Y}{\mathcal{Y}}
\renewcommand{\S}{\mathcal{S}}
\newcommand{\N}{\mathcal{N}}

\begin{document}

\title{\Large Recognizing $k$-leaf powers in polynomial time, for constant $k$}
\author{Manuel Lafond\footnote{Université de Sherbrooke}}

\date{}

\maketitle

\vspace{10mm}

\begin{abstract}
    A graph $G$ is a $k$-leaf power if there exists a tree $T$ whose leaf set is $V(G)$, and such that $uv \in E(G)$
if and only if the distance between $u$ and $v$ in $T$ is at most $k$.  The graph classes of $k$-leaf powers have several applications in computational
biology, but recognizing them has remained a challenging algorithmic problem for the past two decades.  The best known result is that $6$-leaf powers can be recognized in polynomial time.
In this paper, we present an algorithm that decides whether a graph $G$ is a $k$-leaf power in time $O(n^{f(k)})$ for some function 
$f$ that depends only on $k$ (but has the growth rate of a power tower function).

Our techniques are based on the fact that either a $k$-leaf power has a corresponding tree of low maximum degree, in which case finding it is easy, 
or every corresponding tree has large maximum degree.  In the latter case, large degree vertices in the tree imply that $G$ has redundant substructures which can be pruned from the graph. In addition to solving a longstanding open problem, we hope that the structural results presented in this work can lead to further results on $k$-leaf powers.
\end{abstract}


\section{Introduction}

In computational biology, it is commonplace to use dissimilarity information between species to reconstruct a phylogenetic tree, in which the leaves are the species and the internal nodes represent common ancestors. 
Sequence distances can be used for this task, but are known to be unreliable~\cite{felsenstein2004inferring,philippe2011resolving}.  In 2002, Nishimura et al.~\cite{nishimura2002graph} proposed that each pair of species should simply be considered as either \emph{close} or \emph{far}.  There should then be a threshold $k$ such that in the phylogeny, close species are at distance at most $k$ and far species at distance more than $k$.  The $k$-leaf power problem arises when we model species as the vertices of a graph $G$ in which edges represent closeness.

More specifically, we say that a graph $G$ is a \emph{$k$-leaf power} if there exists a tree $T$ such that the set of leaves of $T$ is $V(G)$, and such that $uv \in E(G)$ if and only if  $dist_T(u, v) \leq k$, where $dist_T(u, v)$ is the distance between $u$ and $v$ in $T$ and $u, v$ are distinct.
The tree $T$ is called a \emph{$k$-leaf root} of $G$.
The $G$ graph is a \emph{leaf power} if it is a $k$-leaf power for some $k$.

Since their introduction, these graph classes have attracted the attention of both algorithm designers and graph theoreticians.  
Two important problems have remained open for the last two decades.  The first is to obtain a precise graph-theoretical characterization of $k$-leaf powers in terms of $k$.  A fundamental question asks whether, for all $k$,  $k$-leaf powers can be characterized as chordal graphs that forbid a finite set of induced subgraphs.  This is known to be true for $k = 2, 3, 4$, but unknown for higher $k$~\cite{brandstadt2006structure,brandstadt2008structure}. 

The second open problem is whether one can decide in polynomial time whether a graph $G$ is a $k$-leaf power, where $k$ could be fixed or given.  This has been a longstanding problem even in the case $k \in O(1)$.  Polynomial-time recognition is possible for $k \leq 6$~\cite{chang2006linear,ducoffe20194}, and the technical feats required to solve the case $k = 6$ show that extending these results is far from trivial.
In this work, we tackle the latter question and show that polynomial-time recognition is indeed possible for any constant $k$.

\subsection{Related work}

It is well-known that all $k$-leaf powers are chordal, and they are also strongly chordal (see e.g.~\cite{brandstadt2010rooted}). 
The $2$-leaf powers are collections of disjoint cliques and 
the $3$-leaf powers are exactly the chordal graphs that are bull, dart and gem-free~\cite{brandstadt2006structure,rautenbach2006some}.  This can be used to recognize them in linear time.
For $k = 4$, a characterization of twin-free $4$-leaf powers in terms of chordality and a small set of forbidden induced subgraphs is established, again leading to a linear time algorithm~\cite{brandstadt2008structure}.
For $k \geq 5$, a characterization still escapes us, except for distance-hereditary $5$-leaf powers~\cite{brandstadt2009forbidden}.  
It is known that all $k$-leaf powers are also $(k+2)$-leaf powers, but that they are not all $(k+1)$-leaf powers~\cite{brandstadt2008k}.

Chang and Ko~\cite{chang20073} have developed a linear time recognition algorithm for $5$-leaf powers, using a reduction from a similar problem known as the $3$-Steiner root, in which members of $V(G)$ can also be in internal nodes in the desired tree. 
Recently, Ducoffe showed that $6$-leaf powers were polynomial-time recognizable~\cite{ducoffe20194}, this time using a reduction from the $4$-Steiner root problem and an elaborate dynamic programming approach. The case $k = 6$ is the farthest that could be achieved so far.  
Let us mention that $k$-leaf powers have bounded clique-width~\cite{gurski2009nlc}, and that in~\cite{dom2004error,dom2005extending}, the problem of editing a graph to a $k$-leaf power is studied.

A recent result of Eppstein and Havvaei~\cite{eppstein2020parameterized} states that recognizing $k$-leaf powers is fixed parameter tractable (FPT) in $k + \delta(G)$, where $\delta(G)$ is the degeneracy of the graph.  Note that since a $k$-leaf power $G$ is chordal, it is known that $tw(G) = \omega(G) - 1 \leq \delta(G) \leq tw(G)$, where $tw(G)$ and $\omega(G)$ are the treewidth and clique number, respectively, showing that the problem is also FPT in $tw(G) + k$.
We also mention that in~\cite{chen2003computing}, Chen et al. show that if we require each node of the $k$-leaf root to have degree between $3$ and $d$ (except leaves), then finding such a $k$-leaf root, if any, is FPT in $k + d$.  

Recognizing the larger class of leaf powers is also a challenging open problem.  Subclasses of strongly chordal graphs have been shown to be leaf powers~\cite{kennedy2006strictly,brandstadt2008ptolemaic,nevries2016towards}, but not all strongly chordal graphs are leaf powers~\cite{brandstadt2010rooted,lafond2017strongly,jaffke2019mim}.
Other variants and generalizations of $k$-leaf powers have also been proposed~\cite{brandstadt2007k,calamoneri2016pairwise}.

\subsection{Our contributions}

In this work, we show that for any constant $k \geq 2$, one can decide whether a graph $G$ is a $k$-leaf power in time $O(n^{f(k)})$.
Here, the function $f$ depends only on $k$, and thus $k$-leaf powers can be recognized in polynomial time for any constant $k$.  We must reckon that $f(k)$ grows faster than a power tower function with base $k$ and height $3k$, i.e. $f(k) \in \Omega(k \uparrow \uparrow (3k))$, using Knuth's up arrow notation.  We did not attempt to optimize $f(k)$, and it is possible that the techniques presented here can be refined in the future to attain a more reasonable $f(k)$ exponent, or even to obtain an FPT algorithm in parameter $k$.

To the best of our knowledge, several tools developed for this result have not been applied before. 
The main idea is that if $G$ is a $k$-leaf power, then either $G$ admits a $k$-leaf root of low maximum degree, in which case it can be found ``easily", or all $k$-leaf roots have large maximum degree, in which case it contains redundant substructures.
By this, we mean that $G$ has a large number of vertex-disjoint subgraphs that are easy to solve individually and admit the ``same kind" of $k$-leaf roots.  We can then argue that we can simply remove one of those redundant subsets from $G$, obtain an equivalent instance, and repeat the process.  We do borrow ideas from~\cite{chen2003computing,eppstein2020parameterized}, since we handle the ``easy" instances mentioned above using dynamic programming on a tree decomposition.

Although the above ideas are used for algorithmic purposes, they may shed light on the graph theoretical characterization of $k$-leaf powers.  Indeed, our collection of similar subgraphs satisfy a number of graph properties of interest (see next subsection).  Combined with the knowledge gained from prior work, our side results may thus help understanding the structure of $k$-leaf powers.  It is also plausible that our techniques can be applied to solve open problems on the larger class of pairwise compatibility graphs~\cite{calamoneri2016pairwise} and their variants, where edges represent a distance in an interval $[d_1, d_2]$ and non-edges represent a distance not in that interval.

\subsection{Overview of our algorithm}

Our approach requires a bit of a setup in terms of definitions, so here we first provide the main intuitions.

Assume for the moment that $G$ admits a $k$-leaf root $T$ of maximum degree $d := d(k)$, some quantity that depends only on $k$.  Then for any leaf $v$ in $T$, there are at most $d^k$ other leaves of $T$ at distance at most $k$ from $v$.  This implies that in $G$, $v$ has at most $d^k$ neighbors, and so the maximum degree of $G$ is at most $d^k$.  This bounds the maximum clique number and, since $G$ is chordal, also bounds the treewidth by $d^k$.  Eppstein and Havvaei~\cite{eppstein2020parameterized} have shown that in this setting, one can decide whether $G$ is a $k$-leaf power in time $O((k d^k)^{c d^k} n)$ for some constant $c$.  

The difficult cases therefore arise when every $k$-leaf root of $G$ has maximum degree above $d$.  
At a very high level, our approach for this case can be described in four essential steps.  

\begin{enumerate}
    \item 
    Find a large collection $\{C_1 \cup Y_1, \ldots, C_d \cup Y_d\}$ of disjoint subsets of $V(G)$ that have a ``similar" neighborhood structure and that are easy to solve.  
    Each $C_i$ is small and cuts $Y_i$ from the rest of the graph, and each subgraph induced by $C_i \cup Y_i$ has maximum degree at most $d^k$.
    
    \item 
    Consider the set of \emph{all} $k$-leaf roots of each subgraph of $G$ induced by the $C_i \cup Y_i$ subsets, and ensure that many of these sets of $k$-leaf roots are  ``similar".
    If $d$ is large enough, this will be the case.
    
    \item 
    Check whether $G - (C_1 \cup Y_1)$ admits a $k$-leaf root.  If not, we are done, but if so, let $T$ be such a $k$-leaf root.
    
    \item 
    Look at how the $C_i \cup Y_i$ subsets are organized within $T$, for $i > 1$.  Since they have a similar structure as $C_1 \cup Y_1$ and admit the same type of $k$-leaf roots, we can find a $k$-leaf root $T_1$ of $G[C_1 \cup Y_1]$ that we can  embed into $T$, while mimicking the organization of the $C_i \cup Y_i$'s.
    
\end{enumerate}

Of course, we need to be precise about what is meant by a ``similar" neighborhood structure, and a ``similar" set of $k$-leaf roots.  We need two ingredients: the notion of a \emph{similar structure of $G$}, and the notion of the \emph{signature} of a tree.

\begin{figure}[t]
    \centering
    \includegraphics[width=\textwidth]{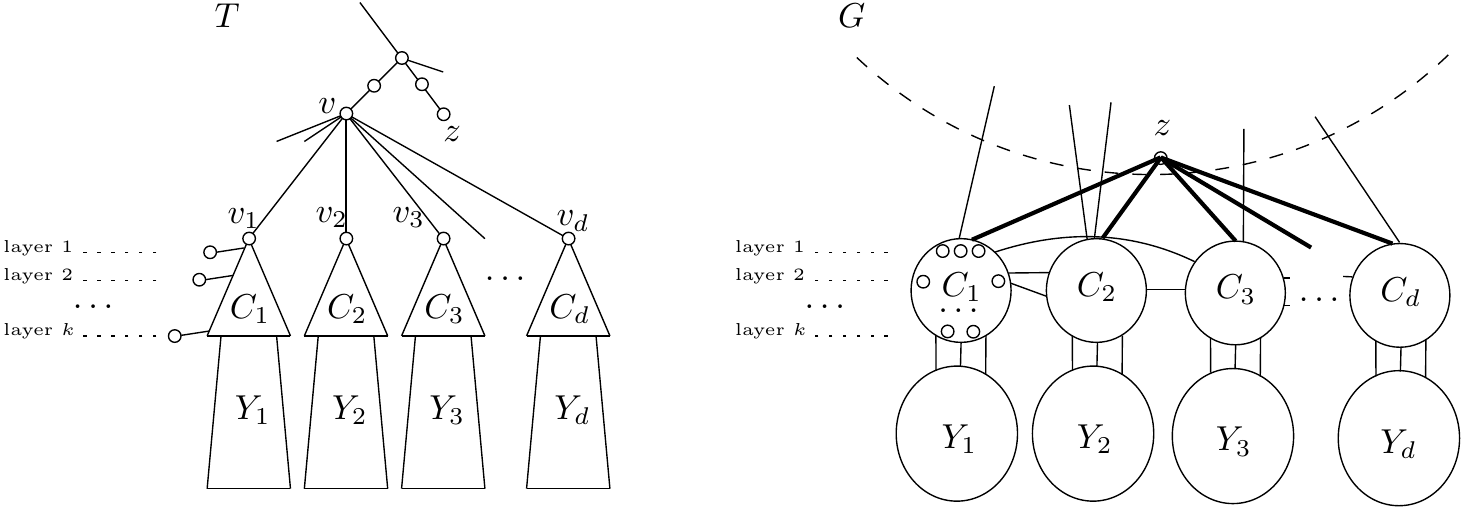}
    \caption{On the left: a $k$-leaf root $T$ with a node $v$ with more than $d$ children.  We have chosen $d$ of them, namely $v_1, \ldots, v_d$.  The leaf $z$ is chosen to be at minimum distance from $v$ in $T$.  The $C_i$ labels represent the leaves descending from $v_i$ at distance at most $k$ from $z$.  Each $C_i$ is organized into at most $k$ layers, where layer $j$ consists of the vertices at distance exactly $j$ from $v$.  The $Y_i$ labels represent the deeper leaves below $v_i$.  
    On the right: the structure of $G$ implied by $z$ and the $C_i \cup Y_i$ leaves of $T$.  In $G$, vertices from the same layer have the same neighbors outside their respective $C_i \cup Y_i$.
    }
    \label{fig:intuition}
\end{figure}

\paragraph{Similar structures.}
Let $T$ be a $k$-leaf root of $G$ of maximum degree above some large enough $d$, and suppose that $T$ is rooted.  
We look at the leaves below a deepest high degree node of $T$, and want to understand how their structure is reflected in $G$.  This is represented in Figure~\ref{fig:intuition}.
More specifically, $T$ has some deepest node $v$ with \emph{at least} $d + 1$ children and whose descendants all have \emph{at most} $d$ children.  Let $v_1, \ldots, v_d$ be $d$ arbitrary children of $v$.  Let $T_1, \ldots, T_d$ denote the subtrees rooted at the $v_i$'s.  Take a leaf $z$ of $T$ that is as close as possible to $v$, but that does not belong to a $T_i$ subtree (such a $z$ exists because $v$ has more than $d$ children).
Then the leaves at distance at most $k$ from $z$ in a $T_i$ subtree form a subset $C_i \subseteq V(G)$ of neighbors of $z$.  Since $T_i$ has maximum degree $d$, one can argue that $|C_i| \leq d^k$.  Moreover, if we assume that $G$ is connected, it can be argued that $C_i \neq \emptyset$.
In fact in $G$, each $C_i$ separates the other leaves contained in $T_i$ from the rest of $G$.  Let us call these other leaves $Y_i$.  Note that $Y_i$ might be empty, so $C_i$ might not exactly be a separator, and even if it is, it might not be minimal.
Nevertheless, the degree bound on $v_i$ and its descendants implies that in $G$, the induced subgraph $G[C_i \cup Y_i]$ has maximum degree $d^k$.
In other words, a high-degree $k$-leaf root of $G$ implies the existence of a large number of disjoint $C_i \cup Y_i$ subsets of vertices that are each ``easy" to solve, and such that there is a   vertex $z$ that is a neighbor of each member of each $C_i$.

There is yet another property of the $C_i$'s that is useful.  If we look at the leaves in $C_i$ at distance $j$ from $v$ for some $j \in [k]$, and the leaves in some other $C_j$ also at distance exactly $j$ from $v$, all these leaves share the same distance to the leaves ``outside" of $T_i$ and $T_j$.  In terms of $G$, the $C_i$ vertices can be ``layered" so that vertices in the same layer have the same outside neighborhood in $G$, where the layer of a vertex is an integer between $1$ and $k$.  
This is what is meant by a similar neighborhood structure.

\paragraph{Exploiting similar structures.}
Obviously, we do not have access to such a $k$-leaf root to find the structure as we just did, even though we know it exists.  However, it can be found in $G$ by brute force, and
our ultimate goal is to show that $G$ is a $k$-leaf power if and only if $G - (C_1 \cup Y_1)$ is a $k$-leaf power (where the choice of $C_1 \cup Y_1$ is arbitrary).

So, imagine that we have found $z$ and the $C_i \cup Y_i$'s as in Figure~\ref{fig:intuition} on the right, along with a layering of the $C_i$'s, but that we do not have the tree $T$ on the left.  In fact, the actual $k$-leaf root might not look like $T$ at all, and it might intertwine its $C_i \cup Y_i$ leaf sets.
Nevertheless, if $d$ is large enough, we show that many of the $G[C_i \cup Y_i \cup \{z\}]$ subgraphs admit a ``similar" set of $k$-leaf roots.  
%
To make this notion clearer, suppose that we take \emph{every} $k$-leaf root of a $G[C_i \cup Y_i \cup \{z\}]$ subgraph, look at their restriction to $C_i \cup \{z\}$, and replace each leaf of $C_i$ by its layer (see Figure~\ref{fig:sig}.c).
By ``similar" sets of $k$-leaf roots, we mean that these sets of restricted $k$-leaf roots are exactly the same for many of the $G[C_i \cup Y_i \cup \{z\}]$ subgraphs.  
These can be computed using dynamic programming on a tree decomposition of $G[C_i \cup Y_i \cup \{z\}]$.

This is an oversimplification, but let us go with it for a moment.
Assuming the above is feasible, we could first find a $k$-leaf root $T$ of $G - (C_1 \cup Y_1)$ (if none exists, then $G$ is not a $k$-leaf power).  If $d$ is large enough, several of the $C_i \cup \{z\}$ subsets will be organized in the same manner in $T$, i.e. restricting $T$ to these $C_i \cup \{z\}$ subsets and replacing $C_i$ leaves by their layer will yield the same tree.  We can then find a $k$-leaf root $T_1$ of $G[C_1 \cup Y_1 \cup \{z\}]$ with the same structure as these, and embed $T_1$ with the same organization as the others in $T$.  Because $C_1$ is layered in the same manner as the other $C_i$'s, mimicking their structure  in $T$ ensures that the distance relationships with the other vertices are satisfied.
The $z$ vertex is important, since it serve as a common point between all the $C_i$'s and indicates where to start embedding.
At this point, it becomes difficult to describe how this embedding is performed more concretely, 
and the interested reader is redirected to the proof of Theorem~\ref{thm:iffc1} for more details.

\paragraph{Tree signatures.}  
As we mentioned, expecting the $G[C_i \cup Y_i \cup \{z\}]$ subgraphs to admit exactly the same sets of $k$-leaf roots, even if we restrict them to $C_i \cup \{z\}$ and replace the leaves by their layer, does not quite work.  
First, each restricted subtree must also remember how far the $Y_i$ vertices are from the nodes of the restricted subtree, to ensure that we do not make the $Y_1$ nodes too close to other nodes during the embedding.
This can be done by labeling the internal nodes of our restricted trees with the distance to the closest $Y_i$ leaf, as in Figure~\ref{fig:sig}.c.  Note that a similar trick was done by Chen et al. in~\cite{chen2003computing}.

\begin{figure}
    \centering
    \includegraphics[width=\textwidth]{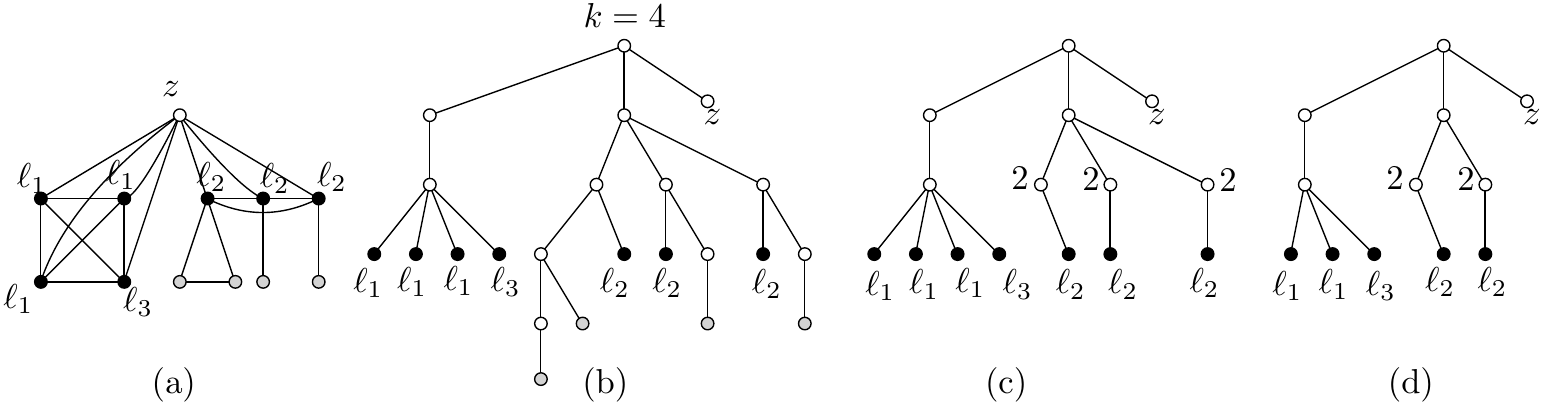}
    \caption{(a) A subgraph of $G$ induced by $C_i \cup Y_i \cup \{z\}$.  Vertices in black are those of $C_i$, and in gray those of $Y_i$.  The $\l_1, \l_2, \l_3$ represent the layer assigned to each vertex of $C_i$ (several vertices can have the same layer).  (b) A $4$-leaf root $T$ for this subgraph.  The leaves of $C_i$ are labeled by their layer.  (c) The restriction of $T$ to $C_i \cup \{z\}$, but in which each internal node remembers the distance to the closest pruned leaf that it led to (if any).  (d) A tree representing the signature of the restricted $4$-leaf root.  Each time a node has three identical subtrees, one of them is considered redundant and is pruned.}
    \label{fig:sig}
\end{figure}

Second, we cannot guarantee that enough of the $G[C_i \cup Y_i \cup \{z\}]$ will admit \emph{exactly} the same set of restricted $k$-leaf roots, since the number of possibilities is too large.
The solution is to define a compact representation of $k$-leaf roots restricted to some $C_i \cup \{z\}$, which we call its \emph{signature}.  
This representation prunes information from the tree that is not necessary for our embedding.  Conceptually, to obtain the signature of a tree, we look at each node and if we find a node that has three or more child subtrees that are identical, we remove one of these subtrees.  This is fine for our embedding, since the identical subtree can be inserted along the others that are identical.  We repeat until this is not possible.  
This is illustrated in Figure~\ref{fig:sig}.  From (a) a $G[C_i \cup Y_i \cup \{z\}]$ subgraph, we look at (b) each $k$-leaf root with leaves of the $C_i$'s replaced by their layer, then (c) restrict to $C_i \cup \{z\}$ and remember the distances to the removed leaves, and finally (d) prune redundant subtrees.  This gives the compact representation of one $k$-leaf root, and we must obtain them all.

This compact representation allows much less possibilities, and then we can guarantee with a pigeonhole argument that with large enough $d$, many $G[C_i \cup Y_i \cup \{z\}]$ will admit the exact same set of signatures.  This works because the number of possible signatures does not depend on $d$, only on $k$.  Therefore, we can make $d$ as large as desired for our argument.
Note that concretely, signatures are represented as vectors of integers that encode the same information, since it allows for simpler proofs (see Section 3).

We now proceed with the details.

\section{Preliminary notions}

For an integer $n$, we use the notation $[n] = \{1, 2, \ldots, n\}$.
All graphs in this paper are finite, simple and undirected.  For a graph $G$ and $v \in V(G)$, we denote by $N_G(v)$ the set of neighbors of $v$ in $G$, and we write $N_G[v] = N_G(v) \cup \{v\}$.  
For $X \subseteq V(G)$, we denote by $N_G(X) = \bigcup_{x \in X} (N_G(x) \setminus X)$ the neighbors of members of $X$ that are outside of $X$.  Also, we write $N_G[X] = N_G(X) \cup X$.
We may drop the subscript $G$ if it is clear from the context.  For $X \subseteq V(G)$, we denote by $G[X]$ the subgraph of $G$ induced by $X$.
We define a \emph{connected component} as a maximal \emph{set} of vertices $X$ such that $G[X]$ is connected.  

Unless stated otherwise, all trees in this paper are \emph{rooted}.  Hence we will usually say \emph{tree} instead of \emph{rooted tree}.
We denote the root of a tree $T$ by $r(T)$.  
For a node $v \in V(T)$, we write $ch_T(v)$ for the set of children of $v$ in $T$.  The \emph{arity} of $T$ is $\max_{v \in V(G)}|ch_T(v)|$.
We say that $v$ is a \emph{leaf} if it is has no children.
We write $L(T)$ to denote the set of leaves of $T$.
It is important to note that leaves are sometimes defined as nodes with a single neighbor in $T$.  This slightly differs from our definition, since if $r(T)$ has a single child in $T$, it is not treated as a leaf here.  
The only case in which the root is also a leaf is when $T$ has a single vertex (which has no children).
A node $u \in V(T)$ is a \emph{descendant} of another node $v \in V(T)$ if $v$ is on the path from $r(T)$ to $u$.  In this case, $v$ is an \emph{ancestor} of $u$.  Note that $v$ is a descendant and ancestor of itself.
Given a tree $T$ and some $v \in V(T)$, we let $T(v)$ denote the subtree rooted at $v$, i.e. the subgraph of $T$ induced by $v$ and all its descendants.
The distance between two nodes $u$ and $v$ of $T$ is denoted $dist_T(u, v)$.
We define $height(T) = 1 + \max_{l \in L(T)} dist_T(r(T), l)$.

Two trees $T_1$ and $T_2$ are \emph{equal} if $r(T_1) = r(T_2)$ and $(V(T_1), E(T_1)) = (V(T_2), E(T_2))$, in which case we write $T_1 = T_2$.  Two trees $T_1$ and $T_2$ are called \emph{leaf-isomorphic}, denoted $T_1 \simeq_L T_2$, if $L(T_1) = L(T_2)$, and there exists a bijection $\mu : V(T_1) \rightarrow V(T_2)$ such that $\mu(u) = u$ for every $u \in L(T_1)$, $\mu(r(T_1)) = r(T_2)$, and such that $uv \in E(T_1)$ if and only if $\mu(u)\mu(v) \in E(T_2)$.  We call $\mu$ a leaf-isomorphism.
Note that this is stronger than the usual notion of isomorphism, since we require $T_1$ and $T_2$ to be built with the same set of leaves.  Also observe that since leaves must be matched, the $\mu$ function is unique\footnote{To see this, observe that $\mu$ is forced for the leaves of $T_1$.  Then, $\mu$ is forced for the parent of those leaves, and then $\mu$ is forced for the grand-parents, and so on.}.

Let $X \subseteq V(T)$.
The \emph{restriction} of $T$ to $X$, denoted $T|X$, is the subgraph of $T$ induced by $X$ and every vertex of $T$ on the shortest path between two elements of $X$.  We shall repeatedly use the fact that for $u, v \in X$, $dist_{T|X}(u, v) = dist_T(u, v)$, and that
$V(T|X) \subseteq V(T)$.
Note that for $v \in V(T|X)$, $ch_{T}(v) \setminus ch_{T|X}(v)$ denotes the set of children of $v$ that were ``removed" by the restriction.

\subsection*{Leaf powers and their properties}

The main definition of interest in this paper is the following.

\begin{definition}
Let $G$ be a graph and let $k$ be a positive integer.  A \emph{$k$-leaf root} of $G$ is a tree $T$ such that $L(T) = V(G)$, and such that for all distinct $u, v \in V(G)$,  $uv \in E(G)$ 
if and only if $dist_T(u, v) \leq k$.  

The graph $G$ is a \emph{$k$-leaf power} if there exists a $k$-leaf root of $G$.
\end{definition}

Note that according to our definitions, the tree $T$ is implicitly rooted.
This is not required in the usual definition of leaf powers, although this is inconsequential for $k \geq 2$.  This is because if an unrooted $k$-leaf root $T$ has an internal node or has a single node, we may use it as the root.  If not, $T$ has two vertices $u$ and $v$, which are both of degree $1$.  This implies that $uv \in E(G)$, but since $k \geq 2$, we may add an internal node between $u$ and $v$ in $T$.
Also observe that the $k$-leaf power property is hereditary.  That is,
if $G$ is a $k$-leaf power and $X \subseteq V(G)$, then $G[X]$ is a $k$-leaf power.  This is because if $T$ is a $k$-leaf root of $G$, then $T|X$ is a $k$-leaf root of $G[X]$, since restrictions preserve distances.

A graph is \emph{chordal} if it has no induced cycle of length $4$ or more.  The treewidth of $G$ is denoted $tw(G)$ (we defer the definition of treewidth to Section~\ref{sec:tw}).
The following is a well-known property of $k$-leaf powers, for any $k$ (see~\cite{nishimura2002graph,brandstadt2010rooted}).

\begin{lemma}\label{lem:chordal}
Let $G$ be a $k$-leaf power.  Then $G$ is chordal.
\end{lemma}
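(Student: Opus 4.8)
The plan is to show that any $k$-leaf power $G$ has no induced cycle of length $\geq 4$, arguing by contradiction. Suppose $C = v_0 v_1 \cdots v_{m-1} v_0$ is an induced cycle in $G$ with $m \geq 4$, and let $T$ be a $k$-leaf root of $G$. Since the $k$-leaf power property is hereditary (as noted in the excerpt, $T|V(C)$ is a $k$-leaf root of $G[V(C)] = C$), we may assume without loss of generality that $G = C$ is itself an induced cycle and $T$ is a $k$-leaf root of it. The goal is then to derive a contradiction from the metric structure that $T$ imposes on the cycle.

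The key observation is a ``four-point'' style argument on the tree metric $dist_T$. First I would recall that tree metrics satisfy the four-point condition: for any four leaves $a, b, c, d$, among the three sums $dist_T(a,b) + dist_T(c,d)$, $dist_T(a,c) + dist_T(b,d)$, $dist_T(a,d) + dist_T(b,c)$, the largest two are equal. Now consider four consecutive vertices $v_0, v_1, v_2, v_3$ on the cycle (these exist since $m \geq 4$). Since $C$ is induced, we have $v_0 v_1, v_1 v_2, v_2 v_3 \in E(C)$, hence $dist_T(v_0, v_1), dist_T(v_1, v_2), dist_T(v_2, v_3) \leq k$, while $v_0 v_2, v_1 v_3 \notin E(C)$, hence $dist_T(v_0, v_2), dist_T(v_1, v_3) > k$. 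If $m = 4$ then also $v_0 v_3 \in E(C)$ so $dist_T(v_0, v_3) \leq k$; if $m \geq 5$ then $v_0 v_3 \notin E(C)$ so $dist_T(v_0, v_3) > k$. I would handle these by comparing the three pairings of $\{v_0, v_1, v_2, v_3\}$: the pairing $\{v_0 v_1, v_2 v_3\}$ gives a sum $\leq 2k$, while the pairing $\{v_0 v_2, v_1 v_3\}$ gives a sum $> 2k$; by the four-point condition the two largest sums coincide, which forces $dist_T(v_0, v_3) + dist_T(v_1, v_2) \geq dist_T(v_0, v_2) + dist_T(v_1, v_3) > 2k$, and since $dist_T(v_1,v_2) \leq k$ this gives $dist_T(v_0, v_3) > k$. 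This already rules out $m = 4$. For $m \geq 5$ one then walks this argument around the cycle: by induction on $j$, using consecutive quadruples, one shows $dist_T(v_0, v_j) > k$ keeps propagating, but eventually $v_0$ and some $v_j$ are adjacent on the cycle (namely $v_{m-1} v_0 \in E(C)$), giving $dist_T(v_0, v_{m-1}) \leq k$, a contradiction.

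Alternatively — and perhaps more cleanly — I would argue directly with paths in $T$. For each edge $v_i v_{i+1}$ of $C$, let $P_i$ be the path in $T$ between the leaves $v_i$ and $v_{i+1}$; it has length $\leq k$. Consider the closed walk $W$ in $T$ obtained by concatenating $P_0, P_1, \ldots, P_{m-1}$. Since $T$ is a tree, $W$ is a closed walk and every edge of $T$ used by $W$ is traversed an even number of times; in particular the subtree $T' := T|V(C)$ spanned by the cycle vertices is ``covered'' by these paths in a way that each leaf $v_i$ is an endpoint of exactly two of the $P_i$'s. Then a counting or parity argument on where the paths branch in $T'$, combined with the fact that non-consecutive cycle vertices are at distance $> k$, yields the contradiction. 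Concretely: pick a centroid-type node or the median of $\{v_0, v_1, v_2\}$ in $T$, and show that the three ``arms'' hanging off it partition the cycle vertices in a way incompatible with a cycle of length $\geq 4$.

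The main obstacle I expect is packaging the propagation argument cleanly for all $m \geq 5$ without a messy induction — the four-point condition gives the tool, but one must be careful that at each step the relevant distances stay $> k$ and that the inequalities chain correctly all the way around the cycle back to the edge $v_{m-1} v_0$. A clean way to avoid the induction is to observe that the four-point argument above actually shows: \emph{if $ab, bc, cd$ are edges of $C$ and $ac, bd$ are non-edges, then $ad$ is a non-edge}, i.e. $C$ cannot ``close up'' with $ad$ being an edge while $bc$ is an edge — which is exactly the statement that $C$ has no chord, upgraded to say a chordless cycle of length $\geq 4$ cannot embed. So I would isolate this as a small claim and apply it to $a = v_{m-2}, b = v_{m-1}, c = v_0, d = v_1$ (valid since $m \geq 4$ guarantees these are four distinct vertices with the required adjacency pattern), directly contradicting $v_{m-1} v_0 \in E(C)$ being consistent with $v_{m-2} v_1$ also being non-adjacent. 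This keeps the proof to essentially one application of the four-point condition.
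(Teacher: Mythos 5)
First, a remark on the comparison itself: the paper offers no proof of this lemma --- it is stated as a well-known fact with citations to Nishimura et al.\ and Brandst\"adt et al.\ --- so your attempt has to stand on its own. It does not. The four-point computation is correct as far as it goes: from $dist_T(a,b), dist_T(b,c), dist_T(c,d) \le k$ and $dist_T(a,c), dist_T(b,d) > k$ one gets $dist_T(a,d) + dist_T(b,c) = dist_T(a,c) + dist_T(b,d) > 2k$, hence $dist_T(a,d) > k$. But this claim is \emph{exactly} the exclusion of an induced $C_4$ and nothing more. Applied to $a = v_{m-2}$, $b = v_{m-1}$, $c = v_0$, $d = v_1$, its conclusion is $v_{m-2}v_1 \notin E(G)$. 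For $m = 4$ this contradicts $v_2v_1 \in E(C)$ and the argument closes; but for $m \ge 5$ the vertices $v_{m-2}$ and $v_1$ are at cycle-distance $\min(m-3,3) \ge 2$, hence already non-adjacent on the chordless cycle, and there is no contradiction. More generally, any quadruple of a chordless $C_m$ satisfying your hypotheses consists of four consecutive cycle vertices, and for $m \ge 5$ the conclusion ``$ad$ is a non-edge'' is automatically true for all of them, so no single application of the claim can finish the proof. The ``propagation'' sketched in your second paragraph is vacuous for the same reason: $dist_T(v_0,v_j) > k$ holds for free for every non-neighbor $v_j$ of $v_0$ (all you ever know about a non-edge), and the one inequality you actually need, $dist_T(v_0,v_{m-1}) > k$, cannot be reached, because no quadruple of cycle vertices places $v_0$ and $v_{m-1}$ in the $a,d$ positions with the required adjacency pattern in between. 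To push the four-point condition around a long cycle you would have to carry quantitative accumulated distance bounds rather than the bare $\le k$ versus $> k$ dichotomy, and you have not done that. The third sketch (closed walks, parity, centroid) is too underspecified to evaluate.

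The standard way to complete the proof avoids cycles entirely. Subdivide every edge of $T$ once to obtain $T'$, so that $dist_{T'}(u,v) = 2\, dist_T(u,v)$ for all leaves $u,v$. For each $v \in V(G)$ let $S_v$ be the set of nodes of $T'$ at distance at most $k$ from $v$; a ball in a tree induces a subtree. For distinct leaves $u,v$ we have $S_u \cap S_v \ne \emptyset$ if and only if $dist_{T'}(u,v) \le 2k$ (the midpoint of the $u$--$v$ path in $T'$ is a node, since the distance is even, and it lies in both balls), which holds if and only if $uv \in E(G)$. Thus $G$ is the intersection graph of a family of subtrees of a tree, and such graphs are chordal by the classical characterization of Buneman, Gavril and Walter. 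Alternatively, a direct cycle argument can be salvaged by cutting the Steiner tree of $V(C)$ at an edge and using additivity of tree distances across the cut to produce a chord, but that too requires more care than what you wrote.
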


We will be interested in $k$-leaf roots that have high arity.  
In~\cite[section 7]{eppstein2020parameterized}, Eppstein and Havvaei have shown that if $G$ is a graph of bounded treewidth, then deciding if $G$ is a $k$-leaf power is fixed-parameter tractable in $k + tw(G)$.
As we show, this implies that low-arity $k$-leaf roots, if any, can be found using this result.

\begin{lemma}\label{lem:prelim:bounded-arity}
Let $G$ be a graph with $n$ vertices.  If there exists a $k$-leaf root of $G$ of arity at most $d$, then $G$ has maximum degree $d^k$.  
Moreover, one can decide in time $O(n (d^k k)^{c d^k})$ whether such a $k$-leaf root exists, where $c$ is a constant.
\end{lemma}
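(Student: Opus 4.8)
The plan is to prove the two assertions separately, with the degree bound being a straightforward counting argument and the algorithmic claim being a reduction to the Eppstein--Havvaei result quoted earlier.

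First I would establish the maximum degree bound. Suppose $T$ is a $k$-leaf root of $G$ of arity at most $d$, and let $v \in V(G)$; recall $v \in L(T)$. The neighbors of $v$ in $G$ are exactly the leaves $u \neq v$ with $dist_T(u,v) \leq k$. I would bound the number of such leaves by bounding the number of nodes of $T$ within distance $k$ of $v$. Walking outward from $v$: at distance $1$ there is exactly one node (the parent of $v$, since $v$ is a leaf — or zero if $T$ is a single vertex, but then $G$ has no edges); from any node reached, we can move to its parent (one choice) or to one of its at most $d$ children, so the ball of radius $k$ around $v$ in $T$ has at most $1 + d + d^2 + \cdots + d^{k-1} \cdot(\text{something})$ nodes. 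To get the clean bound $d^k$ I would argue slightly more carefully: starting from the parent $p$ of $v$, a standard argument shows the number of vertices within distance $k-1$ of $p$ in a tree of arity $d$ is at most $d^k$ (each such vertex is reached by a walk that never increases depth by more than... ), or one simply notes the ball has at most $\sum_{i=0}^{k} d^i \le d^k$ once $d \geq 2$ after discarding the trivial cases $d \le 1$. Either way, the number of leaves other than $v$ within distance $k$ is at most $d^k$, so $\deg_G(v) \leq d^k$.

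Next, for the algorithmic claim, the key observation is that if $G$ has maximum degree $d^k$ then, since $G$ is chordal by Lemma~\ref{lem:chordal}, every clique of $G$ has size at most $d^k + 1$, hence $\omega(G) \le d^k + 1$ and $tw(G) = \omega(G) - 1 \le d^k$ (using the standard fact that chordal graphs have treewidth one less than the clique number). So I would first test in polynomial (indeed linear) time whether $G$ has maximum degree at most $d^k$; if not, then by the degree bound no $k$-leaf root of arity $\le d$ exists, and we report ``no''. If yes, then $tw(G) \le d^k$, and we invoke the Eppstein--Havvaei FPT algorithm for recognizing $k$-leaf powers parameterized by $k + tw(G)$, cited from \cite[section 7]{eppstein2020parameterized}, which runs in time $O(n (d^k k)^{c d^k})$ for some constant $c$. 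A small subtlety to address: that algorithm decides whether $G$ is a $k$-leaf power at all, not whether it has a $k$-leaf root of arity exactly $\le d$. But this is fine for the lemma's purpose, because we only call it in the branch where $G$ already has maximum degree $d^k$; and more importantly, by Lemma~\ref{lem:chordal} and the hereditary structure, whenever a bounded-arity $k$-leaf root exists the graph is in particular a $k$-leaf power, so the algorithm's ``yes'' is consistent. I would phrase the lemma's conclusion as: one can decide whether $G$ is a $k$-leaf power (equivalently, in the low-degree regime, detect the bounded-arity root) within the stated time.

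The main obstacle I anticipate is pinning down exactly what the second sentence of the lemma is claiming, since ``decide whether such a $k$-leaf root exists'' could be read as requiring the arity constraint to be respected by the algorithm. I would resolve this by noting that the contrapositive of the first part lets us cheaply rule out the bounded-arity case by a degree check, and in the surviving case the treewidth is bounded, so running the generic FPT recognizer suffices — the algorithm either confirms $G$ is a $k$-leaf power (and the paper's later machinery handles converting this into a suitable root) or correctly rejects. Aside from that framing issue, the only computation to be careful with is the crude geometric-sum bound $1 + d + \cdots + d^{k} \le d^k$, which needs the trivial cases $d \le 1$ (path or single vertex) handled by inspection; I would dispatch those in a sentence.
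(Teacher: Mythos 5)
Your proposal follows the same route as the paper on both halves: bound the degree by counting leaves near $v$ in a bounded-arity tree, then observe that chordality (Lemma~\ref{lem:chordal}) gives $tw(G)=\omega(G)-1\le d^k$ and invoke the Eppstein--Havvaei algorithm. Your handling of the ``arity $\le d$ versus plain $k$-leaf power'' subtlety in the second part is fine and matches how the lemma is actually used later in the paper.

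The one concrete problem is in the counting. The inequality $\sum_{i=0}^{k} d^i \le d^k$ is false for every $d\ge 2$ (e.g.\ $d=2,k=2$ gives $7>4$), and more generally bounding the whole ball of radius $k$ around $v$ by the number of \emph{vertices} it contains only yields roughly $\tfrac{d+1}{d-1}d^{k-1}$, which exceeds $d^k$ for small $d$ and so does not establish the stated bound. The fix — and what the paper does — is to count only \emph{leaves}: reroot $T$ at $v$ (arity is still at most $d$), and note that a rooted tree of arity $d$ has at most $d^k$ leaves at depth at most $k$, since distinct such leaves extend to distinct depth-$k$ positions of the complete $d$-ary tree. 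With that substitution your argument goes through; the rest of the proposal is sound.
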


\begin{proof}
Assume that $T$ is a $k$-leaf root of $G$ of arity at most $d$.  
Let $v \in V(G)$.  Then in $T$, the number of leaves at distance at most $k$ from $v$ is bounded by $d^k$ (since if we imagine rerooting $T$ at $v$, each node will still have at most $d$ children --- the astute reader will see that we could optimize to $d^{k-1}$, but let us not bother).  This implies that $v$ has at most $d^k$ neighbors in $G$.  Since this holds for every vertex, $G$ has maximum degree $d^k$.

As for the second part of the lemma, we know by~\cite{eppstein2020parameterized} that deciding if $G$ is a $k$-leaf power can be done in time $O(n \cdot (tw(G) k)^{c \cdot tw(G)})$ for some constant $c$. To use this, we can first check whether $G$ is chordal in linear time, and if not reject it.  Since $G$ is chordal, it is well-known that $tw(G) = \omega(G) - 1$, where $\omega(G)$ is the size of a maximum clique in $G$.  In any graph, $\omega(G)$ is at most the maximum degree plus one.  In our case, this implies that $tw(G) = \omega(G) - 1 \leq d^k$.
Using the algorithm of~\cite{eppstein2020parameterized}, we can check whether $G$ is a $k$-leaf power in time $O(n (d^k k)^{c d^k})$.
\end{proof}

\section{Finding redundant structures in $k$-leaf powers}

Let us fix a positive integer $k$ and an arbitrary graph $G$ for the rest of the paper.  We assume that $G$ is connected, as otherwise, each connected component can be treated separately (if a $k$-leaf root is found for each component, we can join their roots under a new root at distance more than $k$ from them).
As we mentioned, an important difficulty is when $G$ does admit $k$-leaf roots, but they all have large arity.  
We now define our similar structures precisely.
We then introduce the notion of a \emph{signature} for their $k$-leaf roots.  After that, we argue that many subsets admit the same $k$-leaf root signatures, and that we can prune one.
In Section~\ref{sec:tw}, we show how the set of $k$-leaf root signatures can be found.

\subsection{Similar structures}

A \emph{similar structure} of a graph $G$ is a tuple $\S = (\C, \Y, z, \L)$ where:
\begin{itemize}
    \item 
    $\C = \{C_1, \ldots, C_d\}$ is a collection of $d \geq 2$ pairwise disjoint,  non-empty subsets of vertices of $G$;
    
    \item 
    $\Y = \{Y_1, \ldots, Y_d\}$ is a collection of pairwise disjoint subsets of vertices of $G$, some of which are possibly empty.  Also, $C_i \cap Y_j = \emptyset$ for any $i, j \in [d]$;
    
    \item 
    $z \in V(G)$ and does not belong to any subset of $\C$ or $\Y$;
    
    \item 
    $\L = \{\l_1, \ldots, \l_d\}$ is a set of functions where, for each $i \in [d]$, we have $\l_i: C_i \cup \{z\} \rightarrow \{0, 1, \ldots, k\}$.
    The functions in $\L$ are called \emph{layering functions}.
    
\end{itemize}

Additionally, $\S$ must satisfy several conditions.  
Let us denote $C^* = \bigcup_{i\in [d]} C_i$.
Let $X = \{X_1, \ldots, X_t\}$ be the connected components of $G - C^*$.
For each $i \in [d]$, denote $X^{(i)} = \{X_j \in X : N_G(X_j) \subseteq C_i\}$, i.e. the components that have neighbors only in $C_i$.

Then all the following conditions must hold:

\begin{enumerate}
    \item  \label{cut:yi}
    for each $i \in [d]$, $Y_i = \bigcup_{X_j \in X^{(i)}} X_j$ ($Y_i = \emptyset$ is possible);

    \item \label{cut:znbrhood}
    there is exactly one connected component $X_z \in X$ such that for all $i \in [d]$,
    ${N_G(X_z) \cap C_i \neq \emptyset}$.  Moreover, $z \in X_z$ and $C^* \subseteq N_G(z)$;
    
    \item \label{cut:ccs}
    for all $X_j \in X \setminus \{X_z\}$, $X_j \subseteq Y_i$ for some $i \in [d]$.  In particular, $X_z$ is the only connected component of $G - C^*$ with neighbors in two or more $C_i$'s;

    \item  \label{cut:layers}
    the layering functions $\L$ satisfy the following:
    \begin{enumerate}
        
        
        \item \label{cut:zlayer}
        for each $i \in [d]$, $\l_i(z) = 0$.  Moreover, $\l_i(x) > 0$ for any $x \in C_i$;
        
        \item \label{cut:layerssame}
        for any $i, j \in [d]$ and any $x \in C_i, y \in C_j$, $\l_i(x) = \l_j(y)$ implies $N_G(x) \setminus (C_i \cup Y_i \cup C_j \cup Y_j) = N_G(y) \setminus (C_i \cup Y_i \cup C_j \cup Y_j)$.  
        Note that this includes the case $i = j$;
        
        \item \label{cut:layers-leq-k}
        for any $i, j \in [d]$ and any $x \in C_i, y \in C_j$, $\l_i(x) + \l_j(y) \leq k$ implies $xy \in E(G)$.  Note that this includes the case $i = j$.
        
        \item \label{cut:layers-gt-k}
        for any \emph{two distinct} $i, j \in [d]$ and any $x \in C_i, y \in C_j$, $\l_i(x) + \l_j(y) > k$ implies $xy \notin E(G)$.  Note that this does \emph{not} include the case $i = j$
        
    \end{enumerate}

\end{enumerate}

We will refer to the value of $d$ as the \emph{size} of $\S$.

Although somewhat burdensome, the properties of a similar structure occur naturally when we look at the subtrees under a given node of a $k$-leaf root.  The similar structure is the one that is described in Figure~\ref{fig:intuition}, but in a more precise manner.  
The properties of similar structures essentially say that after removing each $C_i \in \C$ from $G$, there is one connected component $X_z$ that touches every $C_i$, with a $z \in X_z$ that is a neighbor of each $C_i$.  All the other connected components are separated from the rest of the graph by exactly one $C_i$, and these form the $Y_i$'s.  As for the layering functions, they state that $z$ is a special vertex with layer $0$.  
For the $C_i$ vertices, the layers represent how the neighborhoods of the $C_i$ members are organized.  One can imagine that $G$ has a $k$-leaf root and that the layer of $x \in C_i$ is the distance from $x$ to the lowest common ancestor of $C_i$ (of course, this is conceptual since we don't have this $k$-leaf root).
Any two vertices at the same layer must have the same neighborhood outside of their $C_i$'s.  Vertices from ``close" layers, i.e. with sum at most $k$, must be neighbors.  If the layers are ``far", i.e. with sum more than $k$, then the vertices should not be neighbors (unless they are in the same $C_i$).

We first show that similar structures can always be found on graphs with $k$-leaf roots of high arity.
This is essentially a formalization of Figure~\ref{fig:intuition}.

\begin{lemma}\label{lem:exists-similar}
Let $d \geq 2$ and let $G$ be a connected graph that admits a $k$-leaf root of arity at least $d + 1$.
Then there exists a similar structure $\S = (\C, \Y, z, \L)$ of $G$ such that 
$|\C| = d$.  Moreover, for each $C_i \in \C$, $|C_i| \leq d^k$
and $G[C_i \cup Y_i \cup \{z\}]$ has maximum degree at most $d^k$.
\end{lemma}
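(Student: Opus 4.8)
\emph{Proof plan.} My plan is to read the similar structure directly off a high-arity $k$-leaf root, making Figure~\ref{fig:intuition} precise. I would first fix a $k$-leaf root $T$ of $G$ of arity at least $d+1$ — and, to help with the last step, take one with the fewest nodes among all such roots. Let $v$ be a node of maximum depth having at least $d+1$ children; then every proper descendant of $v$ has at most $d$ children, so every subtree $T(c)$ with $c$ a child of $v$ has arity at most $d$. For a child $c$ write $\delta(c)$ for the least depth of a leaf of $T(c)$; order the children $c_1,\dots,c_p$ ($p\ge d+1$) so that $\delta(c_1)\le\cdots\le\delta(c_p)$, put $v_i:=c_{i+1}$ for $i\in[d]$, and let $z$ be a leaf outside $\bigcup_iL(T(v_i))$ at minimum distance from $v$ (a shallowest leaf of $T(c_1)$ witnesses one exists, and this forces $dist_T(v,z)\le 1+\delta(c_1)\le 1+\min_i\delta(v_i)$). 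Then set
\[
 C_i=\{x\in L(T(v_i)):dist_T(x,z)\le k\}=N_G(z)\cap L(T(v_i)),\qquad Y_i=L(T(v_i))\setminus C_i,
\]
with $\l_i(z)=0$ and $\l_i(x)=dist_T(x,v)$ for $x\in C_i$, and verify that $\S=(\{C_1,\dots,C_d\},\{Y_1,\dots,Y_d\},z,\{\l_1,\dots,\l_d\})$ is a similar structure of size $d$.

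Most conditions are routine bookkeeping. Disjointness of the $C_i$'s and the $Y_i$'s and $C_i\cap Y_j=\emptyset$ follow from the $L(T(v_i))$ being pairwise disjoint (and $C_i\cap Y_i=\emptyset$ by definition); $z\notin L(T(v_i))$ puts $z$ outside all of them; $\l_i(z)=0$ while $\l_i(x)\ge 1$ for $x\in C_i$ (such $x$ is a leaf strictly below $v$), and $\l_i(x)\le k$ since $dist_T(x,z)=\l_i(x)+dist_T(v,z)\ge\l_i(x)+1$. For the layering conditions (\ref{cut:layerssame})–(\ref{cut:layers-gt-k}), the key observation is that for distinct $i,j$ every path between a vertex of $L(T(v_i))$ and one of $L(T(v_j))$, and every path from such a vertex to one outside both subtrees, passes through $v$; hence these distances equal $\l_i(x)+\l_j(y)$, resp.\ $\l_i(x)+dist_T(v,\cdot)$, so equal layers give equal outside neighbourhoods, layer sum $\le k$ gives an edge, layer sum $>k$ gives a non-edge (the restriction to $i\ne j$ in (\ref{cut:layers-gt-k}) being needed precisely because an intra-$C_i$ path may avoid $v$). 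The size and degree bounds come out here too: $x\in C_i$ forces $dist_T(x,v_i)\le k-2$, and since $T(v_i)$ has arity at most $d$, both $|C_i|$ and the maximum degree of $G[C_i\cup Y_i\cup\{z\}]$ are at most the number of leaves within distance $k$ of a fixed node in an arity-$d$ tree, i.e.\ $d^k$, exactly as in Lemma~\ref{lem:prelim:bounded-arity} ($z$ contributes only its $|C_i|\le d^k$ incidences). Nonemptiness of each $C_i$ is the first real use of connectivity: $L(T(c_{d+1}))$ is a proper nonempty subset of $V(G)$, so some leaf $a$ there has a neighbour $b$ outside it; following the $a$–$b$ path through $v$ and invoking the minimality of $dist_T(v,z)$ shows the shallowest leaf of each $T(v_i)$ is within distance $k$ of $z$, so $C_i\ne\emptyset$.

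The heart of the proof is conditions \ref{cut:yi}–\ref{cut:ccs}: that $C^*:=\bigcup_iC_i$ cuts $G$ as claimed. I would proceed in three steps. (i) Show $N_G(Y_i)\subseteq C_i$: for $w\in N_G(Y_i)$ the case $w\in C_i$ is what we want, so suppose $w\notin L(T(v_i))$ and pick $y\in Y_i$ with $yw\in E(G)$; the $y$–$w$ and $y$–$z$ paths both pass through $v$, and $dist_T(y,w)\le k<dist_T(y,z)$ forces $dist_T(v,w)<dist_T(v,z)$ — impossible, since every leaf of every $T(v_j)$ is at distance $\ge 1+\delta(v_j)\ge dist_T(v,z)$ from $v$, and every leaf outside $\bigcup_jL(T(v_j))$ is at distance $\ge dist_T(v,z)$ from $v$ by the choice of $z$. (ii) Deduce that each $Y_i$ is a union of connected components of $G-C^*$ with neighbourhoods inside $C_i$; and, since $C^*\subseteq N_G(z)$, that the component $X_z$ of $G-C^*$ containing $z$ has $N_G(X_z)\supseteq C_i$ for all $i$; then argue that the ``$z$-side'' of $G-C^*$ — the vertices lying in no $Y_i$ — is connected, i.e.\ equals $X_z$. (iii) Given (ii), $X_z$ is the unique component of $G-C^*$ meeting every $C_i$, and every other component lies inside a single $Y_i$, with neighbourhood inside $C_i$; this is exactly conditions \ref{cut:yi}, \ref{cut:znbrhood} and \ref{cut:ccs}.

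I expect the connectivity claim inside step (ii) — that deleting $C^*$ does not shatter the side of $v$ not covered by the chosen subtrees — to be the main obstacle: for a carelessly chosen $k$-leaf root, or a careless choice of which $d$ of the $\ge d+1$ children of $v$ to keep, it can fail, and ruling this out is what the minimality of $T$ and the freedom in that choice are meant to handle.
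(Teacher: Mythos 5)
Your construction is essentially the paper's (deepest node $v$ with $\ge d+1$ children, a closest leaf $z$ outside the chosen subtrees, $C_i = N_G(z)\cap L(T(v_i))$, layers $\l_i(x)=dist_T(x,v)$), and the parts you verify — sizes, degrees, the layering conditions, nonemptiness of the $C_i$, and $N_G(Y_i)\subseteq C_i$ — are argued correctly. But there is one genuine gap: you never prove the connectivity claim in step (ii), namely that the set $Z$ of leaves lying outside $\bigcup_i L(T(v_i))$ induces a connected subgraph of $G-C^*$ (so that it is exactly the single component $X_z$ meeting every $C_i$). Without it, conditions \ref{cut:yi}--\ref{cut:ccs} are not established, and you explicitly flag this as "the main obstacle," to be handled by taking $T$ with fewest nodes and by exploiting the freedom in choosing which $d$ children to keep. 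That diagnosis is off: the claim does not fail for any high-arity root or any admissible choice of children, and neither node-minimality of $T$ nor the choice freedom is needed. It follows from exactly the distance comparison you already used in step (i). Suppose $G[Z]$ had a component $X_q$ not containing $z$. Since $G$ is connected and all neighbours of $X_q$ outside $X_q$ lie in $\bigcup_i L(T(v_i))$, some $q\in X_q$ has a neighbour $c\in L(T(v_i))$ for some $i$; the $q$--$c$ path in $T$ passes through $v$, and since $dist_T(v,c)\ge 1+\delta(v_i)\ge dist_T(v,z)$ one gets
\[
k \;\ge\; dist_T(q,c) \;=\; dist_T(q,v)+dist_T(v,c)\;\ge\; dist_T(q,v)+dist_T(v,z)\;\ge\; dist_T(q,z),
\]
so $qz\in E(G)$ and $q,z$ lie in the same component of $G[Z]$, a contradiction. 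This is precisely how the paper closes the argument; your proposal is the right proof with this one step left open and misattributed to an unnecessary extremal choice of $T$.
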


\begin{proof}
Let $T$ be a $k$-leaf root of $G$ of arity at least $d + 1$.  We show how to construct $\C = \{C_1, \ldots, C_d\}$, $\Y = \{Y_1, \ldots, Y_d\}, z$, and $\L= \{\l_1, \ldots, \l_d\}$ using the relationship between $T$ and $G$.
Let $v$ be a deepest node of $T$ with $d + 1$ or more children (such a node must exist).  By the choice of $v$, all the descendants of $v$ have at most $d$ children.  

Let $z \in L(T)$ be a leaf of $T$ at minimum distance from $v$, i.e. $z$ minimizes $dist_T(z, v)$ among all leaves of $T$.
Note that $z$ may or may not be a descendant of $v$.
Let $v_1, \ldots, v_d$ be $d$ children of $v$, none of which is an ancestor of $z$ (such a choice of $d$ children exists since $v$ has at least $d + 1$ children).
For each $i \in [d]$, define $C_i = L(T(v_i)) \cap N_G(z)$.
Note that since $z$ is not a descendant of $v_i$, the vertices of $L(T(v_i)) \cap N_G(z)$ must be at distance at most $k$ from $v$ (actually, $k - 1$, but we shall not bother).  Since the $T(v_i)$ subtree has arity at most $d$, there are at most $d^k$ leaves in $T(v_i)$ at distance at most $k$ from $v$.  It follows that $|C_i| \leq d^k$, for each $i \in [d]$.
Note that by construction, the $C_i$ subsets are pairwise disjoint.  To see that the $C_i$'s are non-empty, assume that $C_i$ is empty for some $i \in [d]$.
Since $z$ is the closest leaf to $v$, and since every path from a member of $L(T(v_i))$ to a member of $V(G) \setminus L(T(v_i))$ passes through $v$, no element of $L(T(v_i))$ is at distance $k$ or less in $T$ from and element of $V(G) \setminus L(T(v_i))$.  Since $L(T(v_i)) \neq \emptyset$, this implies that $G$ is disconnected, a contradiction.  We may thus assume that $C_i \neq \emptyset$.

Now, for convenience, define $C^* = \bigcup_{i \in [d]} C_i$.
Also define $G' = G - C^*$.

Let $X_z$ be the connected component of $G'$ that contains $z$.  By construction, $z$ is a neighbor of every vertex in each $C_i$.  We must show that only $X_z$ intersects with every $C_i$.
Let $Z = L(T) \setminus \left( \bigcup_{i \in [d]} L(T(v_i))) \right)$, 
and notice that $z \in Z$.
We argue that $G'[Z]$ is connected.
Assume otherwise.  Then $G'[Z]$ has at least two connected components, one of them being $X_z$, and the other some other component $X_q$.  Since $G$ is connected and removing $C^*$ separates $X_z$ from $X_q$, there is some $q \in X_q$ such that $N_G(q) \cap C^* \neq \emptyset$.
Let $c \in N_G(q) \cap C^*$.
Consider the location of $q$ in $T$.  Because $q \in Z$, $q$ is not a descendant of any $v_i$, and thus the path from $q$ to $c$ in $T$ passes through $v$.  But the choice of $z$ implies 
\begin{align*}
k \geq dist_T(q, c) = dist_T(q, v) + dist_T(v, c) \geq dist_T(q, v) + dist_T(v, z) \geq dist_T(q, z)
\end{align*}
Since $T$ is a $k$-leaf root of $G$, we have $qz \in E(G)$, contradicting that they belong to different connected components of $G'$.
Thus $G'[Z]$ is connected. 

Now for $i \in [d]$, consider a leaf $x \in L(T(v_i)) \setminus C_i$.
We argue that all neighbors of $x$ are in $L(T(v_i))$.
Suppose that $x$ has a neighbor $q \in L(T) \setminus L(T(v_i))$.  
Then $dist_T(x, q) \leq k$, and the path from $x$ to $q$ goes through $v$.  But again, by the choice of $z$, 
\[
k \geq dist_T(x, q) = dist_T(x, v) + dist_T(v, q) \geq dist_T(x, v) + dist_T(v, z) = dist_T(x, z)
\]
  But then, $x$ should be in $C_i$ (by its definition), which is a contradiction.  It follows that $x$ has no neighbor outside of $L(T(v_i))$.
In particular, $x \notin Z$.

It follows that $Z$ is a connected component of $G'$, that it is equal to $X_z$, and that it satisfies Property~\ref{cut:znbrhood} of similar structures.
Moreover, as we argued any other connected component $X_j$ of $G'$ contains vertices that appear as leaves of some $T(v_i)$ subtree, and $N_G(X_j)$ must be a subset of $C_i$ since $X_j$ vertices have no neighbor outside $T(v_i)$. 
We may then define $Y_i = L(T(v_i)) \setminus C_i$ for each $i \in [d]$ and it follows that $Y_i$ regroups all connected components of $G'$ that have neighbors only in $C_i$, which satisfies Property~\ref{cut:yi}.
Property~\ref{cut:ccs} is easily seen to be satisfied, since the elements in $\{X_z, Y_1, \ldots, Y_d\}$ are disjoint and cover all of $L(T) \setminus C^*$.


It remains to describe our layering functions $\L$.  
For $i \in [d]$, put $\l_i(z) = 0$  and for $x \in C_i$, define
\[
\l_i(x) = dist_T(x, v)
\]
Note that $\l_i(x) > 0$, and that $\l_i(x) = 1$ is possible if $C_i$ only contains a leaf of $T$. 
Also note that $\l_i(x) \leq k$ since each $x$ is a neighbor of $z$.
Property~\ref{cut:zlayer} is thus satisfied.  We show that each remaining property is satisfied.

\emph{Property~\ref{cut:layerssame}}.  
Let $i, j \in [d]$ and let $x \in C_i, y \in C_j$ such that $\l_i(x) = \l_j(y)$.  
Note that $q \in N_G(x) \setminus (C_i \cup Y_i \cup C_j \cup Y_j)$ only if $q$ is a leaf of $T$ at distance at most $k$ from $x$, and $q$ does not descend from $v_i$ or $v_j$.  Therefore, the path from $x$ to $q$ goes through $v$.
Since $dist_T(x, v) = dist_T(y, v)$ (because they have the same layer), we have $dist_T(x, q) = dist_T(y, q)$, and thus $q \in N_G(y) \setminus (C_i \cup Y_i \cup C_j \cup Y_j)$ as well.  Thus $N_G(x) \setminus (C_i \cup Y_i \cup C_j \cup Y_j) \subseteq N_G(y) \setminus (C_i \cup Y_i \cup C_j \cup Y_j)$, and the other containment direction can be argued symmetrically.  Thus $\S$ satisfies Property~\ref{cut:layerssame}.

\emph{Property~\ref{cut:layers-leq-k}}.  Let $i, j \in [d]$ and let $x \in C_i, y \in C_j$ with $\l_i(x) + \l_j(y) \leq k$. We see that $dist_T(x, y) \leq dist_T(x, v) + dist_T(y, v) = \l_i(x) + \l_j(y) \leq k$, implying that $xy \in E(G)$.  Thus Property~\ref{cut:layers-leq-k} is satisfied.

\emph{Property~\ref{cut:layers-gt-k}}.  Let $i, j \in [d]$, $i \neq j$, and let $x \in C_i, y \in C_j$ with $\l_i(x) + \l_j(y) > k$.  Since $x$ and $y$ descend from distinct $v_i$ and $v_j$, $dist_T(x, y) = dist_T(x, v) + dist_T(v, y) = \l_i(x) + \l_j(y) > k$, and thus $xy \notin E(G)$.
Thus Property~\ref{cut:layers-gt-k} is satisfied.

We have therefore shown that $\S$ satisfies all requirements to be a similar structure, and that $|C_i| \leq d^k$ for each $i \in [d]$.

To finish the proof, recall that the lemma requires that each $G[\hC{i}]$ has maximum degree at most $d^k$.  First note that $z$ has at most $d^k$ neighbors in this subgraph.
Now let $x \in C_i \cup Y_i$.  Any neighbor of $x$ in $G[\hC{i}]$ is at distance at most $k - 1$ from the parent $p$ of $x$ in $T$.  Since $T_i$ has arity at most $d$, the number of other leaves of $C_i \cup Y_i$ at distance at most $k - 1$ from $p$ is at most $d^{k-1}$, and thus $x$ has at most $d^{k-1}$ neighbors in $C_i \cup Y_i$ (this also counts the leaves that do not descend from $p$).  Note that $x$ could also have $z$ as a neighbor, so its degree in $G[\hC{i}]$ is at most $d^{k-1} + 1 \leq d^k$, as desired.  
\end{proof}

\subsection{Valued trees and signatures}

We now know that similar structures exist in difficult $k$-leaf powers.  This is not enough though.  We would like to say that the $G[C_i \cup Y_i \cup \{z\}]$ subgraphs each admit ``similar'' sets of $k$-leaf roots, so that pruning one $C_i \cup Y_i$ does not matter.  In fact, we are only interested in how the $C_i \cup \{z\}$ subsets behave in these $k$-leaf roots, so we develop the notion of a \emph{valued restriction} to remove the $Y_i$'s, while retaining the essential distance information to these $Y_i$'s.  This is still not enough though, since the sets of such restricted $k$-leaf roots may still differ.  We introduce the notion of a \emph{signature} for these pruned trees, which is a compact representation that retains the essence of the structure of the trees.

A \emph{valued tree} $\T$ is a pair $(T, \s)$ where $T$ is a tree and $\s : V(T) \setminus L(T) \rightarrow \mathbb{N} \cup \{\infty\}$ assigns each internal node of $T$ an integer, or possibly the special value $\infty$.  
We say that $\T$ is \emph{$s$-bounded} if $\s(v) \leq s$ or $\s(v) = \infty$ for each $v \in V(T) \setminus L(T)$.  We define $height(\T) = height(T)$.
For $v \in V(T)$, $\T(v) = (T', \s')$ denotes the valued tree in which $T' = T(v)$ and $\s'(w) = \s(w)$ for $w \in V(T(v)) \setminus L(T)$.  
We say that two valued trees $(T_1, \s_1)$ and $(T_2, \s_2)$ are \emph{value-isomorphic} if $T_1 \simeq_L T_2$, with leaf-isomorphism $\mu$, and $\s_1(w) = \s_2(\mu(w))$ for all $w \in V(T_1) \setminus L(T_1)$.
The notion of valued restrictions will be fundamental for the rest of this paper.

\begin{definition}
Let $T$ be a tree and let $X \subseteq L(T)$.  
We say that $(T', \s)$ is the \emph{valued restriction of $T$ to $X$} if it satisfies the following:

\begin{itemize}
    \item
    $T' = T|X$;

    \item 
    for each $v \in V(T') \setminus L(T')$, 
    let $L^*(v) = \bigcup_{x \in ch_T(v) \setminus ch_{T'}(v)} L(T(x))$.  Then
    \begin{itemize}
    \item 
    if $L^*(v) = \emptyset$, $\s(v) = \infty$;
    
    \item 
    otherwise, $\s(v) = \min_{l \in L^*(v)} dist_T(v, l)$. 
    \end{itemize}
\end{itemize}
\end{definition}

See Figure~\ref{fig:sig2}.b for an illustration.  Intuitively, the valued restriction of $T$ to $X$ takes $T' = T|X$ as a tree.
By doing so, each remaining $v \in V(T')$ might have lost some children that were in $T$.  We want $v$ to remember how far it is from a ``hidden" leaf, i.e. descending from one of these lost children, and $\s(v)$ stores this information.
The point of this is that $T$ represents a $k$-leaf root in which the leaves not in $X$ should be at distance more than $k$ from any leaf not in $T$ (concretely, these will be the $Y_i$ leaves).  The tree $T'$ is a compressed version of $T$, and we will embed $T'$ into a larger $k$-leaf root instead of $T$.  During this embedding we want to know how far the hidden leaves are to ensure that outside leaves remain at distance more than $k$ from them.

\begin{figure}
    \centering
    \includegraphics[width=\textwidth]{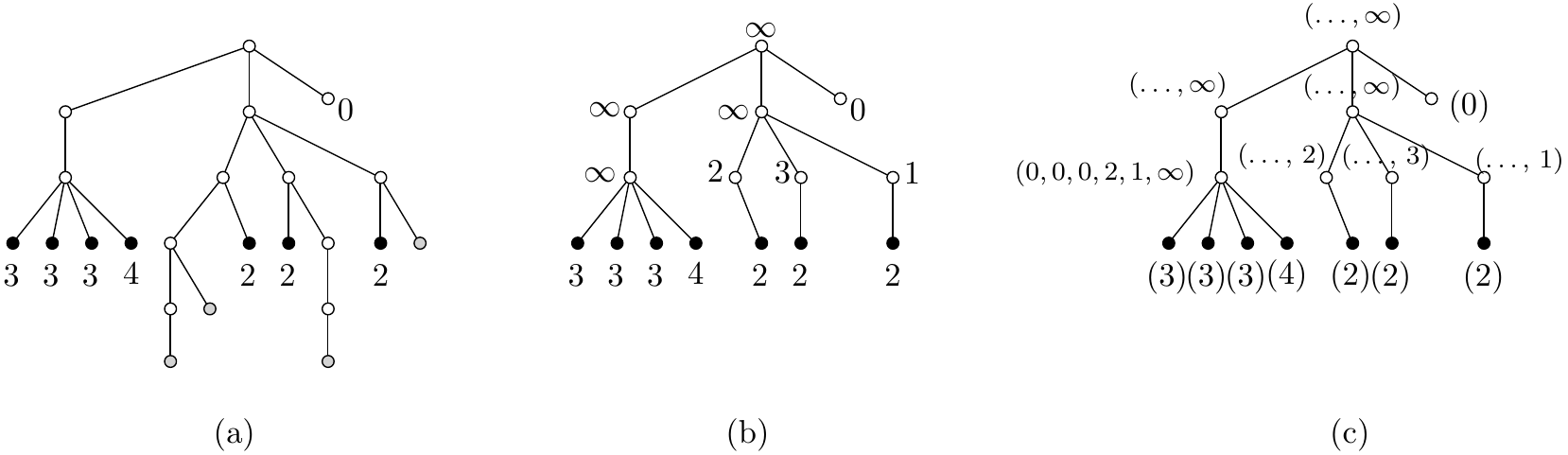}
    \caption{(a) A tree $T$.  The set of leaves $v$ with a label $\l(v)$ in $\{0, 1, \ldots, 4\}$ forms a set $X$, and the leaves in gray are those not in $X$.  Note that these labels do not take part of the definition of a valued restriction, but are needed for signatures. (b) The valued restriction $\T$ of $T$ to $X$, with the leaves of $X$ preserving their labels.  (c) An illustration of the signature of each rooted subtree of the valued restriction.  For each leaf $v$, the signature of $\T(v)$ is $(\l(v))$.  We only give the full signature for one internal node.  We chose to order its entries such that for $i \in 1, \ldots, 5$, the $i$-th coordinate is the number of children of that node that have signature $(i - 1)$, or $2$ if this value is above $2$.  The last entry is the $\s$ value of the node (as for every other node).  The parent of that node would then have one entry for each possible signature for a tree of height $2$, and only one of those entries would be set to $1$.}
    \label{fig:sig2}
\end{figure}

\subsubsection*{Valued tree signatures}

Let $\T = (T, \s)$ be a valued tree that is $s$-bounded for some $s$.  Furthermore, let $\l$ be a layering function that maps each leaf in $L(T)$ to an integer in $\{0, 1, \ldots, k\}$.
We now define the \emph{signature of $\T$ with respect to $\l$}, denoted $sig_{\l}(\T)$, which encodes some properties of $\T$ in a vector of integers.  The signature is defined recursively and depends on the height of $\T$, as illustrated in Figure~\ref{fig:sig2}.c.

If $height(\T) = 1$, then $T$ has a single node $v$, which is a leaf. In this case, define $sig_{\l}(\T) = (\l(v))$.

Now, assume that $height(\T) = h > 1$.  Let $S(s, h - 1) = \{s_1, \ldots, s_{m}\}$ denote the set of all possible signatures for an $s$-bounded valued tree of height $h - 1$ \emph{or less}, with respect to any layering function that assigns leaves to $\{0, 1, \ldots, k\}$.  We will later show that $m$ is finite.  We may assume that the $s_i$ subscripts order the the signatures lexicographically (but this is merely for concreteness, the ordering does not matter).  
Then $sig_{\l}(\T)$ is a vector or dimension $m + 1$ in which, for $i \in [m]$, the $i$-th coordinate of $sig_{\l}(\T)$ is defined as
\[
sig_{\l}(\T)[i] = \min(2, |\{v \in ch_T(r(T)) : sig_{\l}(\T(v)) = s_i \}|)
\]
Moreover, $sig_{\l}(\T)[m + 1] = \s(r(T))$.

It should be obvious that if $\T$ has height $h$, then $sig_{\l}(\T(v)) \in S(s, h - 1)$ for each $v \in ch_{T}(r(T))$, and  hence the signature summarizes every child subtree of $r(T)$.
Two signatures are \emph{equal} in the usual sense of vector equality, i.e. if they have the same length and, at every position, the values of the two vectors are equal.

In words, one may think of each integer $i \in [m]$ as a code for the $i$-th signature $s_i$, and $sig_{\l}(\T)[i]$ is the number of children of $r(T)$ whose subtree has signature $s_i$, except that we only bother remembering whether there are $0, 1$ or more of those (which we encode by $2$).  
We also reserve the last coordinate of $sig_{\l}(\T)$ for $\s(r(T))$. 
For valued trees with one node, we only need to remember the layer of the leaf.

We list the basic properties of signatures that will be needed, and then we will proceed with our $k$-leaf power algorithm.

\begin{lemma}\label{lem:basic-sigs}
Let $\T_1 = (T_1, \s_1)$ and $\T_2 = (T_2, \s_2)$ be valued trees satisfying $sig_{\l_1}(\T_1) = sig_{\l_2}(\T_2)$ for some layering functions $\l_1$ and $\l_2$.
Then the following holds:
\begin{enumerate}
    
    \item  \label{lem:basic-onechild}
    if $r(T_1)$ has a child $u$ such that $sig_{\l_1}(\T_1(u)) \neq sig_{\l_1}(\T_1(v))$ for every $v \in ch_{T_1}(r(T_1)) \setminus \{u\}$, then $r(T_2)$ has exactly one child $u'$ such that $sig_{\l_1}(\T_1(u)) = sig_{\l_2}(\T_2(u'))$;
    
    \item  \label{lem:basic-twochild}
    if $r(T_1)$ has two distinct children $u, v$ such that $sig_{\l_1}(\T_1(u)) = sig_{\l_1}(\T_1(v))$, then 
    $r(T_2)$ has at least two distinct children $u', v'$ such that 
    $sig_{\l_1}(\T_1(u)) = sig_{\l_2}(\T_2(u')) = sig_{\l_2}(\T_2(v'))$;
    
    \item  \label{lem:basic-samex}
    for any $x \in V(T_1)$, there exists $y \in V(T_2)$ such that $dist_{T_1}(x, r(T_1)) = dist_{T_2}(y, r(T_2))$ and $sig_{\l_1}(\T_1(x)) = sig_{\l_2}(\T_2(y))$.  In particular, this includes the case $x \in L(T_1)$. 
    
\end{enumerate}
\end{lemma}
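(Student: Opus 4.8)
The plan is to prove all three items by induction on the structure of the signatures, exploiting the recursive definition. The common setup: since $sig_{\l_1}(\T_1) = sig_{\l_2}(\T_2)$, these vectors have the same length, so $\T_1$ and $\T_2$ have the same height $h$ (the vector length is determined by $|S(s,h-1)|+1$, and I should first observe this determines $h$ — if $h=1$ both are single leaves with $\l_1(r(T_1)) = \l_2(r(T_2))$, and all three claims hold vacuously or trivially, so assume $h>1$). For $h>1$, fix the enumeration $S(s,h-1) = \{s_1,\ldots,s_m\}$. The key bridge is that for each $i \in [m]$, the number of children $v$ of $r(T_1)$ with $sig_{\l_1}(\T_1(v)) = s_i$, capped at $2$, equals the corresponding capped count for $r(T_2)$. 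So for every signature value $s_i$: $r(T_1)$ has zero such children iff $r(T_2)$ has zero; $r(T_1)$ has exactly one iff $r(T_2)$ has exactly one; and $r(T_1)$ has two-or-more iff $r(T_2)$ has two-or-more.

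For item~\ref{lem:basic-onechild}: the hypothesis says $r(T_1)$ has a unique child $u$ with its particular signature, i.e.\ the capped count for the value $sig_{\l_1}(\T_1(u))$ is exactly $1$. This signature is some $s_i \in S(s,h-1)$ since $u$ is a child of the root of a height-$h$ tree. By the bridge, $r(T_2)$ also has exactly one child $u'$ with $sig_{\l_2}(\T_2(u')) = s_i = sig_{\l_1}(\T_1(u))$, which is the claim. For item~\ref{lem:basic-twochild}: the hypothesis gives two children $u,v$ of $r(T_1)$ sharing a signature value $s_i$, so the capped count for $s_i$ in $\T_1$ is $2$ (it's $\geq 2$, hence capped to $2$); by the bridge the capped count in $\T_2$ is also $2$, meaning $r(T_2)$ has at least two children $u',v'$ with $sig_{\l_2}(\T_2(u')) = sig_{\l_2}(\T_2(v')) = s_i = sig_{\l_1}(\T_1(u))$.

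For item~\ref{lem:basic-samex}, I would induct on $dist_{T_1}(x, r(T_1))$. If $x = r(T_1)$, take $y = r(T_2)$: distances are both $0$ and the signatures are equal by hypothesis. Otherwise $x$ has a parent $p$ in $T_1$ with $dist_{T_1}(p, r(T_1)) = dist_{T_1}(x,r(T_1)) - 1$; by induction there is $q \in V(T_2)$ with $dist_{T_2}(q, r(T_2)) = dist_{T_1}(p, r(T_1))$ and $sig_{\l_1}(\T_1(p)) = sig_{\l_2}(\T_2(q))$. Now $x$ is a child of $p$, so $sig_{\l_1}(\T_1(x))$ appears among the children-signatures of $\T_1(p)$, hence its capped count in $\T_1(p)$ is $\geq 1$; applying the bridge at the node pair $(p,q)$ (note $p$ is internal since it has a child, and likewise $q$, because equal signatures of positive-height trees force $q$ to have the matching child), the capped count in $\T_2(q)$ is also $\geq 1$, so $q$ has a child $y$ with $sig_{\l_2}(\T_2(y)) = sig_{\l_1}(\T_1(x))$, and $dist_{T_2}(y, r(T_2)) = dist_{T_2}(q,r(T_2)) + 1 = dist_{T_1}(x, r(T_1))$. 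The case $x \in L(T_1)$ is subsumed, and it additionally gives that $y \in L(T_2)$ since a height-$1$ signature $(\l_1(x))$ can only be realized by a single-node tree.

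The main obstacle is a bookkeeping subtlety rather than a deep one: I must be careful that the recursive comparison of child-subtree signatures is always between trees lying in the same bucket $S(s, h'-1)$, so that ``same capped count per signature value'' is literally the statement $sig_{\l_1}(\T_1) = sig_{\l_2}(\T_2)$ coordinate-by-coordinate; and I must confirm at each step that the node in $T_2$ paired with an internal node of $T_1$ is itself internal, which follows because the height-$1$ signatures (singletons $(\l)$) are disjoint in form from the higher ones. Everything else is a direct unpacking of the definition of $sig_\l$, so no lengthy computation is needed.
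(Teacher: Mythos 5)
Your proposal is correct and follows essentially the same route as the paper: items 1 and 2 are read off coordinate-by-coordinate from the equal signature vectors (your ``bridge''), and item 3 is an induction on $dist_{T_1}(x, r(T_1))$ that repeatedly matches children with equal subtree signatures. The only cosmetic difference is that you run the induction bottom-up from the parent of $x$ while the paper recurses top-down into the child subtree of the root on the path to $x$; the extra care you take about vector lengths forcing equal heights and about leaf signatures being distinguishable from internal ones is sound and only makes the argument more explicit.
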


\begin{proof}
Property~\ref{lem:basic-onechild} is due to the fact that the entry corresponding to $sig_{\l_1}(\T_1(u))$ must be equal to $1$ in both the $sig_{\l_1}(\T_1)$ and $sig_{\l_2}(\T_2)$ vectors.  Property~\ref{lem:basic-twochild} is because the entry corresponding to $sig_{\l_1}(\T_1(u)) = sig_{\l_1}(\T_1(v))$ must be equal to $2$ in both the $sig_{\l_1}(\T_1)$ and $sig_{\l_2}(\T_2)$ vectors.

Property~\ref{lem:basic-samex} can be argued inductively on $dist_{T_1}(x, r(T_1))$.  If the distance is $0$, then $x = r(T_1)$ and $y = r(T_2)$ shows that the property holds.
If $dist_{T_1}(x, r(T_1)) > 0$, let $x'$ be the child of $r(T_1)$ on the path between $x$ and $r(T_1)$.  
By the previous properties, $r(T_2)$ has a child $y'$ such that $sig_{\l_1}(\T_1(x')) = sig_{\l_2}(\T_2(y'))$.
By induction, there is $y \in V(T_2(y'))$ such that $dist_{T_1(x')}(x, x') = dist_{T_2(y')}(y, y')$ and $sig_{\l_1}(\T_1(x')) = sig_{\l_2}(\T_2(y'))$.
The distance from $x$ to $r(T_1)$ is thus the same as the distance from $y$ to $r(T_2)$, and therefore $y$ show that the property holds.
\end{proof}

Importantly, we can argue that the number of possible signatures depends only on height and $s$-boundedness.  

\begin{lemma}\label{lem:nbsigs-bound}
Let $S(s, h)$ denote the set of all possible signatures for $s$-bounded valued trees of height $h$ or less, for any layering function that map leaves to $\{0, 1, \ldots, k\}$.
Then 
\begin{align*}
    |S(s, h)| \leq \begin{cases}
        k + 1 &\mbox{if $h = 1$} \\
        (s + 2) \cdot 3^{|S(s, h - 1)|}  + |S(s, h - 1)| &\mbox{otherwise}
    \end{cases}
\end{align*}
\end{lemma}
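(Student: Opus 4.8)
The bound is a straightforward counting argument that follows directly from the recursive definition of $sig_\l$, so the ``proof'' is really just a careful unpacking of the definition together with the observation that a signature vector has only finitely many possible entries in each coordinate. The plan is to induct on $h$.

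\textbf{Base case.} For $h = 1$, a valued tree of height $1$ is a single leaf $v$, and by definition $sig_\l(\T) = (\l(v))$ with $\l(v) \in \{0, 1, \ldots, k\}$. Hence there are at most $k+1$ distinct signatures, giving $|S(s,1)| \leq k+1$. I would also note here (or earlier) that this shows $m$ is finite, so that the recursive definition of signatures for larger heights is well-posed --- this is the ``we will later show that $m$ is finite'' promise from the text, discharged by the induction itself.

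\textbf{Inductive step.} Fix $h > 1$ and assume $|S(s, h-1)|$ is finite; write $m = |S(s,h-1)|$. Let $\T = (T,\s)$ be an $s$-bounded valued tree of height \emph{exactly} $h$ and $\l$ a layering function into $\{0,\ldots,k\}$. Every child subtree $\T(v)$ for $v \in ch_T(r(T))$ is itself $s$-bounded and has height at most $h-1$, so $sig_\l(\T(v)) \in S(s,h-1)$. The signature $sig_\l(\T)$ is then a vector of length $m+1$: the first $m$ coordinates each take a value in $\{0,1,2\}$ (by the $\min(2,\cdot)$ clamp), and the last coordinate is $\s(r(T))$, which by $s$-boundedness lies in $\{0,1,\ldots,s\} \cup \{\infty\}$, a set of size $s+2$. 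So the number of distinct signatures arising from valued trees of height exactly $h$ is at most $3^m \cdot (s+2)$. Finally, $S(s,h)$ collects signatures of trees of height \emph{at most} $h$, i.e. those of height exactly $h$ together with $S(s,h-1)$ (note a height-$1$ tree contributes a length-$1$ vector, so there is no double counting across height classes, though even crude over-counting via union suffices); hence $|S(s,h)| \leq (s+2)\cdot 3^m + m = (s+2)\cdot 3^{|S(s,h-1)|} + |S(s,h-1)|$, as claimed. In particular $|S(s,h)|$ is finite for all $h$, which retroactively justifies that $m$ in the definition of signatures is finite.

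\textbf{Main obstacle.} There is essentially no obstacle --- the only subtlety is bookkeeping: making sure the count of the first $m$ coordinates is ``$3$ possibilities each'' and the last is ``$s+2$ possibilities,'' and being careful that $S(s,h)$ is the cumulative set (heights $\leq h$) rather than just height exactly $h$, so that the additive ``$+\,|S(s,h-1)|$'' term is needed. One should also be slightly careful that the recursive definition of $sig_\l(\T)$ for height $h$ presupposes $S(s,h-1)$ is a well-defined \emph{finite} indexed set; this is exactly what the induction supplies, so the argument should be phrased so that finiteness and the bound are proved simultaneously by induction on $h$.
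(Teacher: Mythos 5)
Your proposal is correct and follows essentially the same argument as the paper: the base case counts the $k+1$ possible leaf layers, and the inductive step counts $3$ choices for each of the $|S(s,h-1)|$ coordinates, times $s+2$ choices for the final $\s(r(T))$ entry, plus the additive $|S(s,h-1)|$ term for trees of smaller height. Your extra remark that the induction simultaneously establishes finiteness of $m$ (making the recursive definition of signatures well-posed) is a point the paper leaves implicit, but it does not change the substance of the argument.
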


\begin{proof}
All valued trees with $h = 1$ have a signature of the form $(\l(v))$, and $\l(v)$ can take up to $k + 1$ values.
For $h > 1$, consider an $s$-bounded valued tree of height $h$.  For each of the $S(s, h - 1)$ possible signatures of valued trees of height $h - 1$ or less, the signature vector has an entry in $\{0, 1, 2\}$, i.e. $3$ possible values for each element of $S(s, h - 1)$.
Moreover, the last entry is $\s(r(T))$, which can take up to $s + 2$ values in $\{0, 1, \ldots, s, \infty\}$.
This counts the number of signatures for a valued tree of height exactly $h$. Since $S(s, h)$ includes signatures for valued trees of height $h$ \emph{or less}, we must add the term $|S(s, h - 1)|$ to count the valued trees of height $h - 1$ or less.
\end{proof}

We note that the upper bound on $|S(s, h)|$ can be computed in time proportional to $h$ (omitting the bit manipulations required to handle the memory required to represent the large value of $|S(s, h)|$), since for each $h \geq 2$, it suffices to store $|S(s, h - 1)|$ after computing it, and use it to compute $|S(s, h)|$ in constant time, and repeat.

We now come back to $k$-leaf powers, and show that in all the cases that will be of interest to us, $s$ and $h$ are bounded by a function of $k$.

\begin{lemma}\label{lem:kbounded-and-height-general}
Assume that $G$ is a connected $k$-leaf power and let 
$X \subseteq V(G)$.  Assume that there is $C \subseteq X$ such that $C$ is a clique and $X \setminus C \subseteq N_G(C)$.
Let $T^*$ be a $k$-leaf root of $G$, and let $(T, \s)$ be the valued restriction of $T^*$ to $X$.  
Then $\T$ is $k$-bounded and $height(\T) \leq 3k$.
\end{lemma}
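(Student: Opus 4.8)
The plan is to bound the two quantities separately, both by tracking distances in $T^*$. For $k$-boundedness, I would let $v \in V(T) \setminus L(T)$ with $\s(v) \neq \infty$, so there is a child $x$ of $v$ in $T^*$ that was removed by the restriction, and a leaf $l \in L(T^*(x))$ with $\s(v) = dist_{T^*}(v, l)$; I must show $dist_{T^*}(v, l) \leq k$. Since $v \in V(T) = V(T^*|X)$, $v$ lies on a shortest path in $T^*$ between two elements of $X$; in particular $v$ has at least one descendant (through a non-removed child) that is a leaf $w \in X$, and since $l$ descends from the removed child $x$, the path from $l$ to $w$ in $T^*$ passes through $v$, giving $dist_{T^*}(l, w) = dist_{T^*}(l,v) + dist_{T^*}(v,w) \geq dist_{T^*}(v,l) = \s(v)$. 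So it suffices to find, for this particular $v$, a leaf $w \in X$ descending from $v$ in $T^*$ with $dist_{T^*}(v,w) = 0$ — which of course is false in general. The correct move is: among all leaves of $X$ descending from $v$ in $T^*$, pick the closest one, call it $w_v$; I then need $dist_{T^*}(v, w_v) + \s(v) \leq k$ would follow if $w_v l \in E(G)$, i.e. if $dist_{T^*}(w_v, l) \leq k$. This is where the clique hypothesis on $C$ and the condition $X \setminus C \subseteq N_G(C)$ must enter: every leaf of $X$ is adjacent in $G$ to some vertex of $C$ (a vertex of $C$ is adjacent to all of $C$ trivially, and $X \setminus C \subseteq N_G(C)$), so all leaves of $X$ are "close" to the clique $C$ in $T^*$, which forces the subtree $T^*|X$ to be shallow and forces any removed leaf $l$ hanging off an internal node of $T^*|X$ to be within distance $k$ of some $X$-leaf.

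Concretely, here is the cleaner route I would take. First establish a structural fact: since $C$ is a clique, $dist_{T^*}(u, u') \leq k$ for all $u, u' \in C$, so $C$ is "clustered" in $T^*$; let $r_C$ be the least common ancestor in $T^*$ of $C$ — actually I'd rather work with $T = T^*|X$ directly and locate the node $p$ of $T$ that is the LCA of $C$ in $T$. Every leaf $x \in X$ satisfies $x \in C$ or $x$ has a neighbor $c \in C$, hence $dist_{T^*}(x, c) \leq k$; combined with $dist_{T^*}(c, p) \leq$ (something bounded), this bounds $dist_{T^*}(x, p)$ and hence $dist_T(x, p) \leq$ roughly $2k$ or so. That immediately controls the height: $height(T) = 1 + \max_{x \in L(T)} dist_T(r(T), x)$, and every leaf is within $O(k)$ of $p$, and $p$ is within $O(k)$ of the root $r(T)$ (because $r(T)$, lying on a path between two $X$-leaves, is itself within $O(k)$ of those leaves, hence of $p$), giving $height(T) \leq 3k$ after a careful count. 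For $k$-boundedness, for an internal $v$ of $T$ with a removed child subtree containing leaf $l$: $v$ lies between two leaves of $X$, at least one of which, $w$, descends from $v$ in $T^*$; the path $l$–$w$ goes through $v$, so $\s(v) \leq dist_{T^*}(v, l) \leq dist_{T^*}(w, l)$, and I must show $wl \in E(G)$, equivalently $dist_{T^*}(w,l) \leq k$. Here $w \in X$ so $w$ is within $k$ of some $c \in C$; and $l$ — hmm, $l \notin X$, so I cannot directly say $l$ is near $C$. This is the genuine obstacle.

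I expect the main obstacle to be exactly bounding $\s(v)$, i.e. controlling how far a \emph{pruned} leaf $l$ can be from its retained ancestor $v$. The resolution should be: $v$ has a retained descendant leaf $w \in X$ reached through a child of $v$ different from the one leading to $l$ (since $v \in V(T|X)$ it has at least two "branches" into $X$, or it is on the path and has one branch down plus the path up — I need to handle the case $v = r(T)$ or $v$ having all $X$-descendants on one side separately, noting that if $v$ has only one retained child then $v$ need not have been kept unless $v$ is an ancestor-junction; actually $V(T|X)$ keeps $v$ only if $v$ is on a shortest path between two $X$-leaves, so $v$ either has two retained children or has one retained child and a retained leaf strictly above it is impossible — rather $v$ could be the root). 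Assuming $v$ has a retained leaf $w$ strictly below it not through $l$'s branch: then for the $k$-leaf root to be valid we need $dist_{T^*}(l, w) > k$ to be \emph{false}, which is not automatic — but the point is we are only trying to prove $\s(v) \le k$, and $\s(v) = dist_{T^*}(v,l)$; if $dist_{T^*}(v, l) > k$ then since $dist_{T^*}(v, w) \ge 1$ we'd get $dist_{T^*}(l,w) > k$ so $lw \notin E(G)$ — fine, that is consistent, so this line does \emph{not} directly bound $\s(v)$. The actual bound must come from: $l$'s nearest $X$-leaf relative: take any $x' \in X$; if $dist_{T^*}(v, l) > k$, consider the deepest retained node; ultimately I believe one shows $v$ itself must be within distance $k$ of \emph{every} leaf of $X$ in its subtree, because all $X$-leaves are pairwise within $2k$+const via the clique, and then $\s(v) \le dist_{T^*}(v, l)$ where $l$'s branch, if at distance $> k$ from $v$, would make $l$ at distance $>k$ from all of $C$ — but $l \notin X$ so that's allowed. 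I would therefore instead argue the \emph{height} bound first, deduce that $v$ is within $3k$ of $r(T)$, and then note $\s(v) \le k$ must follow from $v$ being within $2k$-ish of $C$ together with the fact that the removed child of $v$ carries leaves that, in $T^*$, are blocked from the rest of $G$: if $dist_{T^*}(v,l) > k$ then $l$ is non-adjacent to all of $X$, but since $G$ is connected $l$ connects to the rest of $G$ through... this requires tracing the global structure, and is the step I'd spend the most care on, likely invoking that $T^*|X$'s nodes all being near $C$ forces any hidden leaf $l$ with $dist_{T^*}(v,l)>k$ to be separated by $C$ from $w$, contradicting nothing unless combined with $w$'s proximity to $C$ making $dist_{T^*}(v, l) \le dist_{T^*}(v,c)+dist_{T^*}(c,l)$ and $dist_{T^*}(c,l)\le k$ because $c \in C \subseteq X$ is a leaf adjacent in $G$ to the $C$-neighbor that $l$... no. I will present the argument via: every node of $T|X$ lies on a path between two $X$-leaves each within $k$ of $C$, bounding everything by $3k$ for height; and for $\s$, showing $\s(v)\le k$ by exhibiting an $X$-leaf $w$ below $v$ with $dist_{T^*}(v,w)$ small enough that $dist_{T^*}(v,l) \le k$ is forced, with the key input being that a pruned subtree at $v$ would otherwise have to attach a component of $G$ cut off by a vertex set of size... — and I flag this as the delicate point to get right.
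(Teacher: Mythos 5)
Your height bound follows the paper's route (cluster the clique $C$ in $T^*$, pull every leaf of $X\setminus C$ within distance $k$ of $T^*|C$, and push the root within $k$ of that cluster), and although you leave the ``careful count'' undone, that half is sound. The $k$-boundedness half, however, has a genuine gap that you yourself flag and never close: you do not actually bound $\sigma(v)$. Every route you try attempts to relate the single closest hidden leaf $l$ to $X$ or to $C$ --- via an edge $wl$, via proximity of $l$ to the clique, via $l$ being near some retained leaf --- and, as you correctly observe each time, none of these is forced: a pruned leaf $l\notin X$ need not be adjacent to, or near, anything in $X$, and $dist_{T^*}(v,l)>k$ is perfectly consistent with every adjacency you check. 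The proposal ends without a proof that $\sigma(v)\le k$.

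The missing step is a connectivity argument applied to the \emph{whole} set of hidden leaves under $v$, not to one of them. Let $L^*_v=\bigcup_{x\in ch_{T^*}(v)\setminus ch_{T}(v)}L(T^*(x))$. Since $\sigma(v)$ is by definition the \emph{minimum} of $dist_{T^*}(v,l)$ over $l\in L^*_v$, the assumption $\sigma(v)>k$ with $\sigma(v)\neq\infty$ means $L^*_v\neq\emptyset$ and \emph{every} leaf of $L^*_v$ is at distance more than $k$ from $v$ in $T^*$. Any leaf of $T^*$ outside $L^*_v$ is joined to a leaf of $L^*_v$ by a path passing through $v$, so every such pair is at distance more than $k$ in $T^*$; hence no vertex of $L^*_v$ has a neighbour outside $L^*_v$ in $G$. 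As $L^*_v$ is nonempty and disjoint from $X$ (which is itself nonempty, since $v$ is an internal node of $T=T^*|X$), this disconnects $G$, a contradiction. Note that this half of the lemma uses only the connectedness of $G$; the clique hypothesis on $C$ enters only in the height bound, which is a sign that repeatedly trying to route the $\sigma$-bound through $C$, as your proposal does, is the wrong lever.
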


\begin{proof}
Let us first argue that $\T$ has height at most $3k$.
Consider the tree $T|C$.
Then $T|C$ has height at most $k$, as otherwise the members of the $C$ clique cannot be at distance at most $k$ from each other.
Moreover, any $x \in X \setminus C$ is at distance at most $k$ from some node of $T|C$, as otherwise $x$ cannot be at distance at most $k$ from any member of $C$.  Thus in $T$, $r(T)$ is at distance at most $k$ from $r(T|C)$, and any leaf in $X \setminus C$ is at distance at most $2k$ from $r(T|C)$, since the farthest possible leaf is at distance $k$ below the deepest node of $T|C$.
Hence $height(\T) \leq 3k$.

As for $k$-boundedness, 
suppose that $\T$ is not $k$-bounded.  
Then there is some $v \in V(T) \setminus L(T)$ such that $\s(v) > k$ but $\s(v) \neq \infty$.
Let 
\[L^*_v = \bigcup_{v' \in ch_{T^*}(v) \setminus ch_T(v)} L(T^*(v'))\] 
Since $(T, \s)$ is the valued restriction of $T^*$ to $X$, each leaf $w \in L^*_v$ is in $V(G) \setminus X$ and is at distance greater than $k$ from $v$.
Moreover, we know that at least one such $w$ exists, as otherwise we would have $\s(v) = \infty$.
Then $w$ only has neighbors in $L^*_v$, since any other neighbor would require a path of length at most $k$ that passes through $v$, which is not possible.  
It follows that every member of $L^*_v$ only has neighbors in $L^*_v$.  In other words, $L^*_v$ is disconnected from the rest of the graph, a contradiction since $G$ is connected.  Therefore, $\T$ is $k$-bounded.
\end{proof}

Note that the $k$ and $3k$ bounds could  be slightly improved with a more detailed analysis, but we would still obtain a power tower behavior for the number of possible signatures.

\subsection{Pruning subsets with redundant $k$-leaf root signatures}

We now establish the connections between similar structures and signatures.
Let $\S = (\C, \Y, z, \L)$ be a similar structure of $G$.  Unless mentioned otherwise, we will always assume that $\C = \{C_1, \ldots, C_d\}, \Y = \{Y_1, \ldots, Y_d\}$ and $\L = \{\l_1, \ldots, \l_d\}$ for some $d$.
We will mainly look at the $k$-leaf roots of the $G[\hC{i}]$ subgraphs.
For $i \in [d]$, let $LR_z(\hC{i})$ be the set of all $k$-leaf roots of $G[\hC{i}]$ whose root is the unique neighbor of $z$ in the tree.  Note that $LR_z(\hC{i})$ captures every possible $k$-leaf root, except that a specific node is chosen as the root.

\begin{lemma}\label{lem:kbounded-and-height}
Assume that $G$ is a connected $k$-leaf power and let 
$\S = (\C, \Y, z, \L)$ be a similar structure of $G$.
For $i \in [d]$, let $T^* \in LR_z(\hC{i})$, and let $\T = (T, \s)$
be the valued restriction of $T^*$ to $C_i \cup \{z\}$.
Then $\T$ is $k$-bounded and $height(\T) \leq 3k$.
\end{lemma}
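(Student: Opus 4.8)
The plan is to reduce Lemma~\ref{lem:kbounded-and-height} to the more general Lemma~\ref{lem:kbounded-and-height-general}, which has already been established. The statement of Lemma~\ref{lem:kbounded-and-height-general} requires a connected $k$-leaf power, a vertex subset $X$, and a subset $C \subseteq X$ that is a clique with $X \setminus C \subseteq N_G(C)$. So the key is to exhibit the right instantiation and verify its hypotheses.

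First I would set $H = G[\hC{i}]$, and observe that $H$ is a $k$-leaf power (since $T^* \in LR_z(\hC{i})$ is by definition a $k$-leaf root of it) and that $H$ is connected: indeed, $z$ is adjacent to every vertex of $C_i$ by Property~\ref{cut:znbrhood} of similar structures, so $C_i \cup \{z\}$ lies in one component, and every vertex of $Y_i$ has a neighbor in $C_i$ (each $Y_i$ is a union of components of $G - C^*$ whose neighborhood lies in $C_i$, by Property~\ref{cut:yi}), so $Y_i$ attaches to that component as well. Next I would take $X := C_i \cup \{z\}$, which is exactly the set we restrict $T^*$ to, and I would take $C := C_i \cup \{z\}$ itself as the clique. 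To see $C$ is a clique in $H$: any two vertices $x, y \in C_i$ have layers $\l_i(x), \l_i(y) \in \{1, \dots, k\}$, but since $C_i$ is a single set, Property~\ref{cut:layers-leq-k} does not directly give adjacency for arbitrary layer sums; however, $x$ and $y$ are both at distance at most $k$ from $z$ in $T^*$ and... actually the cleanest route is to note that all of $C_i$ are neighbors of $z$, so $C_i \subseteq N_G(z)$, hence $C_i \cup \{z\}$ is contained in $N_G[z]$; but that alone does not make it a clique. So instead I would argue directly from $T^*$: since every vertex of $C_i$ is adjacent to $z$ and $T^*$'s root is the neighbor of $z$, every leaf of $C_i$ is at distance at most $k-1$ from $r(T^*)$... this needs care. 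The safe choice is $C := \{z\}$ together with the observation that actually we only need \emph{some} clique $C$ with $X \setminus C \subseteq N_G(C)$; taking $C = \{z\}$ trivially works since $\{z\}$ is a clique and $X \setminus \{z\} = C_i \subseteq N_G(z)$ by Property~\ref{cut:znbrhood}. Then Lemma~\ref{lem:kbounded-and-height-general} applied to $H$, $X = C_i \cup \{z\}$, $C = \{z\}$, and the $k$-leaf root $T^*$ of $H$, yields precisely that the valued restriction $\T$ of $T^*$ to $C_i \cup \{z\}$ is $k$-bounded and has $height(\T) \leq 3k$.

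The one subtlety I would double-check is that Lemma~\ref{lem:kbounded-and-height-general} is stated with $T^*$ a $k$-leaf root of $G$ and the valued restriction taken with respect to $T^*$; here the ambient graph is $H = G[\hC{i}]$ rather than $G$, and $\hC{i} = C_i \cup Y_i \cup \{z\}$ plays the role of the full vertex set. One must confirm that the definition of valued restriction and the connectivity argument inside the proof of Lemma~\ref{lem:kbounded-and-height-general} only ever reference $H$ and $T^*$, never $G$ globally — which is the case, since that proof only uses that the ambient graph is a connected $k$-leaf power with $T^*$ a $k$-leaf root of it. The main (and only real) obstacle is therefore purely bookkeeping: making sure the hypotheses of Lemma~\ref{lem:kbounded-and-height-general} are met by $(H, X, C, T^*)$, in particular the connectivity of $H$, and then the result transfers verbatim. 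No new structural insight beyond Lemma~\ref{lem:kbounded-and-height-general} is needed.
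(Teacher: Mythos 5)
Your proposal is correct and matches the paper's proof exactly: the paper likewise notes that $G[\hC{i}]$ is connected (since $Y_i$ consists of components with neighbors in $C_i$ and every vertex of $C_i$ is adjacent to $z$) and then applies Lemma~\ref{lem:kbounded-and-height-general} with $C = \{z\}$ and $X = C_i \cup \{z\}$. The brief detour toward taking $C = C_i \cup \{z\}$ as the clique was unnecessary, but you correctly abandoned it in favor of the choice $C = \{z\}$, which is all that is needed.
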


\begin{proof}
Note that $G[\hC{i}]$ is connected, since $Y_i$ consists of connected components that have neighbors in $C_i$, and all vertices in $C_i$ have $z$ as a neighbor.
Thus, we can apply Lemma~\ref{lem:kbounded-and-height-general}, since $\{z\}$ is a clique and $C_i \subseteq N_G(z)$.
\end{proof}

Together, Lemma~\ref{lem:nbsigs-bound} and Lemma~\ref{lem:kbounded-and-height} allow us to 
bound the number of possible signatures of $k$-leaf roots of $G[\hC{i}]$ subgraphs by $|S(k, 3k)|$, which grows quickly but is a function of $k$ only.
The last piece we need is that all the $(\hC{i})$'s have $k$-leaf roots with the same signatures.

Let $s \in S(k, 3k)$ be a possible signature for a $k$-bounded valued tree of height at most $3k$, for any layering function.
We say that $\hC{i}$ \emph{accepts} $s$ if there exists
$T^* \in LR_z(\hC{i})$ such that 
$sig_{\l_i}(\T) = s$, where $\T$ is the valued restriction of $T^*$ to $C_i \cup \{z\}$.  

We then define
\[
accept(\S, C_i) = \{s \in S(k, 3k) : \hC{i} \mbox{ accepts }  s \}
\]
It is important to note that by Lemma~\ref{lem:kbounded-and-height}, for any $T^* \in LR_z(\hC{i})$, $sig_{\l_i}(\T) \in S(k, 3k)$, and therefore $sig_{\l_i}(\T)  \in accept(\S, C_i)$, where $\T$ is the valued restriction of $T^*$ to $C_i \cup \{z\}$.  In other words, $accept(\S, C_i)$ captures the signature of \emph{every} $k$-leaf root of $G[\hC{i}]$.  

We need one last definition.

\begin{definition}
A similar structure $\S = (\C, \Y, z, \L)$ of $G$ is called \emph{homogeneous}
if, for each $C_i, C_j \in \C$, $accept(C_i) = accept(C_j) \neq \emptyset$.
\end{definition}

For our purposes, we need a homogeneous similar structure of size at least $3|S(k, 3k)|$.  If all $k$-leaf roots have large enough arity, this is guaranteed to exist.

\begin{lemma}\label{lem:hom-struct}
Let $G$ be a connected $k$-leaf power. 
Let $d = 3|S(k, 3k)| \cdot 2^{|S(k, 3k)|}$, and assume that $G$ admits a $k$-leaf root of arity at least $d + 1$.
Then there is a \emph{homogeneous} similar structure $\S = (\C, \Y, z, \L)$ of $G$ such that $|\C| = 3|S(k, 3k)|$.  Moreover, for each $C_i \in \C$, $|C_i| \leq d^k$
and $G[C_i \cup Y_i \cup \{z\}]$ has maximum degree at most $d^k$.
\end{lemma}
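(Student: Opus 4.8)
The plan is to start from the $k$-leaf root $T$ of arity at least $d+1$ and apply Lemma~\ref{lem:exists-similar} with the chosen value $d = 3|S(k,3k)| \cdot 2^{|S(k,3k)|}$ to obtain a similar structure $\S' = (\C', \Y', z, \L')$ with $|\C'| = d$, where each $|C_i| \leq d^k$ and each $G[C_i \cup Y_i \cup \{z\}]$ has maximum degree at most $d^k$. The point of this first step is that Lemma~\ref{lem:exists-similar} already gives us a similar structure of large size with the right degree bounds; what it does \emph{not} give us is homogeneity. So the remaining work is to pass to a large sub-collection on which the $accept$ sets all coincide and are non-empty, and then check that restricting to a sub-collection still yields a valid similar structure.

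The core of the argument is a pigeonhole step. First I would observe that $G$ being a connected $k$-leaf power forces, for each $i$, that $G[C_i \cup Y_i \cup \{z\}]$ is itself a (connected) $k$-leaf power, hence admits at least one $k$-leaf root; choosing the root to be the unique neighbor of $z$ (possible since $z$ is a leaf adjacent to everything in $C_i$), we get that $LR_z(C_i \cup Y_i \cup \{z\})$ is non-empty, so $accept(\S', C_i) \neq \emptyset$. By Lemma~\ref{lem:kbounded-and-height} every such $k$-leaf root's valued restriction to $C_i \cup \{z\}$ is $k$-bounded of height at most $3k$, so $accept(\S', C_i) \subseteq S(k, 3k)$. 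Thus each of the $d$ indices $i$ is labeled by a non-empty subset of $S(k,3k)$, of which there are at most $2^{|S(k,3k)|}$ possibilities. Since $d = 3|S(k,3k)| \cdot 2^{|S(k,3k)|}$, by pigeonhole there is a sub-collection $I \subseteq [d]$ with $|I| = 3|S(k,3k)|$ such that $accept(\S', C_i)$ is the same non-empty set for all $i \in I$. Relabel so that $I = \{1, \ldots, 3|S(k,3k)|\}$ and set $\C = \{C_i : i \in I\}$, $\Y = \{Y_i : i \in I\}$, $\L = \{\l_i : i \in I\}$, keeping the same $z$.

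The last step is the routine but necessary verification that $\S = (\C, \Y, z, \L)$ is still a similar structure of $G$. The danger here is that the defining conditions of a similar structure are stated relative to $C^* = \bigcup C_i$, and shrinking $\C$ enlarges $G - C^*$, so the connected components change: some vertices that were in $Y_j$ for a discarded index $j$ may now merge into other components, and in particular into the component $X_z$. I would argue that this is harmless: the new $G - C^*$ is obtained from the old one by adding back the vertices $\bigcup_{j \notin I} (C_j \cup Y_j)$, and since for $j \notin I$ the set $C_j \cup Y_j$ had all its neighbors inside $C_j \cup Y_j \cup \{z\} \cup C^*_{\text{old}}$, adding these vertices back can only attach them (through $z$, or through their old $C_j$ which is gone) to the component containing $z$; they cannot create a new component touching two surviving $C_i$'s, nor attach to any surviving $Y_i$. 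Hence the $Y_i$ for $i \in I$ are unchanged, $X_z$ only grows, condition~\ref{cut:yi}, \ref{cut:znbrhood} and \ref{cut:ccs} still hold, and conditions~\ref{cut:zlayer}--\ref{cut:layers-gt-k} on the layering functions are preserved verbatim since they only ever refer to pairs of indices and are inherited from $\S'$. Finally, homogeneity holds by construction: $accept(\S, C_i) = accept(\S', C_i)$ (the relevant subgraphs $G[C_i \cup Y_i \cup \{z\}]$ and their $k$-leaf roots are literally unchanged), and these are all equal and non-empty. The degree bounds $|C_i| \leq d^k$ and $\Delta(G[C_i \cup Y_i \cup \{z\}]) \leq d^k$ carry over directly from Lemma~\ref{lem:exists-similar}.

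The main obstacle I expect is precisely the bookkeeping in the last step: making sure that restricting to a sub-collection does not break the component-based conditions~\ref{cut:yi}--\ref{cut:ccs}. One has to be a little careful that a discarded $C_j \cup Y_j$ does not, upon reinsertion into $G - C^*$, become a component with neighbors in two or more surviving $C_i$'s — which would violate condition~\ref{cut:ccs}. This cannot happen because all neighbors of $C_j \cup Y_j$ outside itself lay in $C_j \cup \{z\}$, and $C_j$ is being deleted while $z \notin C^*$; so the reinserted block only ever connects to the $z$-component. Everything else is a direct quotation of earlier lemmas plus a one-line pigeonhole count.
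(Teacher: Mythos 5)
Your proposal is correct and follows essentially the same route as the paper's proof: apply Lemma~\ref{lem:exists-similar} with $d = 3|S(k,3k)|\cdot 2^{|S(k,3k)|}$, bound the number of possible $accept$ sets by $2^{|S(k,3k)|}$ via Lemma~\ref{lem:kbounded-and-height}, extract a sub-collection of size $3|S(k,3k)|$ by pigeonhole, and verify that the sub-collection still forms a similar structure. Your verification that the discarded $C_j \cup Y_j$ blocks merge into the component of $z$ (and hence cannot violate conditions~\ref{cut:yi}--\ref{cut:ccs}) is in fact more detailed than the paper's one-line remark to the same effect.
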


\begin{proof}
By Lemma~\ref{lem:exists-similar}, there exists 
a similar structure $\S' = (\C', \Y', z, \L')$ of $G$ with $|\C'| = d$, with each $C'_i \in \C'$ having $|C'_i| \leq d^k$ and $G[C'_i \cup Y'_i \cup \{z\}]$ having maximum degree $d^k$ or less.  
Denote $\C' = \{C'_1, \ldots, C'_d\}$, $\Y' = \{Y'_1, \ldots, Y'_d\}$ and
$\L' = \{\l'_1, \ldots, \l'_d\}$.
The problem is that $\S'$ might not be homogeneous.

Notice that by Lemma~\ref{lem:kbounded-and-height}, for any $C'_i \in \C$, $accept(\S', C'_i)$ is a subset of $S(k, 3k)$.  
Thus the number of possible distinct $accept(\S, C'_i)$ sets is $2^{|S(k, 3k)|}$.  
By the pigeonhole principle, the fact that $|\C'| = d = 3|S(k, 3k)| 2^{|S(k, 3k)|}$ implies that there is 
some $\C \subseteq \C'$ such that $|\C| = 3|S(k, 3k)|$
and such that $accept(\S', C'_i) = accept(\S', C'_j)$ 
for every $C'_i, C'_j \in \C$.
Note that since $G$ is a $k$-leaf power, none of these $accept$ sets is empty, since each $G[\hC{i}]$ is a $k$-leaf power.

We thus know that there exists a set of indices $h_1, \ldots, h_{3|S(k, 3k)|}$ such that $\C = \{C'_{h_1}, C'_{h_2}, \ldots, C'_{h_{3|S(k, 3k)|}}\}$.  Let $\Y = \{Y'_{h_1}, Y'_{h_2}, \ldots, Y'_{h_{3|S(k, 3k)|}}\}$, and similarly let $\L = \{\l'_{h_1}, \l'_{h_2}, \ldots, \l'_{h_{3|S(k, 3k)|}}\}$.
One can easily verify that all properties of a similar structure hold for $\S = (\C, \Y, z, \L)$ (in particular, all the $C'_i \cup Y'_i$ that were not kept in $\C$ now join the same connected component as $z$).  We have $|\C| = 3|S(k, 3k)|$.
This similar structure is homogeneous.
Finally, the bounds for $|C_{h_i}| \leq d^k$
and the maximum degree at most $d^k$ for $G[C_{h_i} \cup Y_{h_i} \cup \{z\}]$ still hold for every $h_i$.
\end{proof}

We can finally present the main result of this section.

\begin{theorem}\label{thm:iffc1}
Let $G$ be a connected graph.
Let $\S = (\C, \S, z, \L)$ be a homogeneous similar structure of $G$, with $\C = \{C_1, \ldots, C_l\}, \Y = \{Y_1, \ldots, Y_l\}$ and $l = 3|S(k, 3k)|$.  
%
Then $G$ is a $k$-leaf power if and only if $G - (C_1 \cup Y_1)$ is a $k$-leaf-power.  
\end{theorem}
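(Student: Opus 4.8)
The statement has a trivial direction and a substantial one.

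\medskip

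\emph{($\Rightarrow$).} If $G$ is a $k$-leaf power with $k$-leaf root $T$, then $T|(V(G)\setminus(C_1\cup Y_1))$ is a $k$-leaf root of $G-(C_1\cup Y_1)$, because restrictions preserve distances; this is just the hereditary property noted after the definition of $k$-leaf roots.

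\emph{($\Leftarrow$).} Assume $G':=G-(C_1\cup Y_1)$ is a $k$-leaf power, and let $T'$ be a $k$-leaf root of it. Write $\C=\{C_1,\dots,C_l\}$, $\Y=\{Y_1,\dots,Y_l\}$, $\L=\{\l_1,\dots,\l_l\}$ with $l=3|S(k,3k)|$. Since $z\in V(G')$ is a leaf of $T'$ with a unique neighbour $p$, I would re-root $T'$ at $p$, so that $z$ becomes a leaf-child of $p$ and every path of $T'$ reaching $z$ passes through $p$. For each $i\in\{2,\dots,l\}$ the set $\hC{i}$ is disjoint from $C_1\cup Y_1$, hence $G[\hC{i}]=G'[\hC{i}]$ and $T_i^*:=T'|(\hC{i})$ is a $k$-leaf root of $G[\hC{i}]$ that is rooted at $p$, the neighbour of $z$; thus $T_i^*\in LR_z(\hC{i})$. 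Let $\T_i$ be the valued restriction of $T_i^*$ to $C_i\cup\{z\}$ and set $s_i:=sig_{\l_i}(\T_i)$. By Lemma~\ref{lem:kbounded-and-height} each $\T_i$ is $k$-bounded of height at most $3k$, so $s_i\in S(k,3k)$. Since there are $l-1=3|S(k,3k)|-1>2|S(k,3k)|$ indices, the pigeonhole principle yields a signature $s\in S(k,3k)$ and (at least) three indices $a,b,c$ with $s_a=s_b=s_c=s$. By homogeneity $s\in accept(\S,C_a)=accept(\S,C_1)$, so there is a $k$-leaf root $T_1^*\in LR_z(\hC{1})$ of $G[\hC{1}]$ whose valued restriction $\T_1$ to $C_1\cup\{z\}$ satisfies $sig_{\l_1}(\T_1)=s$.

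\medskip

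\emph{The grafting.} The plan is to obtain a $k$-leaf root $T$ of $G$ by inserting the leaves of $C_1\cup Y_1$ into $T'$, using $T_1^*$ as a template and the component $C_a$ inside $T'$ as an anchor. Because $sig_{\l_1}(\T_1)=sig_{\l_a}(\T_a)=s$, Lemma~\ref{lem:basic-sigs} supplies, for every node $x$ of $\T_1$, a node of $\T_a$ at the same distance from the root and with the same subtree-signature; applied to leaves this shows $C_1$ and $C_a$ carry exactly the same set of layer values, and in general it yields a depth-preserving, root-preserving (and $z\mapsto z$) correspondence between a sub-forest of $\T_1$ and a sub-forest of $\T_a\subseteq T'$. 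Descending from $p$, I match each child subtree of the current node of $\T_1$ with a child subtree of the corresponding node of $T'$ of equal signature; whenever $\T_1$ has more children of a given signature than the matching makes available, I attach fresh copies of the corresponding subtree of $T_1^*$ — bringing in new $C_1$-leaves together with the $Y_1$-leaves hanging below them, positioned exactly as in $T_1^*$. The $\s$-values of $\T_1$ (which coincide with those of $\T_a$, being part of the common signature) prescribe how far below each node the pruned $Y_1$-leaves must sit. Since this only adds new branches, $T|V(G')=T'$ and $T|(\hC{1})\simeq_L T_1^*$.

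\medskip

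\emph{Verification and the main obstacle.} It then remains to check $dist_T(u,v)\le k\iff uv\in E(G)$ for all distinct $u,v$. For $u,v\in V(G')$ this holds because $T|V(G')=T'$; for $u,v\in\hC{1}$ it holds because $T|(\hC{1})\simeq_L T_1^*$. The delicate case is $x\in C_1\cup Y_1$ against $u\in V(G')\setminus\{z\}$: one arranges the grafting so that, for $x\in C_1$, the distances from $x$ to the vertices outside $C_1\cup Y_1\cup C_a\cup Y_a$ coincide with those of a vertex $x'\in C_a$ with $\l_a(x')=\l_1(x)$; then, using that vertices of a common layer have the same neighbourhood outside their own $C_i\cup Y_i$ (Property~\ref{cut:layerssame}), that adjacency between distinct $C_i$'s is governed solely by the layer sum (Properties~\ref{cut:layers-leq-k} and~\ref{cut:layers-gt-k}), and that $C_1$ has no edges to any $Y_j$ with $j\ne 1$, the adjacency condition transfers from $x'$ to $x$; for $x\in Y_1$ the $\s$-values keep $x$ at distance more than $k$ from everything outside $C_1\cup Y_1$, exactly as in $T_1^*$. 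I expect this grafting-and-verification step to be the main obstacle: a signature pins down a valued tree only up to collapsing families of three-or-more identical child subtrees, so $\T_1$ and $\T_a$ need not be value-isomorphic; the naive idea of co-locating $C_1$ with $C_a$ breaks the (non-)adjacencies of $C_1$ to $Y_a$ and to the other $C_j$'s, and tree depths need not equal layers. The insertion must therefore simultaneously respect (i) the internal $k$-leaf-root structure of $G[\hC{1}]$ encoded by $T_1^*$, (ii) the external adjacencies of $C_1$ dictated by the layering functions, and (iii) the isolation of $Y_1$ from everything outside $C_1\cup Y_1$ — and it is exactly this threefold constraint that requires both the layering conditions built into a similar structure and the redundancy of having several components with the same signature.
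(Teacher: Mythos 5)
Your setup matches the paper's: heredity for the forward direction; rooting the $k$-leaf root $R$ of $G-(C_1\cup Y_1)$ at the neighbour of $z$; observing that $R|(\hC{i})\in LR_z(\hC{i})$ for $i\geq 2$; invoking Lemma~\ref{lem:kbounded-and-height} and pigeonhole to find repeated signatures; and using homogeneity to obtain $T_1^*\in LR_z(\hC{1})$ with a matching signature. But the part you yourself flag as ``the main obstacle'' is exactly the content of the theorem, and your sketch of it has a genuine gap: you anchor the insertion to a \emph{single} component $C_a$, and then every distance certificate you propose is of the form ``$x$ behaves like some $x'\in C_a\cup Y_a$ at the same position'' — which by construction says nothing about the distances from the inserted leaves to the vertices of $C_a\cup Y_a$ themselves (the witness $x'$ is adjacent or non-adjacent to those for reasons internal to $G[\hC{a}]$, not via Property~\ref{cut:layerssame}, which only controls neighbourhoods \emph{outside} $C_i\cup Y_i\cup C_j\cup Y_j$). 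You name this failure explicitly but do not repair it, so the proof is incomplete precisely where it needs to be completed.

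The paper's repair is a complementarity argument using \emph{two} same-signature components $C_2,C_3$ simultaneously (you already have three indices $a,b,c$ from pigeonhole, but you never use the second or third). Concretely: a subtree $T_1^*(u)$ is attached at a node $r$ of $R$ only when the children $u_2\in ch_{T_2}(r)$ and $u_3\in ch_{T_3}(r)$ carrying the signature of $u$ can be chosen \emph{distinct} (this is forced either by $u$ having a signature-twin among its siblings, via Lemma~\ref{lem:basic-sigs}.\ref{lem:basic-twochild}, or by $u_2\neq u_3$ outright); when the unique $u_2=u_3$ coincide one recurses deeper instead of attaching. Then for any external leaf $y$, since $y$ cannot descend from both of the distinct children $u_2$ and $u_3$, at least one of the two witnesses $x_2\in C_2\cup Y_2$, $x_3\in C_3\cup Y_3$ has its path to $y$ passing through $r$, and that witness certifies $dist_{R'}(x,y)$; in particular $y\in C_2\cup Y_2$ is certified via $x_3$ and vice versa, with Properties~\ref{cut:layers-leq-k} and~\ref{cut:layers-gt-k} covering the residual subcase where $y\in C_2$ descends from $u_3$. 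Without this two-anchor mechanism (or an equivalent one), the verification for $y\in C_a\cup Y_a$ — and the isolation of the inserted $Y_1$-leaves from $C_a\cup Y_a$ — cannot be carried out, so as written the argument does not go through.
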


\begin{proof}
First, note that if $G - (C_1 \cup Y_1)$ is not a $k$-leaf-power, then by heredity, $G$ is not a $k$-leaf-power.  We now focus on the other direction of the statement.

Suppose that $G - (C_1 \cup Y_1)$ is a leaf power, and let $R$ be a $k$-leaf root of $G - (C_1 \cup Y_1)$.  Assume without loss of generality that $R$ is rooted at the single neighbor of $z$.  The proof is constructive: we show algorithmically how to insert $C_1 \cup Y_1$ into $R$ to obtain a $k$-leaf root of $G$.

For $i \in \{2, \ldots, l\}$, let $T^*_i = R|(\hC{i})$, which is a $k$-leaf root of $G[\hC{i}]$.
Furthermore let $\T_i = (T_i, \s_i)$ be the valued restriction of $T^*_i$ to $C_i \cup \{z\}$.  Note that $V(T_i) \subseteq V(T^*_i) \subseteq V(R)$.

Since $l - 1 > |S(k, 3k)|$, there must be distinct $i, j \in \{2, \ldots, l\}$ such that $sig_{\l_i}(\T_i) = sig_{\l_j}(\T_j)$, by the pigeonhole principle.
Assume, without loss of generality, that $i = 2$ and $j = 3$ (otherwise, simply rename the $C_i$'s).
Note that since $V(T_2) \subseteq V(T_2^*) \subseteq V(R)$, each node of $T_2$ and $T^*_2$ is in $R$.  The same holds for $T_3$ and $T^*_3$.
Moreover, all of $T_2, T^*_2, T_3$ and $T^*_3$ contain $z$ and hence also contain the root of $R$.  It follows that $r(T_2) = r(T^*_2) = r(T_3) = r(T^*_3) = r(R)$.

Since $accept(\S, \hC{1}) = accept(\S, \hC{2}) = accept(\S, \hC{3})$ by homogeneity, there exists a $k$-leaf root $T^*_1$ of $G[\hC{1}]$, with $r(T^*_1)$ being the parent of $z$, such that $sig_{\l_1}(\T_1) = sig_{\l_2}(\T_2) = sig_{\l_3}(\T_3)$, where $\T_1 = (T_1, \s_1)$ is the valued restriction of $T^*_1$ to $C_1 \cup \{z\}$.
Note that $r(T_1) = r(T^*_1)$, again because of $z$.
We now describe how to insert $T^*_1$ into $R$, based on the  signature of $\T_1$.
This is shown in Algorithm~\ref{alg:insertt1}.
By inserting a subtree $T^*_1(u)$ as a child of $r$, we mean to add all the nodes and edges of the $T^*_1(u)$ tree to $R$, and to add an edge between $r$ and $r(T^*_1(u))$.

\begin{algorithm2e}[H]
\SetAlgoLined
\DontPrintSemicolon
\SetKwProg{Fn}{Function}{}{end}

$insert(r(T^*_1), r(R))$ //initial call \;
\;
\Fn{insert($t, r$)}
{
    //$t \in V(T^*_1)$ is the node of $T^*_1$ we are inserting\;
    //$r \in V(R)$ is the node of $R$ we are inserting on\;
    \ForEach{child $u \in ch_{T^*_1}(t) \setminus ch_{T_1}(t)$}
    {
        Insert the $T^*_1(u)$ subtree as a child of $r$ \label{line:child-tstar}\;
    }
    
    \ForEach{child $u \in ch_{T_1}(t)$}
    {
        \uIf{$\exists w \in ch_{T_1}(t) \setminus \{u\}$ such that $sig_{\l_1}(\T_1(w)) = sig_{\l_1}(\T_1(u))$}
        {
            Insert the $T^*_1(u)$ subtree as a child of $r$ \label{line:child-two}\;
        }
        \uElse
        {
            Let $u_2 \in ch_{T_2}(r)$ such that $sig_{\l_2}(\T_2(u_2)) = sig_{\l_1}(\T_1(u))$\; 
            Let $u_3 \in ch_{T_3}(r)$ such that $sig_{\l_3}(\T_3(u_3)) = sig_{\l_1}(\T_1(u))$\;
            \uIf{$u_2 \neq u_3$}
            {
                    Insert the $T^*_1(u)$ subtree as a child of $r$ \label{line:child-onediff} \;
            }
            \uElse
            {
                \uIf{$u_2 \neq z$}
                {
                    Recursively call $insert(u, u_2)$ \label{line:recurse}\;
                }
            }
        }
    }
}
\caption{Insertion of $T_1^*$ into $R$.}
\label{alg:insertt1}
\end{algorithm2e}

Let us begin with a bit of intuition on this algorithm.
The idea is that $insert(t, r)$ embeds $T_1^*(t)$ into $R(r)$, where $t$ should be a node of $T_1$ and $r$ a node of $R$.  
The initial call  says that $r(T^*_1) = r(T_1)$ should correspond to $r(R)$ after $T^*_1$ is inserted, and the recursive calls make similar correspondences with children of $r(T^*_1)$ and $r(R)$, recursively.
We will maintain the invariant that at any point, $r \in V(R) \cap V(T_2) \cap V(T_3)$, such that $\T_2(r), \T_3(r)$ have the same signature as $\T_1(t)$ (we prove this below).  
 Then in any recursion, when $t$ has children $u$ in $T_1^*$ but not in $T_1$, we know that $T_1^*(u)$ only has leaves in $Y_1$ (line~\ref{line:child-tstar}).  In this case, we just insert the $T_1^*(u)$ subtree.  The $Y_i$ leaves to insert must be at distance more than $k$ to all of $L(R)$, and the fact that $\T_2(r)$ and $\T_3(r)$ have similar subtrees at the $r$ location helps us guarantee it.  That is, $\T_2(r)$ lets us argue on the distance relationships between $Y_1$ and the leaves in $L(R) \setminus L(\T_2(r))$, and $\T_3(r)$ helps us with all relationships with leaves in $L(\T_2(r))$.
 This idea of complementarity is the reason we need both $\T_2$ and $\T_3$.  
 A similar idea applies when $T_1^*(u)$ is inserted on lines~\ref{line:child-two} and~\ref{line:child-onediff}.  A recursion is needed for each $u \in ch_{T_1}(t)$ such that the node of $T_2$ and $T_3$ corresponding to $u$ is the same in $ch_R(r)$, since the complementarity trick cannot be applied.  
Let us proceed with the details.  

Since the algorithm only inserts subtrees as children of nodes of $R$, it is not hard to see that the modified $R$ will also be a tree.  
For the rest of the proof, let $R'$ be the tree obtained after the algorithm has terminated, assuming that the initial call is $insert(r(T^*_1), r(R))$. 
The rest of the proof is dedicated to showing that $R'$ is a $k$-leaf root of $G$.  The proof is divided in a series of claims.

\newpage

\begin{claim}\label{claim:basic}
At any point that $insert(t, r)$ is called and arguments $t$ and $r$ are received, the following holds:
\begin{itemize}
    \item 
    $t \in V(T_1) \setminus L(T_1)$;
    
    \item 
    $r \in V(T_2) \cap V(T_3)$;
    
    \item 
    $sig_{\l_1}(\T_1(t)) = sig_{\l_2}(\T_2(r)) = sig_{\l_3}(\T_3(r))$.
\end{itemize}
\end{claim}

\begin{proof}
We argue this by induction on the depth of the recursion.
As a base case, consider the initial call with $t = r(T^*_1)$ and $r = r(R)$.  We have $t \in V(T_1)$ since $r(T^*_1) = r(T_1)$.  To see that $t \notin L(T_1)$, note that $|L(T_1)| = |C_1 \cup \{z\}| \geq 2$ implies that $r(T_1)$ is not a leaf itself, since leaves have $0$ children.  Thus initially, $t \in V(T_1) \setminus L(T_1)$.
Also, we have already argued that $r = r(R) = r(T_2) = r(T_3)$.  Moreover, in this initial case, $sig_{\l_1}(\T_1(t)) = sig_{\l_1}(\T_1) = sig_{\l_2}(\T_2) = sig_{\l_2}(\T_2(r))$ and the same holds for $\T_3$.

Now, assume that the statement holds for any recursive call of depth $\delta$.
We argue that it also holds for recursive calls at depth $\delta + 1$. 
When a call at recursion depth $\delta$ with parameters $t$ and $r$ is made, the only way a deeper recursive call could be made is when line~\ref{line:recurse} is reached.  This happens when $t$ has a child $u$ in $T_1$ such that no other child of $t$ in $T_1$ has the same signature.  
Since we assume by induction that $sig_{\l_1}(\T_1(t)) = sig_{\l_2}(\T_2(r))$, by Lemma~\ref{lem:basic-sigs}.\ref{lem:basic-onechild}, $r$ must also have exactly one child $u_2$ in $T_2$ such that $sig_{\l_2}(\T_2(u_2)) = sig_{\l_1}(\T_1(u))$.  By the same argument, $r$ has exactly one child $u_3$ in $T_3$ such that $sig_{\l_3}(\T_3(u_3)) = sig_{\l_1}(\T_1(u))$.  In particular, $u_2$ and $u_3$ as used in the algorithm always exist.

A recursive call is only made if $u_2 = u_3$ and $u_2 \neq z$, in which case $u$ and $u_2$ are passed to a recursive call.
We know that $u \in V(T_1)$, and must argue that $u \notin L(T_1)$.
Assume instead that $u$ is a leaf of $T_1$.
Since $sig_{\l_1}(\T_1(u)) = sig_{\l_2}(\T_2(u_2)) = sig_{\l_3}(\T_3(u_3))$, $u_2$ would also be a leaf of $T_2$ and $u_3$ a leaf of $T_3$.  We have $L(T_2) = C_2 \cup \{z\}$ and $L(T_3) = C_3 \cup \{z\}$.  Since $C_2$ and $C_3$ are disjoint, only $u_2 = u_3 = z$ is possible, but this is a contradiction since the algorithm explicitly states that line~\ref{line:recurse} could not be reached in this case.  Hence, $u \notin L(T_1)$, which proves the first property of the claim.

The second and third properties of the claim hold for the recursive call because $u_2 = u_3$, and because they are explicitly chosen to have the same signature.
\end{proof}

We show that to create $R'$, the algorithm inserts a subtree that is leaf-isomorphic to $T^*_1$.  This ensures that distance relationships between leaves of $T^*_1$ are the same in $R'$.

\begin{claim}\label{claim:iso}
$R'|(\hC{1}) \simeq_L T^*_1$.
\end{claim}

\begin{proof}
We prove the claim by induction on the height of $T_1(t)$.  
That is, we show 
that for any $t$ and $r$ passed to the algorithm, $R'$ contains a subtree $R'_t$ that is leaf-isomorphic to $T_1^*(t)$, and that the leaf-isomorphism from $T_1^*(t)$ to $R'_t$ maps $t$ to $r$.

By Claim~\ref{claim:basic}, $t$ is in $T_1$ but not a leaf of $T_1$.  So as a base case, consider that the algorithm receives arguments $t$ and $r$, such that $T_1(t)$ is a subtree of $T_1$ of height $2$.  That is, all the children of $t$ in $T_1$ are leaves (but not necessarily all the children of $t$ in $T^*_1$).
Consider a child $u \in ch_{T^*_1}(t) \setminus ch_{T_1}(t)$.
Then line~\ref{line:child-tstar} ensures that $T_1^*(u)$ is inserted as a child subtree of $r$ in $R'$.
Consider now a child $u \in ch_{T_1}(t)$, which is a leaf by assumption. 
If some other child $w$ of $T_1$ has the same signature (i.e. $sig_{\l_1}(\T_1(u)) = sig_{\l_1}(\T_1(w))$), line~\ref{line:child-two} ensures that $u$ is inserted as a child of $r$.  
Otherwise, there are $u_2 \in ch_{T_2}(r)$ and $u_3 \in ch_{T_3}(r)$ with the same signature as $u$, i.e. $sig_{\l_2}(\T_2(u_2)) = sig_{\l_3}(\T_3(u_3)) = sig_{\l_1}(\T_1(u))$.  
As in the previous claim, $u_2$ and $u_3$ must also be leaves and one of $u_2 \neq u_3$ or $u_2 = u_3 = z$ must hold, because $C_2$ and $C_3$ are disjoint.  
If $u_2 = u_3 = z$, then $u = z$ as well, since this is the only leaf with signature $(0)$.  In this case, $t = r(T_1)$ and $r = r(R)$, and the check that $u_2 \neq z$ in the algorithm ensures that nothing happens (which is correct, since $z$ is already a child of $r$).  If $u_2 \neq u_3$, $u$ is added as a child of $r$ on line~\ref{line:child-onediff} (which is correct since $u \neq z$).
It follows that all children of $t$ in $T^*_1$ are children of $r$ in $R'$.  Therefore, $R'$ contains a subtree leaf-isomorphic to $T^*_1(t)$ with $t$ mapped to $r$.

Now assume that $T_1(t)$ is a subtree of height greater than $2$.
Notice that for each child $u \in ch_{T^*_1}(t)$, $T_1^*(u)$ is inserted as a child subtree of $r$ in $R'$, 
unless $t$ has no other child with the same signature as $u$, and there are $u_2 \in ch_{T_2}(r)$ and $u_3 \in ch_{T_3}(r)$, with $u_2 = u_3$, with the same signature as $u$.  
As before, if $u = z$, then $u_2 = u_3 = z$ and nothing happens, which is correct.
Otherwise, line~\ref{line:recurse} is reached and the algorithm is called recursively with arguments $u$ and $u_2$.
In this case, we may assume by induction that $R'$ contains a subtree leaf-isomorphic to $T^*_1(u)$, with the leaf-isomorphism mapping $u$ to $u_2$.
Let $U$ be the set of children of $t$ in $T_1$ for which line~\ref{line:recurse} is reached.  Notice that for each distinct $u, u' \in U$, $sig_{\l_1}(\T_1(u)) \neq sig_{\l_1}(\T_1(u'))$, since children with a non-unique signature are inserted on line~\ref{line:child-two}.
Moreover, each $u \in U$ is put in correspondence with a child $u_2$ of $r$ with the same signature as $u$, implying that every $u \in U$ has a distinct correspondent among the children of $r$.
We can therefore conclude  that for each $u \in ch_{T^*_1}(t)$ (other than $z$), either $T^*_1(u)$ is inserted as a child subtree of $r$, or $T^*_1(u)$ is inserted recursively with $u$ being mapped to a unique child $u_2$ or $r$.  It follows that $R'$ contains a subtree isomorphic to $T_1^*(t)$ with $t$ mapped to $r$.

To finish the proof, we observe that the above statement holds for $t = r(T_1) = r(T^*_1)$ and $r = r(R)$.  Thus $R'$ contains a subtree leaf-isomorphic to $T^*_1$ with a leaf-isomorphism that maps $r(T^*_1)$ to $r(R)$.  
\end{proof}

It remains to argue that distance relationships in $R'$ are also correct for pairs of leaves with one in $\hC{1}$ and the other in $V(G) \setminus (\hC{1}) = L(R) \setminus \{z\}$.

\begin{claim} \label{claim:cross}
Let $x \in L(T^*_1)$ and $y \in L(R) \setminus \{z\}$.
If $xy \in E(G)$, then $dist_{R'}(x, y) \leq k$, and otherwise, 
$dist_{R'}(x, y) > k$.
\end{claim}

\begin{proof}
We argue that anytime that a leaf of $T^*_1$ is inserted, the conditions of the claim are satisfied. 
Consider any recursion of the algorithm where parameters $t$ and $r$ are given, and let us focus on the set of leaves of $T^*_1$ inserted during that recursion.

Let $u \in ch_{T^*_1}(t)$ and assume that $T^*_1(u)$ is inserted as a child of $r$ in the current recursion.  Let $x \in L(T^*_1(u))$ be an inserted leaf.
We consider two possible cases.

\medskip 

\noindent
\textbf{Case 1: $x \in L(T^*_1(u)) \setminus L(T_1)$.}
Observe that $x \in L(T^*_1) \setminus L(T_1) = (\hC{1}) \setminus (C_1 \cup \{z\}) = Y_1$, and thus $x$ has no neighbors in $L(R)$.  Thus we must argue that $x$ is at distance more than $k$ to any leaf $y$ of $R$.
Let $x'$ be the lowest ancestor of $x$ in $T^*_1$ (i.e. farthest from the root) such that $x' \in V(T_1)$.  Note that if $T_1^*(u)$ was inserted on line~\ref{line:child-tstar}, then $x' = t$ since $u \notin V(T_1)$ but $t \in V(T_1)$ by Claim~\ref{claim:basic}, and otherwise, $x'$ is a descendant of $u$.
In any case, $x'$ is in $V(T_1(t))$.  Recall that $\T_1 = (T_1, \sigma_1)$, and that $\sigma_1(x')$ is the minimum distance from $x'$ to a leaf descending from a node in $ch_{T^*_1}(x') \setminus ch_{T_1}(x')$.  We note that $x$ is such a leaf, and thus $dist_{T^*_1}(x, x') \geq \s_1(x')$.

Also recall that by Claim~\ref{claim:basic}, $r \in V(T_2) \cap V(T_3)$ and $sig_{\l_1}(\T_1(t)) = sig_{\l_2}(\T_2(r)) = sig_{\l_3}(\T_3(r))$.  
By Lemma~\ref{lem:basic-sigs}.\ref{lem:basic-samex}, there is $x_2' \in V(T_2(r))$ such that $dist_{T_1}(x', t) = dist_{T_2}(x_2', r)$ and $sig_{\l_1}(\T_1(x')) = sig_{\l_2}(\T_2(x_2'))$.  
In particular, $\s_1(x') = \s_2(x_2')$.
Since $\T_2$ is the valued restriction of $T^*_2$ to $C_2 \cup \{z\}$, this implies that there exists $x_2 \in L(T_2^*) \setminus L(T_2) = Y_2$, descending from a node in $ch_{T^*_2}(x_2') \setminus ch_{T_2}(x_2')$, such that $dist_{T^*_2}(x_2, x_2') = dist_{R}(x_2, x_2') = \sigma_2(x_2') = \sigma_1(x_1') \leq dist_{T^*_1}(x, x')$.
Since the distance from $x_2'$ to $r$ is the same as the distance from $x'$ to $t$, this also implies that $dist_{R}(x_2, r) \leq dist_{T_1^*}(x, t)$.
Now, let $y \in L(R) \setminus (C_2 \cup Y_2)$. 
We have $x_2y \notin E(G)$ and obtain  
\begin{align*}
    k < dist_R(x_2, y) &\leq dist_R(x_2, r) + dist_R(r, y) \\
                 &\leq dist_{T_1^*}(x, t) + dist_{R}(r, y) \\
                 &= dist_{R'}(x, r) + dist_{R'}(r, y) \\
                 &= dist_{R'}(x, y)
\end{align*}
where, for the last two equalities, we used the fact that $T^*_1(u)$ was inserted as a child subtree of $r$, and thus that the path from $x$ to $y$ in $R'$ must first go to $r$, and then to $y$.
This shows that $x$ has the correct distance to all $y \in L(R) \setminus (C_2 \cup Y_2)$.  It remains to handle the vertices in $C_2 \cup Y_2$.
For the moment, let us instead consider $y \in L(R) \setminus (C_3 \cup Y_3)$.  We can apply the same argument as above, but using $T_3$ instead.  That is, by the equality of the signatures, there is some $x_3 \in Y_3$ such that $dist_{T^*_3}(x_3, r) = dist_R(x_3, r) \leq  dist_{T^*_1}(x, t)$.
By repeating the above argument, we deduce that for any $y \in L(R) \setminus (C_3 \cup Y_3)$, $dist_R(x_3, y) > k$, which in turn implies 
$dist_{R'}(x, y) > k$.  
In particular, this holds for any $y \in C_2 \cup Y_2$.
We have therefore shown that $dist_{R'}(x, y) > k$ for any $y \in L(R)$, and thus that $x$ is correctly inserted.

\medskip 

\noindent 
\textbf{Case 2: $x \in L(T_1)$}.  
In this case, $u \in ch_{T_1}(t)$ and $T_1^*(u)$ must have been inserted as a child of $r$ on line~\ref{line:child-two} or on line~\ref{line:child-onediff}.  Note that $x \in C_1$ ($x = z$ is not possible since $z$ is not reinserted in $R'$).

Let us define $u_2 \in ch_{T_2}(r)$ and $u_3 \in ch_{T_3}(r)$ to handle both cases.
If $T^*_1(u)$ is inserted on line~\ref{line:child-onediff}, then define $u_2$ and $u_3$ as in the algorithm.  In this case, we must have $u_2 \neq u_3$.
If instead $T^*_1(u)$ is inserted on line~\ref{line:child-two}, then there is $w \in ch_{T_1}(t)$ such that $sig_{\l_1}(\T_1(u)) = sig_{\l_1}(\T_1(w))$ and $w \neq u$.
By Lemma~\ref{lem:basic-sigs}.\ref{lem:basic-twochild}, there exist distinct $u_2, u_2' \in ch_{T_2}(r)$ and $u_3, u_3' \in ch_{T_3}(r)$ such that 
$sig_{\l_1}(\T_1(u)) = sig_{\l_2}(\T_2(u_2)) = sig_{\l_2}(\T_2(u_2')) = sig_{\l_3}(\T_3(u_3)) = sig_{\l_3}(\T_3(u_3'))$.  
Observe that at least one of $u_2 \neq u_3$ or $u_2 \neq u'_3$ holds.  Assume without loss of generality that $u_2 \neq u_3$.

In either case, note that $u_2$ and $u_3$ are defined so that $u_2 \neq u_3$, $u_2 \in ch_{T_2}(r)$, $u_3 \in ch_{T_3}(r)$, $sig_{\l_1}(\T_1(u)) = sig_{\l_2}(\T_2(u_2)) = sig_{\l_3}(\T_3(u_3))$.

By Lemma~\ref{lem:basic-sigs}.\ref{lem:basic-samex}, there exist $x_2 \in L(T_2(u_2))$
such that $dist_{T_2^*}(x_2, u_2) = dist_{T_1}(x, u)$ and $sig_{\l_1}(\T_1(x)) = sig_{\l_2}(\T_2(x_2))$.  Since the signature of a leaf is just its layer, this implies that $\l_1(x) = \l_2(x_2)$. 
Also note that $dist_{R'}(x, r) = dist_{R'}(x_2, r)$.
Similarly, there exists $x_3 \in L(T_3)$ 
such that $\l_1(x) = \l_3(x_3)$ and such that $dist_{T_3^*}(x_3, u_3) = dist_{T_1}(x, u)$, i.e. 
that $dist_{R'}(x, r) = dist_{R'}(x_3, r)$.
Also note that the paths from $x$ to $r$, $x_2$ to $r$ and $x_3$ to $r$ 
only intersect at $r$, since $u, u_2$ and $u_3$ are all distinct nodes of $R'$.

First consider $y \in L(R) \setminus (C_2 \cup Y_2 \cup C_3 \cup Y_3)$.
Since the layers are all equal, by Property~\ref{cut:layerssame} of similar structures, $xy \in E(G) \Leftrightarrow x_2y \in E(G) \Leftrightarrow x_3y \in E(G)$.
If $y$ is not a descendant of $u_2$, then the path from $x_2$ to $y$ goes through $r$, implying $dist_{R'}(x, y) = dist_{R'}(x_2, y)$.
If $y$ is not a descendant of $u_3$,
then the path from $x_3$ to $y$ goes through $r$, implying $dist_{R'}(x, y) = dist_{R'}(x_3, y)$.  
Since $u_2 \neq u_3$, $y$ cannot be a descendant of both $u_2$ and $u_3$, and so at least one of $dist_{R'}(x, y) = dist_{R'}(x_2, y)$ or $dist_{R'}(x, y) = dist_{R'}(x_3, y)$ must hold.
If $xy \in E(G)$, then $x_2y, x_3y \in E(G)$ and $dist_{R'}(x_2, y) \leq k$, $dist_{R'}(x_3, y) \leq k$ both hold, implying $dist_{R'}(x, y) \leq k$. 
Similarly, if $xy \notin E(G)$, then $x_2y, x_3y \notin E(G)$ and, in the same manner, and $dist_{R'}(x_2, y) > k$, $dist_{R'}(x_3, y) > k$ both hold, implying $dist_{R'}(x, y) > k$. 

Now consider $y \in C_2 \cup Y_2$.  Assume that $xy \notin E(G)$.
Then by Property~\ref{cut:layerssame}, $x_3y \notin E(G)$.  
If $y$ is not a descendant of $u_3$, then the path from $x_3$ to $y$ goes through $r$ and $dist_{R'}(x, y) = dist_{R'}(x_3, y) > k$.
If $y$ is a descendant of $u_3$, then we have
\[
k < dist_{R'}(x_3, y) \leq dist_{R'}(x_3, r) + dist_{R'}(r, y) = dist_{R'}(x, r) + dist_{R'}(r, y)
\]
and, since $dist_{R'}(x, r) + dist_{R'}(r, y) = dist_{R'}(x, y)$, the distance from $x$ to $y$ is correctly above $k$.

Assume that $xy \in E(G)$.  Again by Property~\ref{cut:layerssame}, $x_3y \in E(G)$. Note that $y \in Y_2$ is not possible since $Y_2$ only has neighbors in $C_2 \cup Y_2$.  So we know that $y \in C_2$, and hence $\l_2(y)$ is defined.  If $y$ is not a descendant of $u_3$, the usual argument applies, since the path from $x_3$ to $y$ goes through $r$, implying $dist_{R'}(x, y) = dist_{R'}(x_3, y) \leq k$.
Suppose that $y$ is a descendant of $u_3$.
By Property~\ref{cut:layers-gt-k} of similar structures, we must have $\l_1(x) + \l_2(y) \leq k$ (because if $\l_1(x) + \l_2(y) > k$, $xy$ could not be an edge).  As $\l_1(x) = \l_2(x_2)$, 
we have $\l_2(x_2) + \l_2(y) \leq k$ as well.  Luckily, Property~\ref{cut:layers-leq-k} is applicable to vertices in the same $C_i$, and so $x_2y \in E(G)$.
The path from $x_2$ to $y$ goes through $r$, and thus
$dist_{R'}(x, y) = dist_{R'}(x_2, y) \leq k$.
This covers the case $y \in C_2 \cup Y_2$.

The last remaining case is $y \in C_3 \cup Y_3$.  This can be handled exactly as the previous case, but swapping the roles of $x_2$ and $x_3$.  
This completes the proof, as we have covered every possible case for a leaf inserted by the algorithm, at any point.
\end{proof}

To conclude the proof, we can argue that $R'$ is a $k$-leaf-power of $G$ by looking at every pair of vertices of $G$.
First note that by Claim~\ref{claim:iso}, all the leaves of $T^*_1$ are present in $R'$, and so $V(G) = L(R')$.
Let $x, y \in V(G)$, with $x \neq y$.
If $x, y \in V(G) \setminus (C_1 \cup Y_1)$, then 
$x, y \in L(R)$ and, since $dist_{R}(x, y) = dist_{R'}(x, y)$, we know that $xy \in E(G) \Leftrightarrow dist_{R'}(x, y) \leq k$. 
If $x, y \in C_1 \cup Y_1 \cup \{z\}$, then by Claim~\ref{claim:iso}, 
$R'|(C_1 \cup Y_1 \cup \{z\})$ is leaf-isomorphic to $T^*_1$.  Since $T^*_1$ is a $k$-leaf root of $G[\hC{1}]$, 
we know that $xy \in E(G) \Leftrightarrow dist_{R'}(x, y) \leq k$.
Finally, if $x \in C_1 \cup Y_1$ and $y \in V(G) \setminus (C_1 \cup Y_1)$, Claim~\ref{claim:cross} ensures that $xy \in E(G) \Leftrightarrow dist_{R'}(x, y) \leq k$.
Therefore, $R'$ is a $k$-leaf-power of $G$.
\end{proof}

\section{Computing the set of accepted signatures} \label{sec:tw}

We have not yet discussed how to find an homogeneous similar structure $\S$.
Since $k$ is fixed, it is not too hard to find a similar structure of $G$ with the properties of Lemma~\ref{lem:hom-struct}, if one exists.
It suffices to brute-force every combination of $3|S(k, 3k)|$ subsets of $V(G)$ of size at most $d^k$ in $G$ to find the $C_i$'s, and check that all similar structure properties hold.
In particular, each $G[C_i \cup Y_i \cup \{z\}]$ has maximum degree at most $d^k$, so using Lemma~\ref{lem:prelim:bounded-arity}, we can check whether this is a $k$-leaf power.
However, this is not enough, since homogeneity requires enumerating all accepted signatures for the found $\hC{i}$, in order to construct $accept(\S, \hC{i})$ and ensure that they are all equal.
We will achieve this through a tree decomposition of $G[\hC{i}]$.

Let us assume that $\S = (\C, \Y, z, \L)$ is a similar structure satisfying $|\C| = 3|S(k, 3k)|$, with each $|C_i| \leq d^k$ such that $G[\hC{i}]$ has maximum degree at most $d^k$.
We want to compute $accept(\S, C_i)$ for each $i \in [d]$.
This can be achieved using a slightly more general result.

\begin{lemma}\label{lem:enumerate}
Let $G$ be a connected graph of maximum degree at most $d^k$ and let $z \in V(G)$.  Then in time $O(n \cdot d^{8k} \cdot f(k)^3)$, where $n = |V(G)|$ and $f(k) = d^{4kd^{4k}} \cdot (k + 2)^{d^{4k}}$, one can enumerate the set of all valued trees $\mathbb{T} = \{\T_1, \ldots, \T_l\}$, up to value-isomorphism, such that $\T_i \in \mathbb{T}$ 
if and only if there exists a $k$-leaf root $T^*$ of $G$ such that (1) $T^*$ is rooted at the parent of $z$; and (2) $\T_i$ is the valued restriction of $T^*$ to $N_G[z]$.
\end{lemma}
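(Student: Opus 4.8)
The plan is to build the set of valued trees by a bottom-up dynamic program on a tree decomposition of $G$, enumerating "partial $k$-leaf roots" layer by layer and recording, for each partial solution, only the bounded amount of information needed to (a) check the remaining distance constraints and (b) reconstruct the valued restriction to $N_G[z]$. First I would observe that since $G$ has maximum degree at most $d^k$ and is chordal whenever it is a $k$-leaf power (Lemma~\ref{lem:chordal}), it has treewidth at most $d^k$, so a tree decomposition of width $O(d^k)$ can be computed in linear time; if $G$ is not chordal we reject immediately, and if $G$ has no $k$-leaf root we output the empty set. The key structural fact to exploit is that any leaf $u$ of a $k$-leaf root $T^*$ interacts (via distances $\le k$) only with leaves within distance $k$ in $T^*$, hence only with a bounded-size set of other leaves; more importantly, the portion of $T^*$ that is "relevant" at any point — the ancestors within distance $k$ of already-placed leaves, together with their pending subtrees — has bounded size once we quotient by the information we actually need.

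The core of the argument is to define the right DP state. For a node $t$ of the tree decomposition with bag $B_t$, a state should record: the induced subforest of $T^*$ on the Steiner-closure of the already-processed leaves, restricted to those internal nodes still within distance $k$ of some not-yet-processed leaf (a set of size bounded by a function of $k$ and $d$, since each such node has $\le d^k$ nearby leaves and the relevant height is $\le O(k)$ by the kind of reasoning in Lemma~\ref{lem:kbounded-and-height-general}), together with, for each such retained internal node, its distance to the closest processed leaf, its distance to $z$ (or the fact that $z$ has not been placed yet), and enough of the partial valued restriction to $N_G[z]$ seen so far. I would bound the number of such states by $f(k) = d^{4kd^{4k}}(k+2)^{d^{4k}}$: roughly $d^{4k}$ relevant internal nodes, each carrying $O(d^{4k})$ possible shapes and $k+2$ possible distance labels. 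The transitions introduce a leaf into the current structure, attach it under some retained internal node (or create new internal nodes on the path up), verify that all distance-$\le k$ and distance-$>k$ constraints to previously placed leaves in the same bag are satisfied using the stored distances, and update the distance labels; nodes that drop out of all future bags and are far from everything unprocessed are "forgotten" but contribute their final $\sigma$-value to the valued restriction if they lie on the $N_G[z]$-Steiner tree. At the root of the decomposition we read off, from each surviving state, the valued restriction of the completed $T^*$ to $N_G[z]$, deduplicating up to value-isomorphism (checkable in time polynomial in $f(k)$ since these trees have bounded size).

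Correctness has two directions: every state reachable at the root corresponds to an actual $k$-leaf root (because every pairwise distance constraint was checked at the bag where both endpoints co-occur, and chordal tree decompositions guarantee such a bag exists for every edge and, by a standard argument, for every pair at bounded distance once we also track the bounded-range ancestors), and conversely every $k$-leaf root $T^*$ induces a consistent sequence of states (this is where the bound $height \le 3k$ and $k$-boundedness from Lemma~\ref{lem:kbounded-and-height-general} guarantee the retained-structure really is of bounded size, so it fits in a state). The running time is the number of decomposition nodes $O(n)$ times the cost per node, which is $O(d^{8k})$ for iterating over bag content times $O(f(k)^3)$ for combining states across the (at most two) children and checking value-isomorphism, giving $O(n \cdot d^{8k} \cdot f(k)^3)$. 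The main obstacle I anticipate is pinning down precisely which internal nodes of $T^*$ must be retained in a state and proving that this set stays bounded: one must show that a node of $T^*$ that is simultaneously "old" (far from everything still unprocessed) can be safely forgotten, which requires arguing — as in the proof of Lemma~\ref{lem:kbounded-and-height-general} — that such a node's pending subtree is already a disconnected blob that the rest of the graph can never reach, together with a careful interface argument at the tree-decomposition bags ensuring no long-range constraint is missed; getting this bookkeeping exactly right, rather than the DP mechanics, is the delicate part.
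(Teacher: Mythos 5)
Your overall strategy is the paper's: a bottom-up dynamic program over a width-$\le d^k$ tree decomposition (available because $G$ must be chordal), whose states are bounded-size valued trees annotated with the distance from each retained internal node to its nearest ``hidden'' leaf, with the same state-count bound $f(k)$ and the same per-bag accounting $O(d^{8k}f(k)^3)$ over $O(n)$ bags. However, two steps of your sketch would fail as written and need to be replaced by the mechanisms the paper actually uses.

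First, your state is defined as the partial Steiner tree restricted to ``internal nodes still within distance $k$ of some not-yet-processed leaf.'' This is a forward-looking condition: at bag $B_t$ the algorithm has no access to where future leaves will attach, so it cannot decide which nodes to retain. The condition must be replaced by one computable from the processed side alone. The paper does this (Lemma~\ref{lem:treedecomp} and the definition of validity before Lemma~\ref{lem:valid-bounded}) by taking a nice clique-bag decomposition rooted at the bag $\{z\}$ and defining the state at $B_i$ to be the valued restriction of a $k$-leaf root of $G[V_i]$ to $N[B_i]\cap V_i$; since every unprocessed leaf with a processed neighbor has that neighbor in $B_i$, this retained set provably suffices, its size is bounded (Lemma~\ref{lem:valid-bounded}), and --- because the root bag is exactly $\{z\}$ --- the state at the root \emph{is} the desired valued restriction to $N_G[z]$. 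This also disposes of your separate ``partial valued restriction to $N_G[z]$ carried along,'' whose $\s$-values would otherwise have to be finalized by an extra argument about when a node of the $N[z]$-Steiner tree can no longer acquire hidden descendants.

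Second, your correctness sketch asserts that every pairwise distance constraint ``was checked at the bag where both endpoints co-occur,'' and that chordal decompositions guarantee such a bag. That is true for edges but false for non-adjacent pairs, and the constraint $uv\notin E(G)\Rightarrow dist_{T^*}(u,v)>k$ must be enforced for \emph{all} non-adjacent pairs, most of which share no bag. This is precisely what the stored distances are for, and it is where the bulk of the paper's proof of Lemma~\ref{lem:q-is-correct} lives: at introduce nodes one checks $dist_T(v,w)+\s(w)>k$ for every retained internal node $w$ (separating the new leaf from all forgotten leaves), and at join nodes one checks both $dist_T(u,w)+\s_r(\mu_r(w))>k$ for leaves $u$ on one side against internal nodes $w$ on the other, and $\s_l(\mu_l(w_l))+dist_T(w_l,w_r)+\s_r(\mu_r(w_r))>k$ for pairs of forgotten leaves on opposite sides. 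Your state contains the information needed for these checks, but the ``standard argument'' you invoke is not standard; spelling out these three families of inequalities, and proving in both directions that they are necessary and sufficient, is the actual content of the proof rather than bookkeeping that can be deferred.
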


To see why Lemma~\ref{lem:enumerate} allows us to compute $accept(\S, C_i)$, recall that the latter contains the signature of every $\T$ that is the valued restriction of $T^*$ to $C_i \cup \{z\}$, where $T^*$ is a $k$-leaf root of $G[\hC{i}]$ with the root as the parent of $z$.  Since $N_{G[\hC{i}]}[z] = C_i \cup \{z\}$, we can pass $G[\hC{i}]$ and $z$ to the above lemma.
By taking the signature of every $\T_i$ returned, we obtain $accept(\S, C_i)$.  Note that the lemma does not deal with layers, so the leaf sets of the desired $\T_i$'s are $N_G[z]$ (instead of integers representing layers as in the previous section).

The rest of this section is dedicated to the proof of Lemma~\ref{lem:enumerate}.
We will write $N(v)$ and $N[v]$ instead of $N_G(v)$ and $N_G[v]$, respectively, with the understanding that $G$ is the graph stated in Lemma~\ref{lem:enumerate}.  Likewise, for $X \subseteq V(G)$, we write $N(X)$ and $N[X]$ instead of $N_G(X)$ and $N_G[X]$, respectively.
The proof is based on the tree decomposition of $G$ and uses a relatively standard dynamic programming algorithm.
Recall that given a graph $G$, a nice tree decomposition of $G$ is a tree $B = (V_B, E_B)$ in which (1) $B_i \subseteq V(G)$ for each $B_i \in V_B$; (2) for each $uv \in E(G)$, there is some $B_i \in V_B$ with $u, v \in B_i$; (3) for each $u \in V(G)$, the set of $B_i$'s that contain $u$ form a connected subgraph of $B$.
Moreover, each $B_i \in V_B$ can be one of four types: $B_i \in L(B)$, in which case $B_i = \{v\}$ for some $v \in V(G)$; $B_i$ is an introduce node, in which case $B_i$ has a single child $B_j$ with $B_j = B_i \setminus \{v\}$ for some $v \in V(G)$; $B_i$ is a forget node, in which case $B_i$ has a single child $B_j$ with $B_i = B_j \setminus \{v\}$ for some $v \in B_j$; $B_i$ is a join node, in which case $B_i$ has two children $B_l, B_r$ and $B_i = B_l = B_r$.
The width of $B$ is $max_{B_i \in V_B}(|B_i| - 1)$, and the treewidth of $G$ is the minimum width of a nice tree decomposition of $G$.

We note that~\cite{eppstein2020parameterized} also use a tree decomposition based algorithm.  However, it does not seem adaptable directly for our purposes, since it is not guaranteed to return every structure of every $k$-leaf root of the given graph, and it does not maintain the $\s$ information that we need.

For our algorithm, 
we first check whether $G$ is chordal: if not, we know by Lemma~\ref{lem:chordal} that $G$ is not a $k$-leaf power and we can return $\mathbb{T} = \emptyset$.
Otherwise, since $G$ has maximum degree at most $d^k$ and is chordal, $G$ has clique number at most $d^k + 1$ and thus treewidth at most $d^k$.  
The properties of the tree decomposition that we will need are summarized here.

\begin{lemma}\label{lem:treedecomp}
Let $G$ be a connected chordal graph of maximum degree at most $d^k$, and let $z \in V(G)$.  Then there exists a nice tree decomposition $B = (V_B, E_B)$ of $G$ with $O(|V(G)|)$ nodes and of width at most $d^k$, such that $r(B) = \{z\}$ and such that each $B_i \in V_B$ is a non-empty clique.
\end{lemma}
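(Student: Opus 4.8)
The plan is to build the nice tree decomposition in two stages: first obtain a clique-based decomposition exploiting chordality, then refine it into a nice decomposition while preserving the clique property, and finally re-root it at $\{z\}$. I would start from the well-known fact that a connected chordal graph admits a tree decomposition whose bags are exactly the maximal cliques of $G$ (a clique tree), and that this can be arranged so that adjacent bags share a vertex; since $G$ has maximum degree at most $d^k$ and is chordal, every clique has size at most $d^k+1$, so the width is at most $d^k$. This clique tree has at most $|V(G)|$ bags (chordal graphs have at most $n$ maximal cliques), which already gives the $O(|V(G)|)$ bound on the number of nodes up to the blow-up introduced by the next step.

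Next I would convert this clique tree into a \emph{nice} tree decomposition. The standard transformation — rooting the tree, inserting a chain of forget nodes and introduce nodes between each parent–child pair to interpolate between their bags, and duplicating bags to make all internal nodes either introduce, forget, or join nodes — increases the number of bags only by a factor proportional to the width plus one, hence stays $O(d^k \cdot |V(G)|) = O(|V(G)|)$ for fixed $k$ and $d$. The one subtlety is the requirement that \emph{every} bag be a non-empty clique: the naive niceification introduces intermediate bags that are subsets of a clique bag (when interpolating by removing vertices one at a time) or unions of two comparable bags, so I would argue that each such intermediate bag is still a subset of some maximal clique of $G$, hence itself a clique, and non-empty because we never pass through the empty set (we can stop the forget/introduce chains one vertex short of emptiness, or equivalently only interpolate down to the shared vertex of adjacent clique bags, which is guaranteed to exist in a clique tree of a connected graph).

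Then I would handle the rooting at $\{z\}$. Take the maximal clique bag $B_0$ containing $z$, root the clique tree there, and when niceifying, attach above $B_0$ a chain of forget nodes that successively forget every vertex of $B_0$ except $z$, ending at the single bag $\{z\}$, which becomes $r(B)$; this bag is a non-empty clique trivially. The connectivity condition for $z$ is preserved because $z$ belongs to $B_0$ and to every bag on this new chain. Finally I would double-check the three tree-decomposition axioms survive all the surgery: edge coverage is inherited from the clique tree (each edge lies in a maximal clique), and the running-intersection property is maintained by the standard niceification since we only split and interpolate along existing bags.

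The main obstacle I anticipate is the bookkeeping to simultaneously guarantee \emph{all} of: niceness, the clique-bag invariant, non-emptiness of every bag, the root being exactly $\{z\}$, and the $O(|V(G)|)$ size bound — these constraints interact, and in particular the usual niceification that allows empty bags must be modified so that interpolation chains never shrink a bag below a single vertex, which forces one to use the shared-vertex structure of the clique tree rather than treating bags as arbitrary sets. Once one commits to the clique-tree viewpoint this is routine, but stating it carefully is where the real work lies; I would likely cite a standard reference for the clique-tree characterization of chordal graphs and for the niceification procedure, and only spell out the modifications needed for the clique and non-emptiness invariants.
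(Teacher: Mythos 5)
Your proposal is correct and follows essentially the same route as the paper, which only sketches this argument: take a clique tree of the connected chordal graph (citing a standard reference such as Blair--Peyton), bound the width via the maximum degree, niceify while keeping every intermediate bag a non-empty subclique, and add forget nodes above a bag containing $z$ to make the root exactly $\{z\}$. Your elaboration of the non-emptiness and clique-invariant bookkeeping is a faithful filling-in of the details the paper omits.
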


We omit the proof, which uses standard arguments.  The idea is that connected chordal graphs admit a tree decomposition in which every bag is a non-empty clique (see e.g.~\cite{blair1993introduction}).  We can take such a decomposition, root it at a bag containing $z$, and apply the standard transformation to make it nice (if the root is not exactly $\{z\}$, it can be made $\{z\}$ by adding enough forget nodes above).  

Let $B = (V_B, E_B)$ be a nice tree decomposition that satisfies all the properties of Lemma~\ref{lem:treedecomp}.
For $B_i \in V_B$, we will denote by $V_i = \bigcup_{B_j \in B(B_i)} B_j$ the set of vertices of $G$ found in the bags at $B_i$ or its descendants.
Note that for each $V_i$, $G[V_i]$ is connected.  This can be seen inductively.  First, it is true for leaves.  At introduce nodes, we introduce a vertex $v$ connected to the rest of $B_i$, so $G[V_i]$ remains connected.  At forget nodes, $V_i$ is the same as $V_j$ and remains connected.  At join nodes, $V_i$ is the union of the vertices of two connected graphs that intersect at $B_i$, and thus remains connected.  

Let $B_i$ be a bag of $B$ and let $(T, \s)$ be a valued tree.
We say that $(T, \s)$ is \emph{valid for} $B_i$ if it satisfies the following conditions:

\begin{itemize}
    \item 
    $L(T) = N[B_i] \cap V_i$;
    
    \item 
    there exists a $k$-leaf root $T^*$ of $G[V_i]$ such that $(T, \s)$ is the valued restriction of $T^*$ to $N[B_i] \cap V_i$, and such that $r(T) = r(T^*)$.  
\end{itemize}

Since $r(B) = \{z\}$, our goal is to obtain the set \emph{all} of valued trees that are valid for $r(B)$, as this will yield the valued restrictions to $N[z]$ required by Lemma~\ref{lem:enumerate}.  
Note that unlike in the previous section, there is no requirement on the root of $T$ or $T^*$ being the parent of $z$.  
The requirement that $r(T) = r(T^*)$ is there to ensure that the root we see in the restricted $T$ is also the root in $T^*$ (since otherwise, the restriction could hide nodes of $T^*$ above the root of $T$).  This is important for our purposes, since it allows us to safely look at the valid valued trees for $r(B) = \{z\}$ whose root is the parent of $z$.  
Also note that it is tempting to define $(T, \s)$ as the valued restriction of $T^*$ to $B_i$, and not bother with $N[B_i] \cap V_i$.  This does not quite work --- the inclusion of the neighborhood is necessary to retain enough information to update the $\s(w)$ values accurately (see proof for details).

For the rest of this section, we shall slightly abuse notation and treat two valued trees that are value-isomorphic as \emph{the same}.  In other words, we assume the understanding that two valued trees are \emph{distinct} only if they are not value-isomorphic.

We first show that the number of distinct $(T, \s)$ valued trees to consider is bounded (with a crude estimate), meaning that we can enumerate all candidates. 

\begin{lemma}\label{lem:valid-bounded}
Let $B_i \in V(B)$ and let $(T, \s)$ be a valid valued tree for $B_i$.
Then $T$ has at most $d^{4k}$ nodes and is $k$-bounded. 
Consequently, there are at most $d^{4kd^{4k}} \cdot (k + 2)^{d^{4k}}$ possible valid valued trees for $B_i$ (up to value-isomorphism).
\end{lemma}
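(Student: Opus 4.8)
The plan is to bound the number of nodes of a valid valued tree $T$ for a bag $B_i$ in two stages: first bound the number of leaves $|L(T)|$, then bound the total number of nodes in terms of the number of leaves, and finally invoke $k$-boundedness (Lemma~\ref{lem:kbounded-and-height-general}) to count the possible $\s$-labelings.

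\textbf{Step 1: bounding the leaves.} By definition of validity, $L(T) = N[B_i] \cap V_i \subseteq N[B_i]$. Since $B_i$ is a clique of size at most $d^k + 1$ (width at most $d^k$), and every vertex of $G$ has at most $d^k$ neighbors, we get $|N[B_i]| \leq |B_i| + \sum_{v \in B_i} |N(v)| \leq (d^k + 1) + (d^k+1) d^k$, which is at most $d^{2k} + 2d^k + 1 \leq d^{3k}$ for $d \geq 2, k \geq 1$ (one can be generous here since the final bound is deliberately crude). So $|L(T)| \leq d^{3k}$.

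\textbf{Step 2: bounding all nodes.} Here I would use the fact that $T$ is a valued restriction, and the structure of restrictions. The internal nodes of $T = T^*|X$ (with $X = N[B_i] \cap V_i$) are exactly the vertices of $T^*$ lying on shortest paths between elements of $X$. I want to argue that such a restriction of a tree to a set of $\lambda$ leaves has at most some polynomial-in-$\lambda$ many nodes; in fact, a Steiner-tree-type argument shows that $T^*|X$ has at most $2\lambda - 1$ nodes if we first suppress degree-two internal nodes, but $T^*|X$ need not have suppressed degree-two nodes. The cleaner route: every internal node of $T^*|X$ is on a shortest path in $T^*$ between two leaves of $X$, and any such path has length at most $k$ (since leaves at pairwise $T^*$-distance more than... — no, wait, leaves of $X$ need not be adjacent in $G$). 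Instead I would combine Lemma~\ref{lem:kbounded-and-height-general} (or rather its proof idea) to get $\mathrm{height}(T) \leq 3k$: indeed $X = N[B_i] \cap V_i$ has the required form with $C = B_i$ a clique and $X \setminus C \subseteq N_G(C)$, and $G[V_i]$ is connected, so Lemma~\ref{lem:kbounded-and-height-general} applies directly, giving both $k$-boundedness and $\mathrm{height}(T) \leq 3k$. A tree of height at most $3k$ with at most $d^{3k}$ leaves has, along any root-to-leaf path, at most $3k$ nodes; since every internal node has at least one leaf descendant, the number of nodes is at most $3k \cdot |L(T)| \leq 3k \cdot d^{3k} \leq d^{4k}$ (again crude, using $3k \leq d^k$ for the relevant range, or just absorbing constants). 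This gives the stated bound of $d^{4k}$ nodes and $k$-boundedness.

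\textbf{Step 3: counting.} A valued tree on at most $N := d^{4k}$ nodes is determined by its underlying tree shape together with the assignment $\s$. Crudely, there are at most $N^N$ rooted trees on $N$ labeled nodes and hence at most $N^N = d^{4k d^{4k}}$ shapes; each of the at most $N = d^{4k}$ internal nodes receives a $\s$-value in $\{0,1,\ldots,k,\infty\}$, i.e. $k+2$ choices, giving $(k+2)^{d^{4k}}$ labelings. Multiplying, the number of valid valued trees up to value-isomorphism is at most $d^{4k d^{4k}} \cdot (k+2)^{d^{4k}}$, as claimed.

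\textbf{Main obstacle.} The delicate point is Step 2 — bounding the \emph{internal} nodes, not just the leaves — since a valued restriction $T^*|X$ can retain internal nodes of $T^*$ with only one child and thus is not a priori bounded by a function of $|X|$ alone. The height bound $\mathrm{height}(T) \leq 3k$ coming from Lemma~\ref{lem:kbounded-and-height-general} is exactly what rescues this: it caps the length of every root-to-leaf path, so the tree cannot have long "chains", and then the node count follows from (height) $\times$ (number of leaves). I would make sure the hypotheses of Lemma~\ref{lem:kbounded-and-height-general} are verified here, namely that $B_i$ is a clique (from Lemma~\ref{lem:treedecomp}), that $X \setminus B_i \subseteq N_G(B_i)$ (immediate from $X = N[B_i] \cap V_i$), and that $G[V_i]$ is connected (shown earlier in the section by induction over the tree decomposition).
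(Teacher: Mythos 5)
Your proof is correct and follows the same overall scheme as the paper's (bound the leaves via the bag size and maximum degree, bound the internal nodes, then count shapes times $\s$-labelings), but it handles the one delicate step --- bounding the internal nodes --- by a different route. The paper suppresses the degree-two internal nodes of $T$ to get a tree $T'$ with at most $2|L(T)|$ nodes and $4|L(T)|$ edges, and then argues directly that each suppressed chain has length at most $k$ (a chain longer than $k$ would put the leaves below it at distance more than $k$ from everything else, contradicting connectivity of $G[N[B_i]\cap V_i]$), giving roughly $4k\,d^{2k}$ internal nodes. You instead invoke the height bound $\mathrm{height}(T)\le 3k$ from Lemma~\ref{lem:kbounded-and-height-general} --- whose hypotheses you correctly verify ($B_i$ is a clique by Lemma~\ref{lem:treedecomp}, $X\setminus B_i\subseteq N(B_i)$, and $G[V_i]$ is connected) --- and then count nodes as $(\text{number of leaves})\times(\text{height})$, using that every node is an ancestor of some leaf. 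Both arguments ultimately rest on the same connectivity fact, but yours has the advantage of reusing a lemma the paper already proved (the paper itself cites that lemma here, but only for $k$-boundedness, not for the height), while the paper's chain-length argument yields a slightly tighter constant. The only blemish is arithmetic: your chain $3k\cdot d^{3k}\le d^{4k}$ needs $3k\le d^k$, which fails for $d=2$ and small $k$; this is harmless since the paper's own inequalities (e.g.\ $(d^k+1)^2\le 2d^{2k}$) have the same character and $d$ is astronomically large in the actual application, but you should state the standing assumption on $d$ explicitly rather than waving at it.
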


\begin{proof}
Let $T$ be a valid valued tree for $B_i$.  Then $L(T) = N[B_i] \cap V_i$.  First recall that $|B_i| \leq d^k + 1$.
Since $G$ has maximum degree at most $d^k$, $N[B_i] \leq d^k + 1 + (d^k + 1) \cdot d^k \leq 2d^{2k}$, assuming $d \geq 2$.  Let $T'$ be the unique tree such that $T'$ has no node with only one child, and such that $T$ is a subdivision of $T'$.  In other words, $T'$ is obtained from $T$ by contracting every node of degree $2$ (except the root).  Then $T'$ has at most $|L(T')| \leq 2d^{2k}$ internal nodes and at most $4d^{2k}$ edges.
Moreover, each edge $uv$ of $T'$ represents a path $(u, x_1, \ldots, x_l, v)$ in $T$, where $u, x_1, \ldots, x_l$ have one child.  
Note that $l \leq k$, as otherwise, the leaves in $T(v)$ would be at distance more than $k$ from the other leaves, whereas $N[B_i] \cap V_i$ is a connected subgraph of $G$.
We may thus assume that each edge of $T'$ corresponds to at most $k$ additional nodes in $T$.  Therefore, $T$ has at most $2d^{2k}$ leaves and at most $4kd^{2k}$ internal nodes, for a total of at most 
$(4k + 2)d^{2k} \leq d^{4k}$ nodes (assuming $d \geq 2$).

The number of trees with at most $d^{4k}$ nodes is bounded by $(d^{4k})^{d^{4k}} = d^{4kd^{4k}}$.
We know that $(T, \s)$ is $k$-bounded by Lemma~\ref{lem:kbounded-and-height-general} (since $N[B_i] \cap V_i$ consists of a clique $B_i$ and a subset of $N[B_i]$).
Thus, for each possible tree, each of the at most $d^{4k}$ internal nodes can receive a value in $\{0, 1, \ldots, k, \infty\}$, i.e. $k + 2$ possibilities.  Thus the number of valued trees is bounded by $d^{4kd^{4k}} \cdot (k + 2)^{d^{4k}}$.
\end{proof}

We now describe a dynamic programming recurrence over $B$ that constructs a set $Q[B_i]$ for each $B_i$.  The set $Q[B_i]$ must contain \emph{every} valid valued tree for $B_i$, i.e. $\T \in Q[B_i]$ if and only if $\T$ is valid for $B_i$.  We assume that we are enumerating each candidate valued tree from the above lemma, and must decide whether to put it in $Q[B_i]$ or not.
For convenience, denote $\N[B_i] := N[B_i] \cap V_i$ for the rest of this section.

\begin{itemize}
    \item 
    \emph{Leaf node}.   Let $B_i = \{v\}$ be a leaf of $B$.  Then $(T, \s) \in Q[B_i]$ if and only if $T$ is the single node $v$ (and $\s$ has an empty domain).

    \item 
    \emph{Introduce node}.  Let $B_i$ be an introduce node with child $B_j = B_i \setminus \{v\}$.
    Then put $(T, \s) \in Q[B_i]$ if and only if there exists $(T_j, \s_j) \in Q[B_j]$ such that all the following conditions are satisfied:
    \begin{itemize}
        \item 
        $T|\N[B_j] \simeq_L T_j$.  Let $\mu$ be the leaf-isomorphism from $T|\N[B_j]$ to $T_j$;

        \item 
        for every internal node $w \in V(T|\N[B_j])$, $\s(w) = \s_j(\mu(w))$ and for every internal node $w \in V(T) \setminus V(T|\N[B_j])$, $\s(w) = \infty$;
        
        \item 
        for each $w \in L(T) \setminus \{v\}$, 
        $vw \in E(G)$ if and only if $dist_T(v, w) \leq k$;
        
        \item 
        for each internal node $w \in V(T)$, $dist_T(v, w) + \s(w) > k$.

    \end{itemize}

    \item 
    \emph{Forget node}.  Let $B_i$ be a forget node with child $B_j = B_i \cup \{v\}$.
    Then put $(T, \s) \in Q[B_i]$ if and only if there exists $(T_j, \s_j) \in Q[B_j]$ such that
    \begin{itemize}
        \item 
        $r(T_j)$ has at least two distinct children $u, v$ such that $T_j(u)$ and $T_j(v)$ both have a leaf in $\N[B_i]$;

        \item 
        $T \simeq_L T_j|\N[B_i]$.
        Let $\mu$ be the leaf-isomorphism from $T$ to $T_j|\N[B_i]$;  
        
        \item 
       for each $w \in V(T) \setminus L(T)$, let $C(w) = \{c \in ch_{T_j}(\mu(w)) : \mu^{-1}(c) = \emptyset\}$, and let 
       \begin{align*}
           \hat{L} = \bigcup_{c \in C(w)} L(T_j(c)) \quad \mbox { and } \quad \hat{I} = \{w\} \cup \bigcup_{c \in C(w)} V(T_j(c)) \setminus \hat{L}
       \end{align*}
       Then 
       \begin{align*}
          \s(w) = \min \begin{cases}
            \min_{v \in \hat{I}} \s_j(\mu(v)) + dist_{T_j}(\mu(w), \mu(v)) \\
            \min_{l \in \hat{L}} dist_{T_j}(\mu(w), l)
          \end{cases}
       \end{align*}
    \end{itemize}

    \item 
    \emph{Join node}.  Let $B_i$ be a join node of $B$ with children $B_l$ and $B_r$, where $B_i = B_l = B_r$.  
    Then $(T, \s) \in Q[B_i]$ if and only if there exists $(T_l, \s_l) \in Q[B_l]$ and $(T_r, \s_r) \in Q[B_r]$ that satisfy:
    
    \begin{itemize}
        \item 
        $T|\N[B_l] \simeq_L T_l$.  Let $\mu_l$ be the leaf-isomorphism from $T|\N[B_l]$ to $T_l$; 
        
        \item 
        $T|\N[B_r] \simeq_L T_r$.  Let $\mu_r$ be the leaf-isomorphism from $T|\N[B_r]$ to $T_r$;

        \item 
        for each $w \in V(T) \setminus L(T)$, we have $\s(w) = \min(\s_l(\mu_l(w)), \s_r(\mu_r(w)))$ if $w$ is in both $T|\N[B_l]$ and $T|\N[B_r]$, 
        $\s(w) = \s_l(\mu_l(w))$ if $w$ is only in $T|\N[B_l]$, and $\s(w) = \s_r(\mu_r(w))$ otherwise;
        
        \item 
        for each distinct $u, v \in L(T)$, $uv \in E(G)$ if and only if $dist_T(u, v) \leq k$;
        
        \item 
        for each $u \in \N[B_l] \setminus B_i$ and each internal node $w \in V(T|\N[B_r])$, $dist_{T}(u, w) + \s_r(\mu_r(w)) > k$.
        Likewise, for each $u \in \N[B_r] \setminus B_i$ and each $w \in V(T|\N[B_l])$, $dist_{T}(u, w) + \s_l(\mu_l(w)) > k$;
        
        \item 
        for each internal node $w_l \in V(T|\N[B_l])$ and each internal node $w_r \in V(T|\N[B_r])$, $\s_l(\mu_l(w_l)) + dist_T(w_l, w_r) + \s_r(\mu_r(w_r)) > k$.
    \end{itemize}

\end{itemize}

\begin{lemma}\label{lem:q-is-correct}
Consider the recurrence given above.  For any bag $B_i$, $(T, \s)$ is a valid valued restriction for $B_i$ if and only if $(T, \s) \in Q[B_i]$ (up to value-isomorphism).
\end{lemma}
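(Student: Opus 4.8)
The plan is to prove Lemma~\ref{lem:q-is-correct} by induction on the structure of the nice tree decomposition $B$, treating the four node types (leaf, introduce, forget, join) separately. In each case I must show both directions: (soundness) every $(T,\s)$ placed in $Q[B_i]$ by the recurrence is valid for $B_i$, i.e.\ there is a $k$-leaf root $T^*$ of $G[V_i]$ with $r(T)=r(T^*)$ whose valued restriction to $\N[B_i]$ is $(T,\s)$; and (completeness) every valid $(T,\s)$ for $B_i$ is actually produced. The induction hypothesis asserts correctness of $Q[B_j]$ for the child(ren) $B_j$ of $B_i$. Throughout I will use freely the facts recorded in Section~2, namely that restrictions preserve distances ($dist_{T|X}(u,v)=dist_T(u,v)$ for $u,v\in X$), that leaf-isomorphisms are unique, and that $G[V_i]$ is connected (established just before the statement).

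First I would dispatch the \emph{leaf node} case, which is immediate: $G[V_i]=G[\{v\}]$ has the single-vertex tree as its only $k$-leaf root, and its valued restriction to $\N[\{v\}]=\{v\}$ is the one-node valued tree with empty $\s$-domain. Next, for the \emph{introduce node} $B_i=B_j\cup\{v\}$: for soundness, given $(T_j,\s_j)\in Q[B_j]$ with a witness $T^*_j$ for $G[V_j]=G[V_i\setminus\{v\}]$, and given $(T,\s)$ satisfying the four bullet conditions, I would build $T^*$ from $T^*_j$ by inserting $v$ at the place dictated by $T$: concretely $T^*$ restricted to $\N[B_j]$ equals $T_j$, and $T$ tells us where $v$ hangs relative to this skeleton; the edge-distance condition ``$vw\in E(G)\iff dist_T(v,w)\le k$'' for leaves $w$ and the condition $dist_T(v,w)+\s(w)>k$ for internal nodes $w$ exactly guarantee that $v$ is correctly adjacent/non-adjacent to all leaves of $G[V_i]$ (the $\s(w)$ bound handling the ``hidden'' leaves of $T^*_j$ that were pruned by restriction). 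Since $v\notin V_j$ and no vertex of $V_j$ changes, all old distance relationships are preserved. For completeness, given a valid $(T,\s)$ for $B_i$ with witness $T^*$, let $T^*_j=T^*$ restricted to $V_j$'s relevant leaf set: I would verify $T^*_j$ is a $k$-leaf root of $G[V_j]$ with $r$ preserved, that its valued restriction lies in $Q[B_j]$ by induction, and that $(T,\s)$ then satisfies the bullets—the $\s$-update bullet being forced by the definition of valued restriction and the fact that introducing $v$ cannot make any formerly-pruned leaf closer. The \emph{join node} case is analogous but two-sided: one glues witnesses $T^*_l,T^*_r$ that agree on $B_i$, and the two ``cross'' distance conditions (one for $u\in\N[B_l]\setminus B_i$ versus internal nodes of $T|\N[B_r]$, symmetrically, plus the internal-internal $\s_l+dist+\s_r>k$ condition) are precisely what is needed so that leaves living only below the left branch stay at distance $>k$ from leaves and hidden leaves living only below the right branch.

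The \emph{forget node} case is where I expect the real difficulty, and I would spend most of the effort there. When we forget $v$ from $B_j$ to get $B_i=B_j\setminus\{v\}$, the leaf $v$ (and possibly other vertices of $N[v]$) may drop out of $\N[B_i]$, so $T=T_j|\N[B_i]$ is a genuine restriction and its $\s$ values must be recomputed to absorb the now-hidden subtrees—this is exactly the purpose of the elaborate $\hat L,\hat I$ formula, which takes, for each surviving internal node $w$, the minimum over (a) distances to newly-hidden leaves $\hat L$ and (b) $\s_j$-values-plus-distances of newly-hidden internal nodes $\hat I$ (which themselves may hide leaves further down). I would prove that this formula correctly realizes the definition of valued restriction of the \emph{same} witness $T^*_j$ to the smaller leaf set; the key observation is that the leaves hidden at $w$ in $G[V_i]$ are exactly those in $\hat L$ together with those already accounted for by $\s_j$ on nodes of $\hat I$, and the two-line $\min$ is the minimum distance from $w$ to any of them. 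I would also justify the first bullet (that $r(T_j)$ has two children with leaves surviving in $\N[B_i]$): this is what ensures $r(T)=r(T_j)=r(T^*_j)$ rather than the restriction collapsing the root—if only one child-subtree survived, the restriction would move the root down, violating the $r(T)=r(T^*)$ requirement; conversely any valid $(T,\s)$ for $B_i$ whose witness $T^*$ I lift to a witness for $B_j$ (by noting $V_i=V_j$, so $T^*$ itself works) will have this two-child property. The subtle point to get right is that a node forgotten earlier might have its $\s$ value determined by a chain of hidden internal nodes, so the recurrence must compose correctly down such chains—this is precisely why $\hat I$ ranges over \emph{all} of $V(T_j(c))\setminus\hat L$ and not just immediate children, and why we add $\s_j(\mu(v))+dist_{T_j}(\mu(w),\mu(v))$ rather than just a raw distance. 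Once these four cases are verified, applying the claim at $B_i=r(B)=\{z\}$ and filtering for trees whose root is the parent of $z$ yields exactly the set required by Lemma~\ref{lem:enumerate}.
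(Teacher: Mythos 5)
Your overall route coincides with the paper's: structural induction over the four node types, proving soundness and completeness separately, with the same reading of the recurrence's conditions. Your discussion of the forget node is accurate --- the $\hat L/\hat I$ minimum does exactly what you say, and the two-children condition at the root is indeed there to prevent the restriction from collapsing the root. However, there is a genuine gap in where you place the difficulty and, more importantly, in a step you treat as routine. In the completeness direction for \emph{introduce} and \emph{join} nodes, you write that you would take $T^*_j = T^*|V_j$ and verify it is a $k$-leaf root of $G[V_j]$ ``with $r$ preserved,'' then invoke the induction hypothesis. But the induction hypothesis requires $r(T_j) = r(T^*_j)$ where $(T_j,\s_j)$ is the valued restriction of $T^*_j$ to $\N[B_j]$, and this is \emph{not} automatic: $T^*|V_j$ may contain leaves of $V_j \setminus \N[B_j]$ hanging above $r(T_j)$, in which case the restriction to $\N[B_j]$ has a strictly lower root and the induction hypothesis cannot be applied. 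The paper rules this out with a dedicated argument: take a stray leaf $u$ above $r(T_j)$ at minimum distance, compare $d_u = dist_{T^*}(u, r(T_j))$ with $d_v$ for the introduced vertex $v$ (or a cross-side leaf at a join bag), and in either ordering derive a forbidden adjacency using the connectivity of $G[V_i]$ and the fact that $B_i$ is a nonempty clique. Without some such argument the induction does not close.

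The same omission recurs in two more places your plan glosses over: justifying $\s(w)=\infty$ on the internal nodes newly created on the path to $v$ at an introduce bag (you must show those nodes hide no leaf of $V_j\setminus\N[B_j]$ in $T^*$), and justifying $\s(w)=\s_l(\mu_l(w))$ when $w$ lies in $T|\N[B_l]$ but not in $T|\N[B_r]$ at a join bag (you must show $w$ hides no leaf of $V_r\setminus\N[B_r]$). Both are proved in the paper by the same $d_u$-versus-$d_v$ case analysis. By contrast, the forget case --- which you single out as the hard one --- is the easiest for root preservation, since $V_i=V_j$ and $\N[B_i]\subseteq\N[B_j]$ make $r(T_j)=r(T)=r(T^*)$ immediate; its only real subtlety is the bookkeeping of $\hat L$ and $\hat I$, which you have right. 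To complete your proof you need to add the stray-leaf argument (or an equivalent) at the three places indicated.
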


\begin{proof}
The proof is by induction on the depth of $B_i$.
As a base case, assume that $B_i$ is a leaf.
Then $B_i = \{v\}$, $\N[B_i] = \{v\}$, and $(T, \s)$ with $T$ containing $v$ only is the only possible valid valued tree.  Hence the leaf case is correct.

Now consider the induction step.  Let $B_i \in V(B) \setminus L(B)$ and assume that the statement is correct for all descendants of $B_i$.
We prove the two directions of the statement separately.

\medskip 

\noindent
($\Rightarrow$) : suppose that $(T, \s)$ is a valid valued restriction for $B_i$.  Then there exists a $k$-leaf-power $T^*$ of $G[V_i]$ such that $(T, \s)$ is the valued restriction of $T^*$ to $\N[B_i]$, and such that $r(T) = r(T^*)$.
We must argue that $(T, \s)$ is in $Q[B_i]$.

\medskip 

\noindent 
\paragraph{Introduce node.}
Suppose that $B_i$ is an introduce node, with child $B_j = B_i \setminus \{v\}$.  Notice that by the properties of tree decompositions, $N(v) \cap V_i = B_i \setminus \{v\}$ (since $B_i$ is a clique).

Let $T^*_j = T^*|V_j$, noting that $V_j = V_i \setminus \{v\}$. 
Moreover, $T^*_j$ is a $k$-leaf root of $G[V_j]$, since distances between leaves do not change when taking a restriction.
Let $(T_j, \s_j)$ be the valued restriction of 
$T^*_j$ to $\N[B_j]$. 

We would like to use induction and argue that $(T_j, \s_j) \in Q[B_j]$, but for that we need $r(T_j) = r(T^*_j)$, which is not immediate.
Indeed, it is possible that $r(T_j) \neq r(T^*_j)$, which happens if and only if in $T^*_j$, some ancestor of $r(T_j)$ (other than itself) has descending leaves of $V_j$ not in $\N[B_j]$.  We argue that this does not occur.
Let $L' = V_j \setminus L(T^*_j(r(T_j)))$, and assume that $L'$ is non-empty.  
For that to happen, observe that $v \notin L(T^*(r(T_j)))$.  This is because $r(T) = r(T^*)$, and if $v$ was a descendant of $r(T_j)$ in $T^*$, then the root of $T = T^*|\N[B_i]$ would still be $r(T_j) \neq r(T^*)$.  Thus in $T^*$, the path from $r(T_j)$ to $v$ goes ``above" $r(T_j)$.
Also note that no member of $L'$ has a neighbor in $B_j$, as otherwise this member of $L'$ would be included in $T_j = T^*_j|\N[B_j]$.  Moreover, members of $L'$ are not neighbors of $v$.
Let $u \in L'$ be at minimum distance from $r(T_j)$ in $T^*_j$.
Then $u$ is also at minimum distance from $r(T_j)$ in $T^*$, among all leaves in $L'$.
Let $d_{u} = dist_{T^*}(u, r(T_j))$ and $d_v = dist_{T^*}(v, r(T_j))$.
First assume that $d_u \leq d_v$.
We know that $B_j \neq \emptyset$ and that $B_i$ is a clique, implying that there is $v' \in B_j$ such that $vv' \in E(G)$ and that $dist_{T^*}(v, v') \leq k$.  Since $v' \in B_j$, $v'$ descends from $r(T_j)$ and thus the path from $v$ to $v'$ passes through $r(T_j)$ in $T^*$.  Then $d_u \leq d_v$ implies $dist_{T^*}(u, v') \leq k$.  This is not possible since $u$ has no neighbor in $B_j$.
Suppose instead that $d_u > d_v$.  Note that $u$ must have at least one neighbor $u'$ in $L(T^*(r(T_j)))$ (otherwise, since $u$ is at minimum distance from $r(T_j)$, this would imply that no member of $L'$ has a neighbor outside of $L'$, and thus contradict that $G[V_i]$ is connected).  Note that $u' \notin B_j$ because $u$ has no neighbor in $B_j$.  The path from $u$ to $u'$ in $T^*$  goes through $r(T_j)$, and $u'$ is at distance at most $k$ from $u$.  Since $d_v < d_u$, $u'v \in E(G)$, a contradiction since all the neighbors of $v$ are in $B_j$.  
We deduce that $L'$ is empty.
Hence, we can safely assume that $r(T^*_j) = r(T_j)$.
Therefore, by induction, $(T_j, \s_j) \in Q[B_j]$. 
It remains to show that $(T_j, \s_j)$ satisfies all conditions of introduce nodes. 

Let us argue that $T|\N[B_j] = T_j$.
Note that $V(T|\N[B_j])$ is the set of all vertices of $T^*$ on the path between two leaves in $\N[B_j]$, and $V(T_j)$ the set of all vertices of $T^*_j$ between two leaves in $\N[B_j]$.  This makes it clear that $T|\N[B_j] = T_j$, but let us argue this for completeness.
Consider two vertices $x, y \in \N[B_j]$, and let $P_{xy}$ be the path of $T^*$ from $x$ to $y$.  
All vertices of $P_{xy}$ must be in $T$ since $x, y \in \N[B_i]$, and all vertices of $P_{xy}$ are in $T|\N[B_j]$ since $x, y \in \N[B_j]$. 
All vertices of $P_{xy}$ must be in $T^*_j$ since $x, y \in V_j$, and must also all be in $T_j$ since $x, y \in \N[B_j]$.  Since this holds for every $x, y \in \N[B_j]$, we obtain $T|\N[B_j] = T_j$.
We also have the leaf-isomorphism $\mu(w) = w$ for all $w \in V(T|\N[B_j])$ since $T|\N[B_j]$ and $T_j$ share the same set of vertices of $T^*$. 
We shall use $w$ and $\mu(w)$ interchangeably, since they represent the same node.

We next justify $\s(w) = \infty$ for every internal node $w \in V(T) \setminus V(T_j)$.  
There are two possible cases.
First suppose that $r(T) \neq r(T_j)$.  Recall that $T$ and $T_j$ differ only by the $v$ leaf.  Thus, $r(T) \neq r(T_j)$ happens if and only if $r(T^*)$ has at least two distinct two children, such that one leads to $v$ and the other leads to $r(T_j)$.  
In this situation, $w \in V(T) \setminus V(T_j)$ must be a node on one of these two paths (excluding $r(T_j)$ and $v$).  
As we have already argued, all leaves of $T^*_j$ are descendants of $r(T_j)$ in $T^*$.  Therefore, $w$ cannot have a descendant from $T^*_i$ that is not in $V_i \setminus \N[B_i]$.  If follows that $\s(w) = \infty$ represents the correct distance information for $w$.

The second case is when $r(T) = r(T_j)$.
In this case, the only difference between $T_j$ and $T$ is that there is some vertex $x \in V(T_j)$ and a path $P = (x_1, \ldots, x_l = v)$ such that $x_1 \in ch_{T}(x)$ and $x_1 \notin V(T_j)$.  In other words, $T$ was obtained from $T_j$ by appending the path leading to $v$ under some node $x$.  It follows that only the nodes $x_1, \ldots, x_{l-1}$ are internal nodes in $T$ but not in $T_j$.
The recurrence assumes that $\s(x_i) = \infty$ for each $x_i$.
Again, we must argue that in $T^*$, no such $x_i$ has a descending leaf in $L(T^*) \setminus \N[B_i]$, i.e. that $L(T^*(x_1)) = \{v\}$.
Assume that this is not the case.
Let $u = \neq v$ be a leaf of $T^*(x_1)$ at minimum distance from $x$.  This is similar to the previous situation.  We have that $u \notin \N[B_j]$ since $x_1$ is not in $T_j$.
Let $d_{u} = dist_{T^*}(u, x)$ and $d_{v} = dist_{T^*}(v, x)$.
Assume that $d_{u} \leq d_{v}$.  
Then the path from $v$ to any $v' \in B_j$ (which exists) passes through $x$, implying that $uv' \in E(G)$, a contradiction.
Assume instead that $d_u > d_v$.  
There must exist $u' \in N(u)$ such that $u' \notin L(T^*(x_1))$ (otherwise, no leaf in $L(T^*(x_1))$ has a neighbor in $G$ outside of $L(T^*(x_1))$ and $G[V_i]$ is not connected).  Note that $u' \notin B_j$.  The path from $u$ to $u'$ goes through $x$, implying that $vu' \in E(G)$, a contradiction.  Thus in $T^*$, each $x_i$ only has $v$ has a leaf descendant.  This justifies putting $\s(x_i) = \infty$ for each $x_i$.

Let $w \in V(T|\N[B_j])$ be an internal node that is in both $T$ and $T_j$.
Let $w' \in ch_{T^*}(w) \setminus ch_{T}(w)$.  Then $w'$ only leads to leaves in $V_i \setminus \N[B_i] = V_j \setminus \N[B_j]$ and must also be in $ch_{T^*_j}(w) \setminus ch_{T_j}(w)$.  
Conversely, let $w'' \in ch_{T^*_j}(w) \setminus ch_{T_j}(w)$.  Then in $T^*_j$, $w''$ only leads to leaves in $V_j \setminus \N[B_j]$.
We have a problem if $w'' \in ch_{T}(w)$.  Again, since $T$ and $T_j$ differ only by $v$, this is only possible if $w''$ has $v$ as a descendant.  But as we argued in the previous paragraph, $w''$ would not have descendants in $V_j \setminus \N[B_j]$ (here, $w''$ plays the same role as $x_1$).  Therefore, we may assume that $w'' \in ch_{T^*}(w) \setminus ch_{T}(w)$.
Since the set of hidden children of $w$ is the same in either $T$ and $T_j$, we deduce that $\s(w) = \s_j(w)$ represents the correct distance information.

Also, since $T = T^*|\N[B_i]$ and $T^*$ is a $k$-leaf root of $G$, it is clear that for all $w \in L(T) \setminus \{v\}$, $vw \in E(G) \Leftrightarrow dist_T(v, w) \leq k$.  Moreover, since we have argued that $\s(w)$ is correct for each $w \in V(T) \setminus L(T)$, we must have $dist_T(v, w) + \s(w) > k$, as otherwise $v$ would have a neighbor outside of $\N[B_i]$.  This agrees with the recurrence, and it will therefore put $(T, \s)$ in $Q[B_i]$.

\medskip 

\noindent 
\paragraph{Forget node.}
Suppose that $B_i$ is a forget node, with child $B_j = B_i \cup \{v\}$.  Note that $T^*$ is a $k$-leaf root of $G[V_j]$.
Let $(T_j, \s_j)$ be the valued restriction of $T^*$ to $\N[B_j]$. 
Because $\N[B_i] \subseteq \N[B_j]$, the root of $T_j$ satisfies $r(T_j) = r(T) = r(T^*)$.
Then by induction, $(T_j, \s_j) \in Q[B_j]$.

Note that $r(T)$ must have two distinct children with descendants in $\N[B_i]$, and thus the same holds for $T_j$, as in the recurrence.
Because $T = T^*|\N[B_i]$ and $T_j = T^*|\N[B_j]$, it is not hard to see that $T = T_j|\N[B_i]$.
The leaf-isomorphism is $\mu(w) = w$ for all $w \in V(T)$ because $T$ and $T_j|\N[B_i]$ use the same set of vertices.

Now let $w \in V(T) \setminus L(T)$.  Let $L^* = \bigcup_{c \in ch_{T^*}(w) \setminus ch_T(w)} L(T^*(c))$.
To find the minimum distance from $w$ to a leaf $u \in L^*$, we must first consider all leaves of $L(T_j)$ that descend from a child $c \in ch_{T_j}(w) \setminus ch_T(w)$, in which case $\mu^{-1}(c) = \emptyset$.  The minimum distance to such a leaf is given by $\min_{l \in \hat{L}} dist_{T_j}(\mu(w), l)$ in the recurrence.  We must also consider all leaves of $V_j \setminus L(T_j)$ that descend from a child $c \in ch_{T^*}(w) \setminus ch_{T}(w)$. 
Consider such a leaf $u$ at minimum distance from $w$, and let $w'$ be the lowest ancestor of $u$ that is in $V(T_j)$ (which exists, since $w$ is an ancestor of $u$).  
Then either $w' = w$ if $u$ descends from some child $c \in ch_{T^*}(w) \setminus ch_{T_j}(w)$, in which case the distance is $\s_j(w)$, or $w'$ is a descendant of some child of $w$ in $ch_{T_j}(w) \setminus ch_T(w)$, in which case this distance is $\s_j(w') + dist_{T_j}(w, w')$.  
In any case, the distance to such a $u$ is given by the expression $\min_{v \in \hat{I}} \s_j(\mu(v)) + dist_{T_j}(\mu(v), \mu(w))$ in the recurrence.
Since the recurrence takes the minimum over all possibilities, it will take the minimum possible distance, and thus $\s(w)$ describes the same distance as in the recurrence.

\medskip 

\noindent 
\paragraph{Join node.}
Suppose that $B_i$ is a join node, with children $B_l = B_r = B_i$.
In this part of the proof, we will use $B_i, B_l$ and $B_r$ interchangeably, but use $B_l$ and $B_r$ when we want to emphasize that we are dealing with the ``left'' or ``right'' side of the decomposition.
Let $T^*_l = T^*|V_l$ and $T^*_r = T^*|V_r$.
Let $(T_l, \s_l)$ be the valued restriction of $T^*_l$ to $\N[B_l]$.  
Similarly, let $(T_r, \s_r)$ be the valued restriction of $T^*_r$ to $\N[B_r]$.  

As in the introduce case, we would like to argue that $r(T_l) = r(T^*_l)$.  Assume that this is not the case, and that $r(T_l) \neq r(T^*_l)$.  
Then $L(T^*(r(T^*_l))) \setminus L(T^*(r(T_l)))$ is non-empty.  Let $u \in V_l$ be one of those leaves, and choose $u$ that has minimum distance to $r(T_l)$.  Notice that $u \notin \N[B_l]$, and thus $u \in V_l \setminus \N[B_l]$.  Moreover, since $r(T^*) = r(T)$, $r(T^*)$ has at least two distinct children with a descendant in $\N[B_i]$.  Let $w$ be such a child, chosen so that $w$ does not have $r(T_l)$ as a descendant.  Let $v \in L(T^*(w)) \cap \N[B_i]$.  Note that $v \notin \N[B_l]$ by the choice of $w$.  Thus, $v \in \N[B_i] \setminus \N[B_l] = N(B_r) \cap V_r$.  In particular, $v \in V_r \setminus B_i$, and 
it is not hard to see that by the properties of tree decompositions, this implies $u \neq v$.

Now, let $d_v = dist_{T^*}(v, r(T_l))$ and $d_u = dist_{T^*}(u, r(T_l))$.  Assume that $d_u \leq d_v$.  We know that $v$ has a neighbor $v'$ in $B_i$, and that all of $B_i$ descend from $r(T_l)$ in $T^*$.  Thus the path from $v$ to $v'$ passes through $r(T_l)$.  Since $dist_{T^*}(v, v') \leq k$, the assumption $d_u \leq d_v$ means that $dist_{T^*}(u, v') \leq k$, and thus $uv' \in E(G)$.  This is a contradiction since $u \notin \N[B_i]$.
Assume that $d_u > d_v$.  One can see that $u$ must have a neighbor $u'$ in $L(T^*(r(T_l)))$ (if not, by the choice of $u$, all the leaves in $L(T^*(r(T^*_l))) \setminus L(T(r(T_l)))$ have only neighbors in $L(T^*(r(T^*_l))) \setminus L(T(r(T_l)))$, contradicting that $G[V_i]$ is connected).  Recall that $u \in V_l \setminus \N[B_l]$, and so $u'$ must be in $V_l \setminus B_l$.  Moreover, $d_u > d_v$ implies that $vu' \in E(G)$.  But $v \in V_r \setminus B_i$ and $u' \in V_l \setminus B_l$, which is not allowed by the properties of tree decompositions.  

We deduce that $r(T_l) = r(T^*_l)$.
By a symmetric argument, $r(T_r) = r(T^*_r)$.  
By induction, $(T_l, \s_l) \in Q[B_l]$ and $(T_r, \s_r) \in Q[B_r]$.

Since $T = T^*|\N[B_i]$ and $\N[B_l] \subseteq \N[B_i]$, 
it is not hard to see that $T|\N[B_l] = T_l$, with the leaf-isomorphism $\mu_l(w) = w$ for $w \in V(T|\N[B_l])$.
Likewise, $T|\N[B_r] = T_r$, with the leaf-isomorphism $\mu_r(w) = w$ for $w \in V(T|\N[B_r])$.  We shall not distinguish $w$ and $\mu_l(w), \mu_r(w)$ since they refer to the same node.

Let $w \in V(T) \setminus L(T)$.  Note that $w$ is in at least one of $T|\N[B_l]$ or $T|\N[B_r]$ since $w$ has a descendant in at least one of $\N[B_l]$ or $\N[B_r]$.
If $w$ is in both $T|\N[B_l]$ and $T|\N[B_r]$, then $\s(w) = \min(\s_l(w), \s_r(w))$ holds because all leaves of $V_i \setminus \N[B_i]$ that have $w$ as their first ancestor in $T$ are either leaves of $V_l \setminus \N[B_l]$ or $V_r \setminus \N[B_r]$, and $\s_l(w)$, $\s_r(w)$ cover both cases.

Suppose that $w$ is in $T|\N[B_l]$ but not in $T|\N[B_r]$.
If all leaves in $T^*(w)$ are in $V_l$, having $\s(w) = \s_l(w)$ as in the recurrence is correct.
Suppose that $L(T^*(w)) \setminus V_l$ is non-empty, and choose $u$ among those of minimum distance to $w$.  
Note that $u \notin \N[B_r]$, as otherwise $w$ would be in $T|\N[B_r]$.
Thus $u \in V_r \setminus \N[B_r]$.
We repeat the same type of argument one last time (this is a tedious endeavor, but unfortunately, the cases appear to be not similar enough to handle them with a reusable claim).
Let $v \in L(T(w))$.  We note that $v \in \N[B_l] \setminus B_l = N(B_l) \cap V_l$, since otherwise $w$ would be in $T|\N[B_r]$.  In particular, $uv \notin E(G)$, by the properties of tree decompositions.
Let $d_{u} = dist_{T^*}(u, w)$ and $d_{v} = dist_{T^*}(v, w)$.
Assume that $d_{u} \leq d_{v}$.  Let $v' \in N(v) \cap B_l$, which we know exists.  Note that $v' \in B_r$ as well.  Thus, the path from $v$ to $v'$ in $T^*$ must go through $w$, since otherwise $T|\N[B_r]$ would contain $w$.  Then, $d_{u} \leq d_{v}$ implies that $dist_{T^*}(u, v') \leq k$, a contradiction since $u$ has no neighbor in $B_l$.  Assume instead that $d_{u} > d_{v}$.  By the choice of $u$, there must exist $u' \in V_r \setminus B_r$ such that $uu' \in E(G)$, and such that $u' \notin L(T^*(w))$ (otherwise, $u$ cannot be connected to the rest of the graph).  This implies that $vu' \in E(G)$.  This is not possible by the properties of tree decompositions, since $v \in N(B_l) \subseteq V_l \setminus B_i$, and $u' \in V_r \setminus B_i$.
We deduce that $w$ has no descending leaf in $V_r \setminus \N[B_r]$, and that $\s(w) = \s_l(w)$ is correct.

Suppose that $w$ is in $T|\N[B_r]$ but not in $T|\N[B_l]$.
A symmetric argument justifies that $\s(w) = \s_r(w)$.

The fact that for $u, v \in L(T)$, $uv \in E(G) \Leftrightarrow dist_T(u, v) \leq k$ follows from the fact that $T = T^*|\N[B_i]$ and that $T^*$ is a $k$-leaf root of $G[V_i]$.

Now let $u \in \N[B_l] \setminus B_i$ and take an internal node $w \in V(T|\N[B_r])$.  We note that $u \in V_l \setminus B_i$.  Moreover in $T^*(w)$, there is a leaf $v$ of $V_r \setminus \N[B_r]$ at distance $\s_r(w)$ from $w$.  
It follows that $uv \notin E(G)$, and thus that $dist_{T^*}(u, w) + \s_r(w) > k$.  This implies that $dist_T(u, w) + \s_r(w) > k$. 
A symmetric argument justifies the same property for $u \in \N[B_r] \setminus B_i$.

Finally, let $w_l \in V(T|\N[B_l])$ and $w_r \in V(T|\N[B_r])$. 
Then any leaf of $T^*(w_l) \setminus \N[B_i]$ must be at distance more than $k$ to any leaf of $T^*(w_r) \setminus \N[B_i]$.  
This justifies $\s_l(w_l) + dist_T(w_l, w_r) + \s_r(w_r) > k$.

\medskip 

\noindent
($\Leftarrow$)
Suppose that $(T, \s) \in Q[B_i]$.  We must argue that there is a $k$-leaf root $T^*$ of $G[V_i]$, with $r(T^*) = r(T)$, such that $(T, \s)$ is the valued restriction of $T^*$ to $\N[B_i]$.

\medskip 

\noindent 
\paragraph{Introduce node.}

Let $B_i$ be an introduce node with child $B_j = B_i \setminus \{v\}$.
Then there exists $(T_j, \s_j) \in Q[B_j]$ that satisfies the properties of the recurrence.  Let $\mu$ be the leaf-isomorphism from $T|\N[B_j]$ to $T_j$.
By induction, there is a $k$-leaf power $T^*_j$ of $G[V_j] = G[V_i \setminus \{v\}]$ such that $(T_j, \s_j)$ is the valued restriction of $T^*_j$ to $\N[B_j]$, and such that $r(T^*_j) = r(T_j)$.

We can construct a $k$-leaf root $T^*$ of $G[V_i]$ as follows.
For each $w \in V(T|\N[B_j])$, let $U(w) = ch_{T^*_j}(\mu(w)) \setminus ch_{T_j}(\mu_j(w))$.  
Then for each $u \in U(w)$, add the $T^*_j(u)$ subtree as a child of $w$ in $T$.  Because $r(T^*_j) = r(T_j)$, every leaf in $V_j \setminus \N[B_j]$ gets inserted.  In essence, we simply add the subtrees of $T^*_j$ that $T$ is missing at the locations indicated by $T_j$.

We claim that $T^*$ meets all the requirements for $(T, \s)$ to be valid.  We note that $r(T^*) = r(T)$ since in our construction, we started with $T$, and only added subtrees under some of its internal nodes (thus, we have not accidentally ``raised'' the root in $T^*$).
Let us argue that $(T, \s)$ is the valued restriction of $T^*$ to $\N[B_i]$.
Notice that $T = T^*|\N[B_i]$, since $L(T) = \N[B_i]$, and to obtain $T^*$ we took $T$ and only added extra subtrees under its existing nodes.
Moreover, it is not hard to see that $T^*|V_j \simeq_L T^*_j$.  This is because we started with $T$ such that $T|\N[B_j] \simeq_L T_j$, and we added the missing subtrees of $T^*_j$ on the vertices of $T|\N[B_j]$ at the appropriate locations.

Consider $w \in V(T) \setminus L(T)$ and let us argue that $\s(w)$ represents the correct distance to a hidden leaf under $w$.  
First assume that $w \notin V(T|\N[B_j])$.  Then one can see from the construction that we never insert subtrees under $w$, and hence $\s(w) = \infty$ represents the correct distance.
Assume instead that $w \in V(T|\N[B_j])$.
Then our insertion procedure only adds the leaves descending from children in $ch_{T^*_j}(\mu(w)) \setminus ch_{T_j}(\mu(w))$  under $w$.  The $\s_j(\mu(w))$ quantity gives the minimum  distance from $\mu(w)$ to such an inserted leaf and in this case, the recurrence specifies that $\s(w) = \s_j(\mu(w))$, which is correct.

Let us now argue that $T^*$ is a $k$-leaf root of $G[V_i]$. Because $T^*|V_j \simeq_L T^*_j$, for each $x, y \in V_j$, $xy \in E(G) \Leftrightarrow dist_{T^*}(x, y) \leq k$.  
As for $v$, we know by the recurrence that for each $w \in \N[B_i]$, $vw \in E(G) \Leftrightarrow dist_T(v, w) = dist_{T^*}(v, w) \leq k$.
For any $w \in V_i \setminus \N[B_i]$, let $w'$ be the lowest ancestor of $w$ in $T^*$ such that $w' \in V(T)$, which exists since $r(T) = r(T^*)$.  The recurrence ensures that $dist_{T}(v, w') + \s(w') > k$, and thus that $dist_{T^*}(v, w) = dist_{T}(v, w') + dist_{T^*}(w', w) > k$.  It follows that $T^*$ is a $k$-leaf root of $G[V_i]$.

\medskip 

\noindent 
\paragraph{Forget node.}
Let $B_i$ be a forget node with child $B_j = B_i \cup \{v\}$.
Then there exists $(T_j, \s_j) \in Q[B_j]$ that satisfies the properties of the recurrence.  Let $\mu$ be the leaf-isomorphism from $T$ to $T_j|\N[B_i]$.

By induction, there is a $k$-leaf-power $T^*_j$ of $G[V_j]$
such that $(T_j, \s_j)$ is the valued restriction of $T^*_j$ to $\N[B_j]$, and whose root is $r(T^*_j) = r(T_j)$.
Since $V_j = V_i$, $T^*_j$ is also a $k$-leaf root of $G[V_i]$.
Let us note that since the recurrence requires $r(T_j)$ to have two children with descending leaves in $\N[B_i]$, we have $\mu(r(T)) = r(T_j) = r(T_j^*)$.
We want to argue that $T$ is the valued restriction of $T^*_j$ to $\N[B_i]$.  
Strictly speaking, $V(T)$ is not a subset of $V(T^*_j)$, so we cannot say that $T$ is the valued restriction of $T^*_j$ to $\N[B_i]$ (recall that valued restrictions require usage of same set of nodes).  The simplest solution is to construct a new $k$-leaf root $T^*$ ``around" $T$.

For each $w \in V(T)$, let $C(w) = \{c \in ch_{T_j}(\mu(w)) : \mu^{-1}(w) = \emptyset\}$.  Moreover, let $C^*(w) = \{c \in ch_{T^*_j}(\mu(w)) \setminus ch_{T_j}(\mu(w))\}$.
Then for each $u \in C(w) \cup C^*(w)$, add the $T^*_j(u)$ subtree as a child of $w$ in $T$.  
Because $\mu(r(T)) = r(T_j) = r(T^*_j)$, we know that every leaf of $V_j \setminus \N[B_i]$ is inserted by this procedure.
Moreover, it is not hard to see that $T^* \simeq_L T^*_j$, since we add all the missing subtrees to $T$ at the appropriate locations specified by $T_j$.  Hence, $T^*$ is a $k$-leaf root of $G[V_i]$.
It remains to argue that $(T, \s)$ is the valued restriction of $T^*$ to $\N[B_i]$.

Since $T^*$ is obtained from $T$ by inserting child subtrees under the nodes of $T$, we have $T = T^*|\N[B_i]$ and $r(T) = r(T^*)$.
Now let $w \in V(T) \setminus L(T)$.  Then in $T^*$, consider the minimum distance from $w$ to a leaf in $u \in V_i \setminus \N[B_i]$ that descends from a node in $ch_{T^*}(w) \setminus ch_T(w)$.  
Because $T^* \simeq_L T^*_j$,
this is identical to the $\Rightarrow$ direction of forget nodes. 
That is, the $u$ leaf is either in $L(T_j) \setminus L(T)$, in which case $dist_{T^*}(w) = \min_{l \in \hat{L}} dist_{T_j}(\mu(w), l)$ as in the recurrence, or this leaf is not in $T_j$.
In the latter case, $u$ has some $w'$ as its lowest ancestor in $T_j$, and the distance to $u$ is given by $dist_{T_j}(\mu(w), w') + \s_j(w')$.  The value of $\s(w)$ should be the minimum of all possibilities, as in the recurrence.
Thus $(T, \s)$ is indeed the valued restriction of $T^*$ to $\N[B_i]$.

\medskip 

\noindent 
\paragraph{Join node.}
Let $B_i$ be a join node with children $B_l, B_r$, with $B_i = B_l = B_r$.
Then there exist $(T_l, \s_l) \in Q[B_l]$ and $(T_r, \s_r) \in Q[B_r]$ that satisfy the properties of the recurrence.  Let $\mu_l$ and $\mu_r$ be the leaf-isomorphisms from $T|\N[B_l]$ to $T_l$, and from $T|\N[B_r]$ to $T_r$, respectively.
By induction, there is a $k$-leaf root $T^*_l$ of $G[V_l]$
such that $(T_l, \s_l)$ is the valued restriction of $T^*_l$ to $\N[B_l]$, with $r(T^*_l) = r(T_l)$. 
Likewise, there is a $k$-leaf root $T^*_r$ of $G[V_r]$
such that $(T_r, \s_r)$ is the valued restriction of $T^*_r$ to $\N[B_r]$, with $r(T^*_r) = r(T_r)$.

We can construct $T^*$ as follows.
For each $w \in V(T|\N[B_l])$, let $U_l(w) = ch_{T^*_l}(\mu_l(w)) \setminus ch_{T_l}(\mu_l(w))$.  
Then for each $u \in U_l(w)$, add the $T^*_l(u)$ subtree as a child of $w$ in $T$.  Because $r(T^*_l) = r(T_l)$, every leaf in $V_l \setminus \N[B_l]$ gets inserted.
Similarly, for each $w \in V(T|\N[B_l])$, let $U_r(w) = ch_{T^*_r}(\mu_r(w)) \setminus ch_{T_r}(\mu_r(w))$.  
Then for each $u \in U_r(w)$, add the $T^*_r(u)$ subtree as a child of $w$ in $T$.  Again, every leaf in $V_r \setminus \N[B_r]$ gets inserted.

We claim that $T^*$ meets all the requirements for $(T, \s)$ to be valid.  We note that $r(T^*) = r(T)$ since in our construction, we started with $T$, and only added subtrees under some of its internal nodes.
Let us argue that $(T, \s)$ is the valued restriction of $T^*$ to $\N[B_i]$.
Notice that $T = T^*|\N[B_i]$, since $L(T) = \N[B_i]$, and to obtain $T^*$ we took $T$ and only added extra subtrees under its existing nodes.
Moreover, it is not hard to see that $T^*|V_l \simeq_L T^*_l$.  This is because we started with $T$ such that $T|\N[B_l] \simeq_L T_l$, and we added the missing subtrees of $T^*_l$ on the vertices of $T|\N[B_l]$ at the appropriate locations.
Similarly, $T^*|V_r \simeq_L T^*_r$. 

Consider $w \in V(T) \setminus L(T)$ and let us argue that $\s(w)$ represents the correct distance to a hidden leaf under $w$.  
If $w \in V(T|\N[B_l])$ but is not is $V(T|\N[B_r])$, 
then our insertion procedure only adds the leaves descending from children in $ch_{T^*_l}(\mu_l(w)) \setminus ch_{T_l}(\mu_l(w))$  under $w$.  The $\s_l(\mu_l(w))$ quantity gives the minimum  distance from $\mu_l(w)$ to such an inserted leaf and in this case, the recurrence specifies that $\s(w) = \s_l(\mu_l(w))$, which is correct.
The same argument applies if $w \in V(T|\N[B_r])$ but is not is $V(T|\N[B_l])$.
If $w \in V(T|\N[B_l]) \cap V(T|\N[B_r])$, we have inserted both leaves under $\mu_l(w)$ from $T^*_l$ and leaves under $\mu_r(w)$ from $T^*_r$.  In this case, $\s(w) = \min(\s_l(\mu_l(w)), \s_r(\mu_r(w)))$ correctly represent the minimum distance from $w$ to such a leaf.
It follows that $(T, \s)$ is the valued restriction of $T^*$ to $\N[B_i]$.

Let us now argue that $T^*$ is a $k$-leaf root of $G[V_i]$. Because $T^*|V_l \simeq_L T^*_l$, for each $u, v \in V_l$, $uv \in E(G) \Leftrightarrow dist_{T^*}(u, v) \leq k$.  
Similarly, $T^*|V_r \simeq_L T^*_r$ implies that for each $u, v \in V_r$, $uv \in E(G) \Leftrightarrow dist_{T^*}(u, v) \leq k$.  

Now, consider $u \in \N[B_l]$ and $v \in \N[B_r]$.  The fact that $uv \in E(G) \Leftrightarrow dist_{T^*}(u, v) \leq k$ is explicitly stated in the recurrence, using the fact that $T^*|\N[B_i] = T$.

Next, consider $u \in \N[B_l]$ and $v \in V_r \setminus \N[B_r]$.  
Then $uv \notin E(G)$, by the properties of tree decompositions.  Let $w$ be the lowest ancestor of $v$ in $T^*$ such that $w \in V(T|\N[B_r])$ (which must exist, by the construction of $T^*$, since the subtree containing $v$ was appended under a node of $T|\N[B_r]$ at some point).  In $T^*$, the path from $u$ to $v$ must pass through $w$.
We have $dist_{T^*}(u, v) = dist_T(u, w) + dist_{T^*}(w, v) \geq dist_T(u, w) + \s_r(w)$, which is strictly greater than $k$, by the properties of the recurrence.
The symmetric argument holds for $u \in \N[B_r]$ and $v \in V_l \setminus \N[B_l]$.

Finally, consider $u \in V_l \setminus \N[B_i]$ and $v \in V_r \setminus \N[B_i]$.  Then $uv \notin E(G)$.  Let $w_u$ (resp. $w_v$) be the lowest ancestor of $u$ (resp. $v$) in $T^*$ such that $w_u \in V(T|\N[B_l])$ (resp. $w_v \in V(T|\N[B_r])$).  
Then $dist_{T^*}(u, v) = dist_{T^*}(u, w_u) + dist_{T^*}(w_u, w_v) + dist_{T^*}(w_v, v) \geq \s_l(w_u) + dist_T(w_u, w_v) + \s_r(w_v)$, which is strictly greater than $k$, by the properties of the recurrence.
We have handled every possible pair of leaves and deduce that $T^*$ is indeed a $k$-leaf root of $G[V_i]$.
This concludes the proof.
\end{proof}

We conclude this section with the proof of the main lemma.

\begin{proof}[Proof of Lemma~\ref{lem:enumerate}]
We first need to show that the dynamic programming procedure above can be used to enumerate $\mathbb{T} = \{\T_1, \ldots, \T_l\}$.  
Recall that $\T_i \in \mathbb{T}$ if and only if there is a $k$-leaf root $T^*$ of $G$ such that the root of $T^*$ is the parent of $z$, and $\T_i = (T_i, \s_i)$ is the valued restriction of $T^*$ to $N[z]$.
Let $B_z = \{z\}$ be the root of the tree decomposition from above.

Let $\T_i = (T_i, \s_i) \in \mathbb{T}$ and let $T^*$ be a corresponding $k$-leaf root.  Then $L(T_i) = N[z] = N[B_z] = N[B_z] \cap V_z$ (since $V_z = V(G)$).  Moreover, $(T_i, \s_i)$ is the valued restriction of $T^* = T^*[V_z]$ to $N[z] = N[B_z] \cap V_z$, and $r(T_i) = r(T^*)$ since both roots must be the parent of $z$.
Thus $(T_i, \s_i)$ is valid for $B_z$, and 
it follows from Lemma~\ref{lem:q-is-correct} that $(T_i, \s_i) \in Q[B_z]$.

Now let $(T, \s) \in Q[B_z]$ such that $r(T)$ is the parent of $z$.
Then by Lemma~\ref{lem:q-is-correct}, $(T, \s)$ is valid for $B_z$ and there is a $k$-leaf root $T^*$ of $G[V_z] = G$ such that $(T, \s)$ is the valued restriction of $T^*$ to $N[B_z] \cap V_z = N_G[z]$.  Moreover, $r(T^*) = r(T)$ is the parent of $z$.  
Therefore, $(T, \s)$ must belong to $\mathbb{T}$.

We have thus shown that by enumerating all the valued trees in the computed $Q[B_z]$ and keeping only those whose root is the parent of $z$, we reconstruct exactly $\mathbb{T}$.

Let us discuss the complexity.  Note that $B$ has $O(n)$ nodes.
For each bag $B_i \in V(B)$, by Lemma~\ref{lem:valid-bounded}, we must enumerate $O(f(k))$ possible valued trees and test each of them for membership in $Q[B_i]$, where $f(k) = d^{4kd^{4k}} \cdot (k + 2)^{d^{4k}}$.
Join nodes take the longest to test, since they require to check every combination of valued trees in $Q[B_l]$ and $Q[B_r]$, which amounts to $f(k)^2$ tests.  It is not hard to see that for each combination, checking whether the recurrence holds can be done in time $O(d^{4k} \cdot d^{4k})$ (the longest condition to check is to test each $w_l, w_r$ pairs).
Therefore, each $B_i$ can be computed in time $O(d^{8k} \cdot f(k)^3)$.
Since there are $O(n)$ such $B_i$ bags, the complexity is $O(n \cdot d^{8k} \cdot f(k)^3)$.
\end{proof}

\section{Putting it all together}

The results accumulated above lead to an immediate algorithm.  
First, we check whether $G$ admits a $k$-leaf root of arity at most by $d^k$, where $d = 3|S(k, 3k)|2^{|S(k, 3k)|}$.  This can only happen if $G$ has maximum degree at most $d^k$, in which case we can use the algorithm of Eppstein and Havvaei~\cite{eppstein2020parameterized} (or even our algorithm from Section 4 would work).
If there is no such $k$-leaf root but that $G$ is a $k$-leaf power, by Lemma~\ref{lem:hom-struct}, we must be able to find a homogeneous similar structure $\S$ of size $3|S(k, 3k)|$.

To find $\S$, we begin by searching for $\C$.  We brute-force every $3|S(k, 3k)|$ disjoint subsets of at most $d^k$ vertices from $G$, which is the only reason our algorithm takes time $O(n^{f(k)})$ instead of $O(f(k) n^c)$.
Assuming a suitable $\C$ has been identified, we look at the connected components obtained after removing the $C_i$'s.  At this point, it is easy to verify that the properties of similar structures hold and to find $z$ and the $Y_i$'s.
As for the layering functions, we brute-force them all, but since the size of the $C_i$'s is bounded, this adds little complexity compared to the gargantuan time taken to enumerate the possible $\C$'s.
Once a suitable set of layering functions is found, we use the algorithm from Section 4 to compute all the $accept(\S, C_i)$ sets, and it remains to check that they are equal. 
If so, we have found a redundant substructure of $G$.  By Theorem~\ref{thm:iffc1}, we may remove $C_1 \cup Y_1$ from $G$ to obtain an equivalent instance.  We then repeat with $G - (C_1 \cup Y_1)$.  The algorithm ends when it either reaches a graph of maximum degree at most $d^k$, which is ``easy'' to verify, or when no homogeneous similar structure can be found.  In the latter case, we know by Lemma~\ref{lem:hom-struct} that $G$ cannot be a $k$-leaf power.

\begin{algorithm2e}[h]
\SetAlgoLined
\SetKwProg{Fn}{Function}{}{end}
\Fn{isLeafPower($G, k$)}
{
    $d \gets 3|S(k, 3k)|2^{|S(k, 3k)|}$\;
    \uIf{$G$ has maximum degree at most $d^k$}
    {
        Check if $G$ is a $k$-leaf power and return the result\; \label{line:bounded-degree}
    }
    \ForEach{collection $\C = \{C_1, \ldots, C_l\}$ of disjoint subsets of $V(G)$, with $l = 3|S(k, 3k)|$ and each $|C_i| \leq d^k$}
    {
        Let $G' = G - \bigcup_{i \in [l]} C_i$\;
        Let $X = \{X_1, \ldots, X_t\}$ be the connected components of $G'$\;
        Let $z \in V(G')$ such that $\bigcup_{i \in [l]} C_i \subseteq N_G(z)$\; \label{line:the-z}
        \uIf{$z$ does not exist}
        {
            continue to the next $\C$\;
        }
        Let $X_z \in X$ such that $z \in X_z$\; \label{line:ccs1}
        \uIf{some $X_j \in X \setminus \{X_z\}$ has neighbors in two distinct $C_i, C_j$}
        {
            continue to the next $\C$\;
        }
        For $i \in [l]$, let $Y_i$ be the union of every $X_j \in X \setminus X_z$ such that $N_G(X_j) \subseteq C_i\}$\; \label{line:ccs2}
        \uIf{$\exists i \in [l], G[C_i \cup Y_i \cup \{z\}]$ has maximum degree above $d^k$ \label{line:yi-size}}  
        {
            continue to the next $\C$\;
        }
        \ForEach{set of layering functions $\L = \{\l_1, \ldots, \l_l\}$}
        {
            \uIf{$\S = (\C, \Y = \{Y_1, \ldots, Y_d\}, z, \L)$ is a similar structure}
            {
                \ForEach{$i \in [l]$}
                {
                    Compute $accept(\S, C_i)$\;
                }
                \uIf{all the $accept(\S, C_i)$ are equal and non-empty}
                {
                    return $isLeafPower(G - (C_1 \cup Y_1), k)$ \label{line:homreccall}\;
                }
            }
        }
    }
    return ``Not a $k$-leaf power"\;
}
\caption{Deciding if a graph is a $k$-leaf power.}
\label{alg:findhomsim}
\end{algorithm2e}

\begin{theorem}
Let $k$ be a fixed positive integer.  Then Algorithm~\ref{alg:findhomsim}  correctly decides whether a graph $G$ is a $k$-leaf power, and runs in time $O(n^{(d^k + 1)3|S(k, 3k)| + 6})$, where $d = 3|S(k, 3k)| \cdot 2^{|S(k, 3k)|}$.
\end{theorem}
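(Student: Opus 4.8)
The plan is to prove correctness and the running-time bound separately. As a setup I would reduce to connected $G$: split $G$ into connected components, decide each one, and join the resulting roots under a fresh node placed at distance more than $k$ from all of them; I would also record that deleting $C_1\cup Y_1$ from a connected graph leaves a connected graph --- the component $X_z$ of $G-C^*$ stays connected, every $C_i$ with $i\ge 2$ meets $N_G(z)\subseteq X_z$, and every connected component that makes up $Y_i$ ($i\ge 2$) has a neighbour in $C_i$ --- so each recursive call again receives a connected graph. Throughout, $k$ is fixed and I write $l:=3|S(k,3k)|$ and $d:=3|S(k,3k)|\cdot 2^{|S(k,3k)|}$, which, by Lemma~\ref{lem:nbsigs-bound}, together with $d^k$, are constants depending only on $k$.

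For correctness I would distinguish the two branches of the algorithm. If $G$ has maximum degree at most $d^k$, then line~\ref{line:bounded-degree} settles the instance: reject $G$ if it is not chordal (Lemma~\ref{lem:chordal}), and otherwise use $tw(G)\le d^k$ and Lemma~\ref{lem:prelim:bounded-arity}. If the maximum degree of $G$ exceeds $d^k$, then by the first part of Lemma~\ref{lem:prelim:bounded-arity} every $k$-leaf root of $G$ has arity at least $d+1$, and I would argue three things. \emph{Soundness}: the algorithm recurses at line~\ref{line:homreccall} only after explicitly verifying that $\S=(\C,\Y,z,\L)$ is a similar structure with $|\C|=l$ whose sets $accept(\S,C_i)$ are all equal and non-empty --- that is, a homogeneous similar structure of size $l$ --- so by Theorem~\ref{thm:iffc1} the recursive answer is correct. \emph{Completeness}: if $G$ is a $k$-leaf power it has a $k$-leaf root of arity at least $d+1$, so Lemma~\ref{lem:hom-struct} supplies a homogeneous similar structure in which each $|C_i|\le d^k$; the outer loop enumerates every $l$-tuple of disjoint vertex subsets of size at most $d^k$ and hence meets this $\C$, after which the components of $G-C^*$, the distinguished component $X_z$, and the sets $Y_i$ are forced by Properties~\ref{cut:yi}--\ref{cut:ccs} (computed on lines~\ref{line:ccs1}--\ref{line:ccs2}), a valid $z$ exists and is picked on line~\ref{line:the-z} (every vertex $w$ with $C^*\subseteq N_G(w)$ lies in the unique component of $G-C^*$ meeting all $C_i$, and the loop may be read as trying each such $w$), and all layering functions are enumerated; hence the algorithm reaches a homogeneous similar structure and recurses. \emph{Rejection}: if the loop is exhausted without recursing, then the contrapositive of Lemma~\ref{lem:hom-struct}, together with the fact that a $G$ of large maximum degree can only have $k$-leaf roots of arity at least $d+1$, shows $G$ is not a $k$-leaf power, so the final ``Not a $k$-leaf power'' is correct. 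Termination is immediate since each recursive call removes $C_1\cup Y_1$ with $|C_1|\ge 1$, bounding the recursion depth by $n$.

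For the running time I would bound a single non-recursive execution of the procedure. The base case costs $O(n)$ by Lemma~\ref{lem:prelim:bounded-arity} ($k$ being fixed). In the main loop, the number of subsets of $V(G)$ of size at most $d^k$ is at most $\sum_{j=0}^{d^k}\binom{n}{j}\le n^{d^k+1}$, so the loop runs over at most $n^{(d^k+1)l}$ collections $\C$; for each $\C$, the remaining work --- trying every candidate $z$, computing $G'$ and its connected components, locating $X_z$ and the sets $Y_i$, checking the degree bound of line~\ref{line:yi-size}, enumerating the constantly many layering functions, and computing all $accept(\S,C_i)$ via Lemma~\ref{lem:enumerate} (which runs in time $O(n\cdot d^{8k}f(k)^3)=O(n)$, with the signatures then read off directly) --- is polynomial in $n$ and easily within $O(n^5)$. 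Hence one execution costs $O(n^{(d^k+1)l+5})$, and multiplying by the recursion depth at most $n$ gives total time $O(n^{(d^k+1)l+6})=O(n^{(d^k+1)\cdot 3|S(k,3k)|+6})$, as claimed.

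I expect the main obstacle to be the completeness part of the correctness argument. Lemma~\ref{lem:hom-struct} only gives the \emph{existence} of a homogeneous similar structure; one must check that the brute force genuinely hits one, i.e. that the objects the algorithm does not enumerate --- the component $X_z$, the sets $Y_i$, and the vertex $z$ --- are forced by the choice of $\C$ (or can be searched over without changing the bound), and that the bound $|C_i|\le d^k$ from Lemma~\ref{lem:hom-struct} is precisely what makes enumerating $\C$ affordable.
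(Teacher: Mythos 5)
Your proposal is correct and follows essentially the same route as the paper's proof: the same case split on maximum degree via Lemma~\ref{lem:prelim:bounded-arity}, the same soundness/completeness/rejection argument built on Lemma~\ref{lem:hom-struct} and Theorem~\ref{thm:iffc1}, and the same complexity accounting ($n^{(d^k+1)\cdot 3|S(k,3k)|}$ choices of $\C$, constantly many layering functions, polynomial per-candidate work, recursion depth at most $n$). Your added remarks on preserving connectivity across recursive calls and on the choice of $z$ are refinements of points the paper treats more lightly, not a different approach.
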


\begin{proof}
We argue correctness and complexity separately.

\noindent 
\emph{Correctness.}
Assume that $G$ admits a $k$-leaf root of arity at most $d$. Then it will be found on line~\ref{line:bounded-degree}, by Lemma~\ref{lem:prelim:bounded-arity}.  
Otherwise, if $G$ is a $k$-leaf power, all its $k$-leaf roots have arity at least $d + 1$.  By Lemma~\ref{lem:hom-struct}, $G$ admits a homogeneous similar structure $\S = (\C, \Y, z, \L)$ with $|\C| = 3|S(k, 3k)|$, with each $|C_i| \leq d^k$ and $G[C_i \cup Y_i \cup \{z\}]$ having maximum degree $d^k$ or less.

We show that the algorithm finds such a structure, if one exists.
By brute-force, the main for loop will find a $\C$ that belongs to a desired homogeneous similar structure $\S = (\C, \Y, z, \L)$.
The $z$ vertex described on line~\ref{line:the-z} exists, by Property~\ref{cut:znbrhood} of similar structures.
By Property~\ref{cut:ccs}, only the connected component $X_z$ of $G'$ that contains $z$ can have neighbors in more than one $C_i$, and all the others have neighbors in exactly one $C_i$.  Thus lines~\ref{line:ccs1}-\ref{line:ccs2} correctly build the $Y_i$ subsets.  Moreover, by the properties described in Lemma~\ref{lem:hom-struct}, each $G[C_i \cup Y_i \cup \{z\}]$ must have maximum degree at most $d^k$, and thus we will not enter the \emph{if} on line~\ref{line:yi-size}.
After that, since we brute-force every possible set of layering functions, we will eventually find the correct $\L$ for $\S$.  
We then explicitly check whether $\S$ is a similar structure and compute all the $accept$ sets to verify homogeneity.  Assuming that such a $\S$ exists, it follows that it will be found, and that we will eventually reach line~\ref{line:homreccall}.

We can also argue the converse, i.e. that when line~\ref{line:homreccall} is reached, $\S$ is homogeneous and satisfies all the requirements of Lemma~\ref{lem:hom-struct}.  When this line is reached, $\S$ is a similar structure (this is checked explicitly), and is homogeneous since we compute every $accept$ set.  We have $|\C| = 3|S(k, 3k)|$ and each $|C_i| \leq d^k$, since this is what we enumerate.  Moreover, it is checked that each $G[C_i \cup Y_i \cup \{z\}]$ has degree bounded by $d^k$.  Therefore, when line~\ref{line:homreccall} is reached, $\S$ meets all the requirements of Lemma~\ref{lem:hom-struct}.
We can thus apply Theorem~\ref{thm:iffc1} and state that $G$ is a $k$-leaf power if and only if $G - (C_1 \cup Y_1)$ is a $k$-leaf power.  Thus the recursive call on line~\ref{line:homreccall} is correct.

If the algorithm never reaches line~\ref{line:homreccall}, then by the above, $G$ does not admit an homogeneous structure with all the desired properties.  By contraposition of Lemma~\ref{lem:hom-struct}, $G$ cannot be a $k$-leaf power.
This proves the correctness of the algorithm.

\noindent 
\emph{Complexity.}  We can handle the case where $G$ has maximum degree at most $d^k$ in time $O(n (d^k k)^{c d^k})$ for some constant $c$, by Lemma~\ref{lem:prelim:bounded-arity}.
The enumeration of the possible $\C$'s requires choosing $l = 3|S(k, 3k)|$ subsets of $V(G)$ of size at most $d^k$.
We can asymptotically bound the number of $\C$'s to enumerate by 
\[
\left(\sum_{i=1}^{d^k} {n \choose i} \right)^{3|S(k, 3k)|} \leq \left(\sum_{i=1}^{d^k} n^i \right)^{3|S(k, 3k)|} \leq \left( n^{d^k + 1} \right)^{3|S(k, 3k)|}
\]
(for large enough $n$).  For each such $\C$, the construction and verifications for $G', X, z$ and the $Y_i$'s can be done in time $O(n)$, until we must enumerate every set of layering functions.
Such a $\L$ must assign each vertex in $\C$ an integer between $0$ and $k$.  The total number of vertices in $\C$ is at most $3|S(k, 3k)| \cdot d^k$, and so the number of layering functions is at most 
$(k + 1)^{3|S(k, 3k)| \cdot d^k}$.

Then we must check whether $\S$ is truly a similar structure. 
At this point, we must only check that $\L$ satisfies all the requirements of similar structures, which can be done in time $O(n^3)$, since if suffices to compare pairs of vertices of $\C$ and their neighborhoods.

Using Lemma~\ref{lem:enumerate}, one can see that we can compute one $accept(\S, C_i)$
in time $O(|S(k, 3k)| \cdot n^2 d^{8k} f(k)^3)$.
To see this, recall that we can enumerate every valued restriction of $k$-leaf roots for $C_i \cup \{z\} \cup Y_i$ in time $n d^{8k} f(k)^3$.  We need to compute the signature of each of those valued trees.  Such a signature can be computed by traversing each node of the valued tree in post-order.  Each node requires filling a vector with at most $|S(k, 3k)|$ entries, and so computing the signature takes time $O(|S(k, 3k)| \cdot n)$.  Since there are $3|S(k, 3k)|$ $C_i$'s to consider, computing every $accept$ set takes time $O(|S(k, 3k)|^2 \cdot n^2 d^{8k} f(k)^3)$.

Finally, we note that at each recursion, $G$ becomes smaller since $C_1$ is non-empty, so this whole procedure is repeated at most $n$ times.

Let us mention that the complexity of the case of maximum degree at most $d^k$ is dominated by the main loop, so we may omit it.
To sum up we have an asymptotic complexity of
\[ 
n \cdot 
\left(n^{d^k + 1} \right)^{3|S(k, 3k)|} \cdot (k + 1)^{3|S(k, 3k)| \cdot d^k} \cdot n^3 \cdot |S(k, 3k)|^2 \cdot n^2 d^{8k} f(k)^3
\]
where $d, f(k)$, and $|S(k, 3k)|$ depend only on $k$.  
Assuming that $k \in O(1)$, this amounts to 
\[
O(n^{(d^k + 1)3|S(k, 3k)| + 6})
\]

\end{proof}

\section{Conclusion}

Although this work answers a longstanding open question, there is still much to do on the topic of leaf powers and $k$-leaf powers.

\begin{itemize}
    \item 
    Is recognizing $k$-leaf powers fixed-parameter tractable in $k$?  That is, can it be done in time $O(f(k) n^c)$ for some function $f$ and some constant $c$?
    
    Using the techniques of this work would require finding an homogeneous similar structure in FPT time, thereby avoiding brute-force enumeration.  Although this appears difficult, it is possible that such structures have graph-theoretical properties that can be exploited for fast identification.  For instance, we have not used the fact that $G$ is strongly chordal, which may help finding similar structures.
    
    \item 
    Can $k$-leaf powers be recognized in time $O(n^{f(k)})$, where $f(k)$ is more reasonable than in this work?  In particular, can a power tower function be avoided?  It may be possible to find a better type of signature that is more succinct, but still allows proving Theorem~\ref{thm:iffc1}.
    
    
    \item 
    Can the techniques used here be used to recognize leaf powers?  
    In particular, can leaf powers be recognized easily if there is an upper bound on the arity of its leaf root?  Or conversely, is there a structure in leaf powers that admit high arity leaf roots?
    
\end{itemize}

\section*{Acknowledgements}

The author thanks the anonymous reviewers of the SODA 2022 conference for their extremely useful suggestions.

\bibliographystyle{plain}
\bibliography{main}

\end{document}